\definecolor{myred}{rgb}{0.8,0.3,0.2}
\definecolor{myredfill}{rgb}{1,0.93,0.93}
\definecolor{myblue}{rgb}{0.05,0.25,0.7}
\definecolor{mybluefill}{rgb}{0.92,0.98,1}
\definecolor{mygreen}{rgb}{0.3,0.6,0.4}
\definecolor{mygreenfill}{rgb}{0.93,1,0.97}
\definecolor{mygrey}{rgb}{0.4,0.4,0.4}
\definecolor{mygreyfill}{rgb}{0.95,0.95,0.95}
\definecolor{mypurple}{rgb}{0.6,0.2,0.6}
\definecolor{mypurplefill}{rgb}{0.99,0.94,0.99}
\definecolor{myyellow}{rgb}{9,0.9,0.3}
\definecolor{myyellowfill}{rgb}{1,1,0.92}
\definecolor{myorange}{rgb}{9,0.7,0.3}
\definecolor{myorangefill}{rgb}{1,0.95,0.92}
\definecolor{linkred}{rgb}{0.6,0.1,0.1}
\definecolor{citeblue}{rgb}{0.1,0.2,0.8}
\definecolor{urlblue}{rgb}{0.2,0.45,0.65}
\newcommand{\EB}[0]{{\mathscr{B}}}
\newcommand{\Q}[0]{{\mathscr{Q}}}
\newcommand{\s}[0]{{{s}}}
\newcommand{\e}[0]{{{e}}}
\newcommand{\an}[0]{{{a}}}
\newcommand{\se}[0]{{{se}}}
\newcommand{\sa}[0]{{{sa}}}
\newcommand{\ea}[0]{{{ea}}}
\newcommand{\appp}[0]{{{a'}}}
\newcommand{\sezspap}[0]{{{sezs'a'}}}
\newcommand{\spap}[0]{{{s'a'}}}
\newcommand{\szsp}[0]{{{szs'}}}
\newcommand{\zs}[0]{{{sz}}}
\newcommand{\zsp}[0]{{{zs'}}}
\newcommand{\z}[0]{{{z}}}
\newcommand{\sezsp}[0]{{{sezs'}}}
\DeclareMathOperator*{\functioncomposition}{\bigcirc}
\DeclareMathOperator*{\tensorcomposition}{\bigotimes}
\newtheorem{theorem}{Theorem}
\newtheorem{definition}{Definition}
\newtheorem{lemma}{Lemma}
\begin{document}

\title{Resource theories of multi-time processes: A window into quantum non-Markovianity}
\date{April 8, 2021}
\author{Graeme D.\ Berk}
\email{graeme.berk@monash.edu}
\affiliation{School of Physics and Astronomy, Monash University, Clayton, Victoria 3800, Australia}
\author{Andrew J.\ P.\ Garner}
\affiliation{Institute for Quantum Optics and Quantum Information, Austrian Academy of Sciences, Boltzmanngasse 3, A-1090 Vienna, Austria}
\affiliation{School of Physical and Mathematical Sciences, Nanyang Technological University, 21 Nanyang Link, 637371, Singapore}
\author{Benjamin Yadin}
\affiliation{School of Mathematical Sciences, University of Nottingham,
University Park, Nottingham NG7 2RD, United Kingdom.}
\author{Kavan Modi}
\email{kavan.modi@monash.edu}
\affiliation{School of Physics and Astronomy, Monash University, Clayton, Victoria 3800, Australia}
\author{Felix A. Pollock}
\email{felix.pollock@monash.edu}
\affiliation{School of Physics and Astronomy, Monash University, Clayton, Victoria 3800, Australia}

\begin{abstract}
We investigate the conditions under which an uncontrolled background process may be harnessed by an agent to perform a task that would otherwise be impossible within their operational framework. This situation can be understood from the perspective of resource theory: rather than harnessing `useful' quantum states to perform tasks, we propose a resource theory of quantum processes across multiple points in time. Uncontrolled background processes fulfil the role of resources, and a new set of objects called \textit{superprocesses}, corresponding to operationally implementable control of the system undergoing the process, constitute the transformations between them. After formally introducing a framework for deriving resource theories of multi-time processes, we present a hierarchy of examples induced by restricting quantum or classical communication within the superprocess -- corresponding to a client-server scenario. The resulting nine resource theories have different notions of quantum or classical memory as the determinant of their utility. Furthermore, one of these theories has a strict correspondence between non-useful processes and those that are Markovian and, therefore, could be said to be a true `quantum resource theory of non-Markovianity'.
\end{abstract}

\maketitle

\section{Introduction} \label{introduction}
Before the invention of motors powered by hydrocarbon fuel, our ancestors were forced to rely on less tangible energy sources to power their voyages across the oceans. Sailing is a form of propulsion which works by actively harnessing the energy of an uncontrolled background process, namely the wind. Fast forward a few hundred years, and we are at the cusp of another technological revolution, which will be based on the logic of quantum mechanics. As quantum technology matures, understanding the scope of experimental control has become an area of significant focus. Most efforts to improve control over quantum systems have focused on reducing the amount of influence which the environment can exert over the system; these efforts include error correction~\cite{quantumerrorcorrection}, decoupling~\cite{dyndec}, or simply engineering cleaner quantum systems. Our approach is entirely distinct from these, as we choose to make the most of the environment which will inevitably be present.

To see what `sailing' through Hilbert space might look like, consider the following scenario. Several agents act in sequence on a quantum system with some goal in mind, be it to extract work from the system, prepare it in a particular state, or to send messages to each other. Between actions, the system is subject to an uncontrolled noisy process -- through interactions with its surroundings -- which may be temporally correlated. Given that the agents may be limited in the actions they can perform, and the degree to which they can communicate to one another, how can we quantify their ability to achieve their goals given the background process? 

An illustrative example of a (classical) process is given in Fig.~\ref{fig:equations}. Here an agent wishes to turn on a light by converting a background process into useful work. The agent usually outsources this task to a contractor, who can use a wind turbine or a solar panel to generate electricity depending on the weather, stormy or sunny respectively. On the other hand, a wind turbine (solar panel) is useless on windless sunny (stormy) day. This example begs the question: under what conditions is it possible to extract useful work or information out of an uncontrolled background quantum process?

\begin{figure}[t] 
\begin{minipage}[t]{0.45\textwidth}
        \centering
        \includegraphics[width=\linewidth]{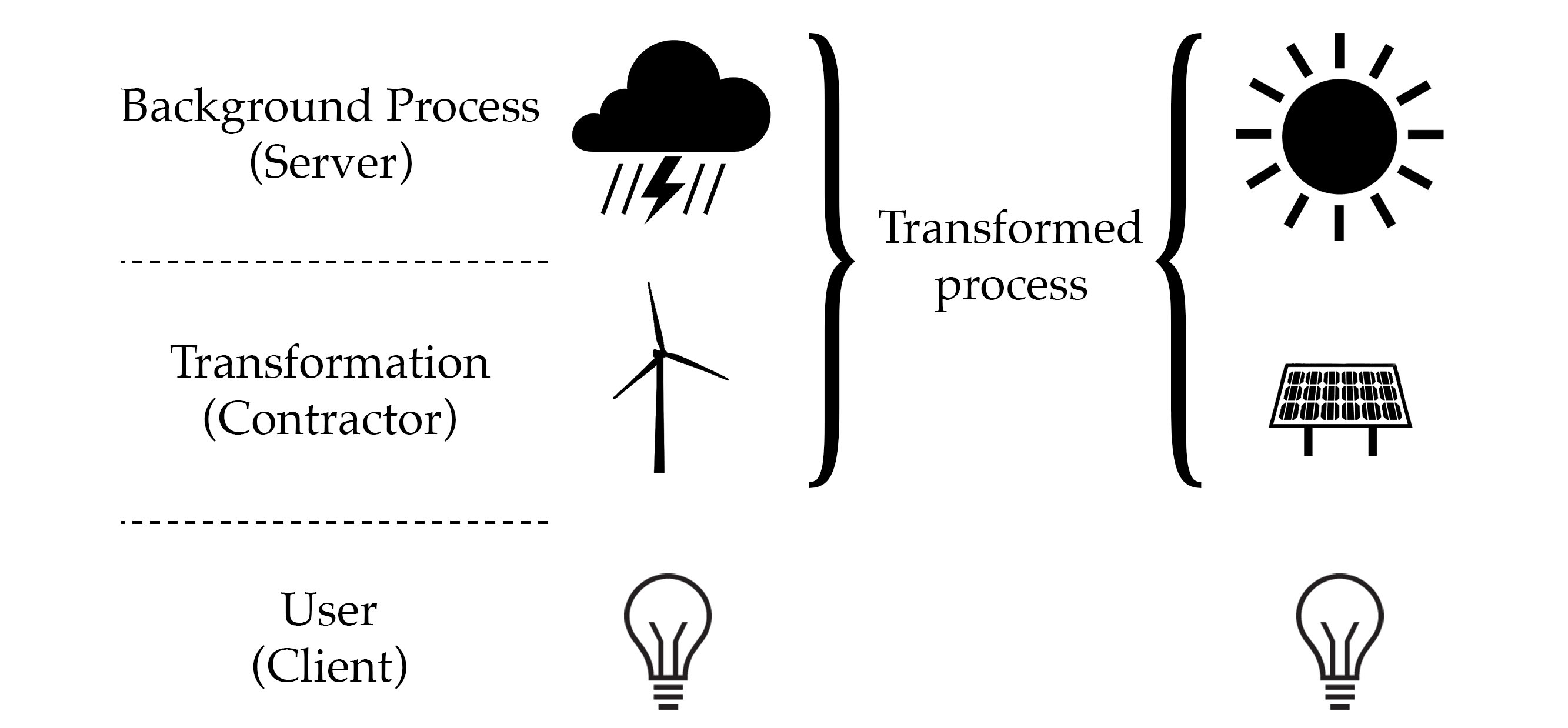}
        (a)
    \end{minipage}
    \hspace{5mm}
    \begin{minipage}[t]{0.3\textwidth}
        \centering
        \includegraphics[width=\linewidth]{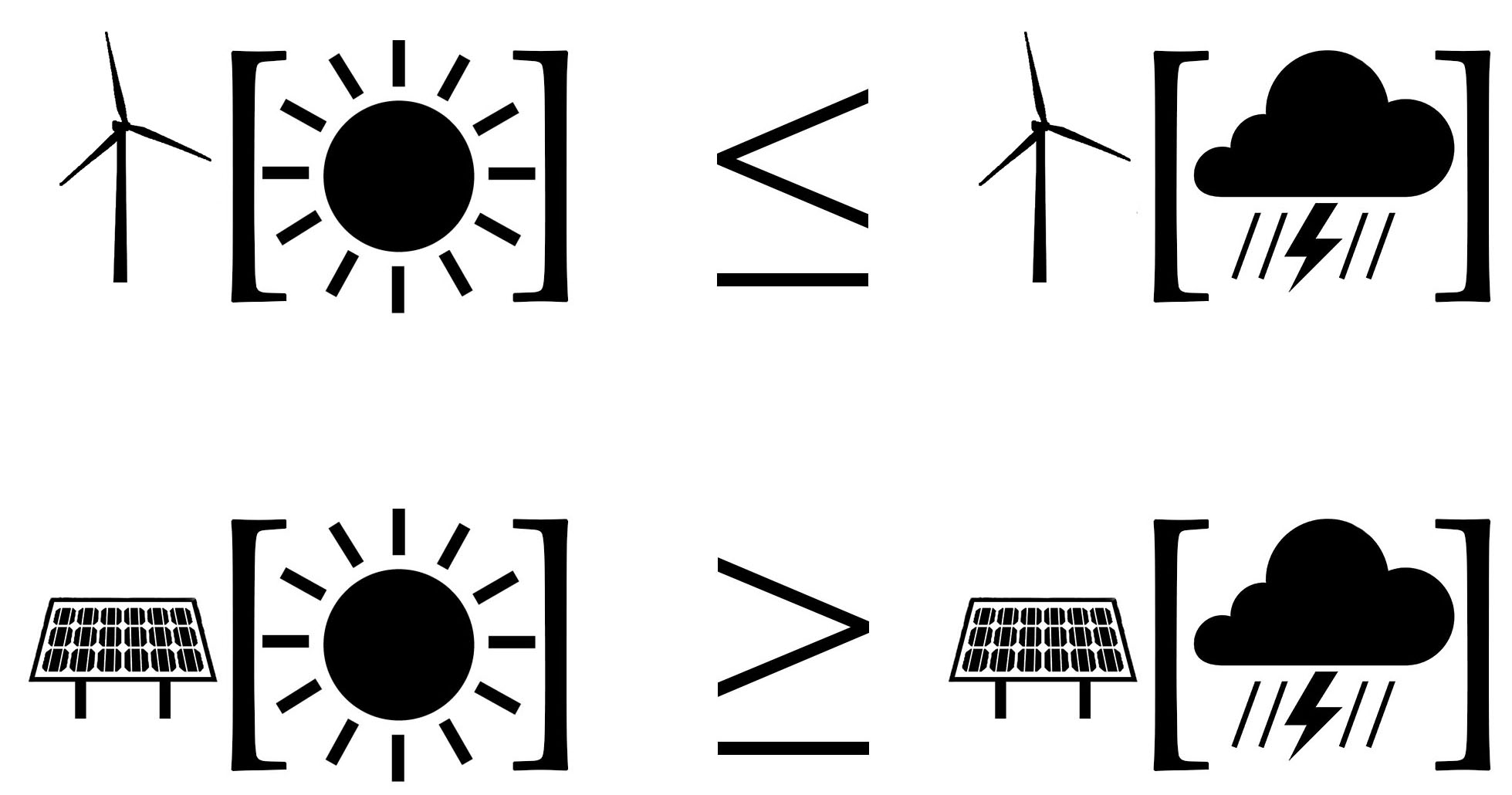}
        (b)
    \end{minipage}
\caption{(a) A restricted agent seeks to harness an uncontrolled background process (weather). They achieve this via a transformation of that process (converting energy into electricity) such that the resultant object can be utilised to perform a useful task (switching on a light-bulb). This idea can also be cast into a scenario where a client interacts with a server via a contractor as elaborated in Sec.~\ref{superprocesssection} and Sec.~\ref{communicationtheories}. (b) The utility of a background process is dependent on the capacity of an agent to harness it. If the agent can access to a wind turbine, the wind will be more useful than the sun. If that agent has access to a solar panel, the converse will be true.} \label{fig:equations}
\end{figure}

Quantum resource theories provide a framework in which the usefulness of objects for a particular task can be formally quantified. They are formulated in terms of a set of potential \emph{resources}, and a set of allowed, or free, \emph{transformations} between them. Usefulness is then determined by the set of other resources can be reached through these transformations. Quantum resource theories typically treat the set of quantum states (density operators) $\rho \in \mathcal{B}(\mathcal{H})$ of a system as the resources, with completely positive trace preserving (CPTP) maps $\mathcal{E}$ -- also known as quantum channels -- as the transformations. Common examples of quantum resource theories, with restricted subsets of CPTP maps, are thermal operations~\mbox{\cite{thermal, thermrev,rtofworkandheat, fundamentalworkcost, fundamentallimitations}}, noisy operations~\cite{rtpurity, rtinformationalnonequilibrium}, and local or separable operations~\cite{quantumentanglement, entanglementtheoryandthesecondlaw}. By exploring various properties of quantum resources, such as asymptotic conversion~\cite{reversibleframework} and rates of resource exchange~\cite{firstlaw}, we can understand which properties of states are useful under a particular set of allowed operations. For example with local operations, entanglement is useful~\cite{review}, whereas in thermodynamics, athermality is useful~\cite{thermal}, and for stabiliser computation, magic states are useful~\cite{magicstates}.

Yet, these resource theories do not capture the usefulness of more general quantum processes, which take place over several points in time and which may involve temporal correlations mediated by an inaccessible environment i.e., non-Markovian memory. In fact, it appears that non-Markovianity is the norm rather than the exception; it must be accounted for in real quantum computers~\cite{ibm}, in quantum metrology with noisy environments~\cite{quantummetrologyinnonmarkovian}, and in many realistic quantum control tasks~\cite{nmcontrol}. Furthermore, several researchers have suggested that non-Markovianity may be useful for certain tasks: it has been shown to improve the performance of quantum heat engines~\cite{nmengines, thermodynamicutilityofnonmarkovianity}, enhance quantum control~\cite{nmcontrol, thermocontrol, nmqtech, informationbackflowentanglement}, reduce decoherence~\cite{nmmemory} and allow the perfect teleportation of mixed states~\cite{tele}. However, there is no consensus on how to treat temporal correlations as a resource.

Here, we develop a framework for defining resource theories of multi-time quantum processes induced by limited experimental control, and use it to derive a family of theories in which non-Markovianity and related quantities become useful resources. While there have been numerous attempts to quantify the utility of non-Markovianity, giving rise to various resource theories~\cite{theoryofmaps, tripartite}, the main advantage of our work is that resource value is based on an operationally well defined framework for quantum processes, called the process tensor formalism~\cite{introqod, nonmarkov, nm2}. The process tensor formalism can account for the influence of (brief) experimenter interventions, does not suffer from the initial correlation problem~\cite{introqod}, and can identify temporal correlations even in CP divisible processes~\cite{divisibility}. This framework is a useful subset of the more general frameworks of quantum networks~\cite{quantumnetworks} and higher order quantum maps~\cite{higherorder}.

We begin in the next section by introducing the process tensor, a description of non-Markovian processes in terms of higher order quantum maps, and then go on to use it in Sec.~\ref{sec:ProcessResourceTheories} to show how restrictions on experimental control can lead to meaningful resource theories, in which processes themselves play the role of resources.\footnote{See Appendix~\ref{appnotationsummary} for a compact summary of the meanings of the various types of objects we use.} In Sec.~\ref{communicationtheories}, we consider the special case of restrictions on communication from past to future, demonstrating how different kinds of temporal correlations form a hierarchy of resources, before concluding in Sec.~\ref{sec:conclusion}. To start with, we will elucidate the general scenario we have in mind.

\section{Quantum Processes} \label{processessection}
Our description of quantum processes is an operational one: we explicitly account for what is within the control of some hypothetical agent, and what is not. There are two reasons why an agent may not have total control over a system $\s$ of interest. Firstly, there may be degrees of freedom $\e$ (the environment) with which the system interacts that are not directly accessible to the agent. Secondly, the agent may only have a limited capacity to influence the parts of the system that they can directly access, though they may be able to involve a separate ancillary system $\an$ in their interactions with $\s$. 

We assume a setting where the agent can effectively act on $s$ and $a$ instantaneously at a series of discrete times, between which it interacts continuously with $e$. In this case, the actions of the agent, which may potentially include any physically allowed transformation (including doing nothing at all), can be represented by completely positive (CP) trace non-increasing maps $\mathcal{A}^{\sa}$. In general, these will only be realised conditionally on some event, such as a particular outcome of a measurement, but they can always be grouped together into sets, corresponding to experimental instruments $\mathcal{J}$, such that their average, unconditional action is a CPTP map~\cite{nonmarkov}. 

A temporally ordered set of possibly restricted actions the agent is able to perform, conditional or otherwise $\{ \mathcal{A}^{\sa}_{n-1} , \dots, \mathcal{A}^{\sa}_0  \}$ is called a \textit{control sequence}; here the subscript denotes the `time-step'. However, the ancillary system $a$ can act as a quantum memory in general, allowing the agent to effectively correlate their actions on $s$ across time~\cite{stochtraj, introqod}. To represent their multi-time action on the system, we can compose these actions on the ancillary space only, denoted by $\circ_\an$\footnote{Equivalent to composing the Choi states of the maps with the link product defined in Ref.~\cite{quantumnetworks}.}; this composition implies that there are no intermediate actions on the corresponding space. The result is a higher-order quantum map~\cite{higherorder} $\mathbf{A}_{n-1:0} = {\rm tr}_\an\{\mathcal{A}^\text{\sa}_{n-1} \circ_\an \dots \circ_\an \mathcal{A}^{\sa}_0[\rho^\an_0]\}$, depicted in blue in Fig.~\ref{processtensor}, that encodes these correlations and acts on $s$ alone, albeit at multiple times (here $\rho_0^{\an}$ is the initial state of the ancilla). When the actions can be applied unconditionally, this object satisfies a hierarchy of causality conditions and is referred to as a quantum comb~\cite{quantumcircuitarch}. Unlike these actions, interactions between $s$ and $e$ are outside of the agent's influence, and need not be subject to the same limitations. Furthermore, the environment may become non-trivially correlated with $s$, leading to the breakdown of a description in terms of dynamical maps~\cite{introqod}. Between every pair of actions $\mathcal{A}^{\sa}_{j}$ and $\mathcal{A}^{\sa}_{j+1}$, there is a CPTP map $\mathcal{E}^{\se}_{j+1:j}$ acting jointly on $s$ and $e$. The collection of these maps composed only over $\e$ (denoted by $\circ_{\e}$) forms another quantum comb $\mathbf{T}_{n:0} = {\rm tr}_\e \{\mathcal{E}^\text{\se}_{n:n-1} \circ_\e \dots \circ_\e \mathcal{E}^{\se}_{1:0} \circ_\e \rho^\se_0\}$\footnote{The notation $\mathcal{E}^{\se}_{1:0} \circ_\e \rho^\se_0$ here implies that the map acts only on the $\e$ space of $\rho^\se_0$, while the $\s$ parts of both objects are left free to act or be acted on.}, known as the \textit{process tensor}. Given a potentially correlated initial $se$ state $\rho^{\se}_0$, the state after $n$ actions can be written as
\begin{equation}
\label{orderingofprocess}
\rho^{\s}_n = \mathbf{T}_{n:0}[\mathbf{A}_{n-1:0}] 
= \text{tr}_\ea \left\{ \mathcal{E}^{\se}_{n:n-1} \circ \mathcal{A}^{\sa}_{n-1}  \circ  \dots \circ \mathcal{E}^{\se}_{1:0}   \circ \mathcal{A}^{\sa}_0 [\rho^{\se}_0\otimes\rho^{\an}_0] \right\} .
\end{equation}
The second equality indicates that the interleaved $\se$ and $\sa$ dynamics can be seen as the contraction of the two higher order quantum maps, as depicted in Fig.~\ref{processtensor}.  The first is the control sequence $\mathbf{A}_{n-1:0}$, and the second is the process tensor $\mathbf{T}_{n:0}$~\cite{nm2}. When the former can only be applied conditionally on some measurement outcome, the final state $\rho^{\s}_n$ will be subnormalised. The process tensor uniquely encodes all information about a quantum process which is not under the direct control of an agent, though a consistent set of maps $\mathcal{E}^{\se}_{j+1:j}$ and state $\rho^{\se}_0$, can be non-uniquely determined by the agent, in principle, through a generalised quantum process tomography~\cite{introqod}. While we consider a version here which maps control sequences to quantum states, our results apply equally well to other quantum combs, such as those with an additional quantum state as input, which can be seen as maps from control sequences to quantum channels.

\begin{figure*}
\centering
\begin{tikzpicture}[scale=0.65]

\draw[myredfill,fill=myredfill, ultra thick,solid,rounded corners=6] (5.9,2.1) -- (5.9,0.1)  -- (4.1,0.1) -- (4.1,2.1) -- (1.9,2.1) -- (1.9,0.1) --  (0.1,0.1) --  (0.1,4.1) -- (14.6,4.1) -- (14.6,0.1) -- (12.1,0.1) -- (12.1,2.1) -- (9.9,2.1) -- (9.9,0.1) -- (8.1,0.1) -- (8.1,2.1);

\draw[myred, thick,solid,rounded corners=6] (6.5,2.1) -- (5.9,2.1) -- (5.9,0.1)  -- (4.1,0.1) -- (4.1,2.1) -- (1.9,2.1) -- (1.9,0.1) --  (0.1,0.1) --  (0.1,4.1) -- (14.6,4.1) -- (14.6,0.1) -- (12.1,0.1) -- (12.1,2.1) -- (9.9,2.1) -- (9.9,0.1) -- (8.1,0.1) -- (8.1,2.1) -- (7.5,2.1);

\draw[mybluefill,fill=mybluefill, ultra thick,solid,rounded corners=6] (6.1,1.9) -- (6.1,-0.1)  -- (3.9,-0.1) -- (3.9,1.9) -- (2.1,1.9) --  (2.1,-0.1) --  (0.1,-0.1) --  (0.1,-1.9) -- (14.6,-1.9) -- (14.6,0) -- (11.9,-0.1) -- (11.9,1.9) -- (10.1,1.9) -- (10.1,-0.1) -- (7.9,-0.1) -- (7.9,1.9);

\draw[myblue,thick,solid,rounded corners=6] (6.5,1.9) --  (6.1,1.9) -- (6.1,-0.1)  -- (3.9,-0.1) -- (3.9,1.9) -- (2.1,1.9) --  (2.1,-0.1) --  (0.1,-0.1) --  (0.1,-1.9) -- (14.6,-1.9) -- (14.6,-0.1) -- (11.9,-0.1) -- (11.9,1.9) -- (10.1,1.9) -- (10.1,-0.1) -- (7.9,-0.1) -- (7.9,1.9) -- (7.5,1.9);

\draw[black, very thick,solid] (13.7,3) -- (14.3,3);
\draw[black, very thick,solid] (14.2,2.7) -- (14.4,3.3);
\draw[black, very thick,solid] (13.7,1) -- (15,1);
\draw[black, very thick,solid] (1.7,3) -- (4.3,3);
\draw[black, very thick,solid] (9.7,3) -- (12.3,3);
\draw[black, very thick,solid] (5.7,3) -- (6.3,3);
\draw[black, very thick,solid] (1.7,1) -- (2.3,1);
\draw[black, very thick,solid] (3.7,1) -- (4.3,1);
\draw[black, very thick,solid] (9.7,1) -- (10.3,1);
\draw[black, very thick,solid] (11.7,1) -- (12.3,1);
\draw[black, very thick,dashed] (6.3,3) -- (7.7,3);
\draw[black, very thick,solid] (7.7,3) -- (8.3,3);
\draw[black, very thick,solid] (5.7,1) -- (6.3,1);
\draw[black, very thick,dashed] (6.3,1) -- (7.7,1);
\draw[black, very thick,solid] (7.7,1) -- (8.3,1);
\draw[black, very thick,solid] (1.7,-1) -- (2.3,-1);
\draw[black, very thick,solid] (3.7,-1) -- (6.3,-1);
\draw[black, very thick,dashed] (6.3,-1) -- (7.7,-1);
\draw[black, very thick,solid] (7.7,-1) -- (10.3,-1);
\draw[black, very thick,solid] (11.7,-1) -- (14.3,-1);
\draw[black, very thick,solid] (14.2,-1.3) -- (14.4,-0.7);

\draw[myblue,fill=mybluefill,ultra thick,solid,rounded corners=2] (2.3,1.7) rectangle (3.7,-1.7);
\draw[myblue,fill=mybluefill,ultra thick,solid,rounded corners=2] (10.3,1.7) rectangle (11.7,-1.7);
\draw[myred,fill=myredfill,ultra thick,solid,rounded corners=2] (4.3,3.7) rectangle (5.7,0.3);
\draw[myred,fill=myredfill,ultra thick,solid,rounded corners=2] (8.3,3.7) rectangle (9.7,0.3);
\draw[myred,fill=myredfill,ultra thick,solid,rounded corners=2] (12.3,3.7) rectangle (13.7,0.3);
\draw[mygrey,fill=mygreyfill,ultra thick,solid,rounded corners=10] (0.3,3.7) rectangle (1.7,0.3);
\draw[mygrey,fill=mygreyfill,ultra thick,solid,rounded corners=10] (0.3,-0.3) rectangle (1.7,-1.7);

\draw[] (1,-1) node {\large $\rho^{\an}_{0}$};
\draw[] (1,2) node {\large $\rho^{\se}_{0}$};
\draw[] (5,2) node {\large $\mathcal{E}_{1:0}$};
\draw[] (9,2) node[rotate=90] {\large $\mathcal{E}_{n-1:n-2}$};
\draw[] (13,2) node[rotate=90] {\large $\mathcal{E}_{n:n-1}$};
\draw[] (3,0) node { $\mathcal{A}_{0}$};
\draw[] (11,0) node { $\mathcal{A}_{n-1}$};
\draw[] (-0.3,3) node {\Large e};
\draw[] (-0.3,1) node {\Large s};
\draw[] (-0.3,-1) node {\Large a};
\draw[] (15.4,1) node {\large $\rho_{n}$};

\end{tikzpicture}
\caption{A diagrammatic representation of a quantum process, with time flowing from left to right. An initial system-environment state is evolved by a sequence of maps\protect\footnotemark, forming an interlocked pair of higher order quantum maps. These maps are grouped into two categories: those that are under the control of a hypothetical agent, and those which are not. The control operations (in blue) are the actions of this agent. Conversely, red indicates objects which are not under the control of the agent, corresponding to the process tensor. Depending on the situation, the initial state (in grey) can be considered either as a resource available to the agent or as part of the background process.} \label{processtensor}
\end{figure*}
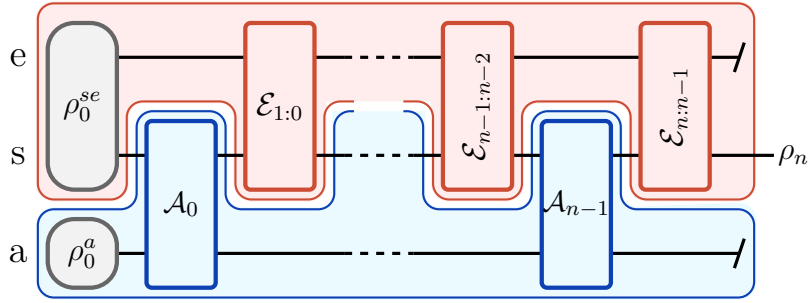
\footnotetext{In figures, where the subsystems involved are otherwise clear, we omit superscript labels from maps and states.}

It is usually convenient to represent the process tensor in the Choi form as a multipartite state instead of a multi-time-step evolution~\cite{introqod}. In this way, one can investigate its properties without being forced to specify an argument. The general form for a Choi state associated with a $n$ step process tensor is
\begin{equation} \label{processchoi}
\Upsilon_{n:0}= \text{tr}_\e \Bigg\{  \functioncomposition\limits_{j=0}^{n-1} \Big(\mathcal{E}_{j+1:j}^{\s \e} \circ S^{\s,{\mathtt{o}_j}}  \Big)   \tensorcomposition\limits_{j=0}^{n-1} \big( \psi^{{\mathtt{o}_j},{\mathtt{i}_j}}  \big) \otimes \rho_0^{{\se} } \Bigg\},
\end{equation}
where $S^{\alpha,\beta}$ is a swap operation between subsystems $\alpha$ and $\beta$, $\rho_0^{{\se} }$ is the system-environment initial state, $\psi$ is a maximally entangled bipartite state, $\mathtt{o}$ and $\mathtt{i}$ index the two halves of the maximally entangled pair by whether they correspond to an output or input of $\mathbf{T}$, and $j$ indexes the step number. In our indexing, $\s={\mathtt{o}_0}$, hence $S^{(\s,{\mathtt{o}_0})}=\mathcal{I}^{\s}$. 

The process tensor formalism is a quantum generalisation of classical stochastic processes~\cite{kolmogorov}. As such it resolves several outstanding conundrums about quantum stochastic processes. For example, it provides an unambiguous necessary and sufficient condition for Markovianity of a quantum process~\cite{nonmarkov, nm2, 1367-2630-18-6-063032}. The corresponding process tensor has the Choi state expressed in the form
\begin{equation}
    \Upsilon^{\text{Markov}}_{n:0}=  \bigotimes_{j=0}^{n-1} \left(\Lambda_{j+1:j} \right) \otimes \rho_0,\label{eq:markovprocess}
\end{equation}
where $\Lambda_{j+1:j}$ is the Choi state of a map corresponding to the $i+1$th leg of the process tensor~\cite{introqod}. More importantly, the process tensor enables the systematic exploration of the rich structure of quantum non-Markovian memory~\cite{Modi2012, Ringbauer2015, Pollock2018T, shrapnel_updating_2017, reconstruct, shrapnel_causation_2018, Costa2018, finitemarkovorder, quantummarkovorder, arXiv:1801.07418, Luchnikov2019, PhysRevLett.122.160401}. The formalism has also led to a pathway to generalise the theory of stochastic thermodynamics to quantum mechanics~\cite{thermoprocess, equilibrationonaverage, almostmarkovian, stochasticthermodynamicswitharbitraryinterventions, thermoprocess, repeatedinteractions}.

There are several other theories that share mathematical structure with the process tensor framework. Firstly, as noted above, the process tensor is a special case of the framework of quantum networks~\cite{supermaps, quantumcircuitarch, quantumnetworks}, which was originally derived as the most general representation of quantum circuit architectures. Beyond this there are \emph{causal automata/non-anticipatory channels}~\cite{kretschmann_quantum_2005, caruso_quantum_2014}, which describe quantum channels with with memory; \emph{causal boxes}~\cite{portmann_causal_2015} that enter into quantum networks with modular elements; \emph{operator tensors}~\cite{hardy_operational_2016, hardy_operator_2012} and \emph{superdensity matrices}~\cite{cotler_superdensity_2017}, employed to investigate quantum information in general relativistic space-time; and, finally, \emph{process matrices}, used for quantum causal modelling~\cite{OreshkovETAL2012, 1367-2630-18-6-063032, oreshkov_causal_2016, Milz2018}; and the $\epsilon$-transducers used within the framework of computational mechanics~\cite{barnett_computational_2015, thompson_using_2017} to describe processes with active interventions. \emph{Quantum strategies}~\cite{quantumstrategies,towardageneraltheoryofquantumgames,onameasureofdistanceforquantumstrategies,fidelityofquantumstrategieswithapplications} can also take on a similar operational structure to the process tensor when co-strategies are considered. Our results could be extended to the frameworks listed above, and any other framework for describing quantum processes as linear functionals. There have also been (much) earlier attempts to define quantum stochastic processes in terms of multi-time correlation kernels~\cite{lindblad1979, Accardi1985}. The process tensor can be shown to be equivalent to these formulations~\cite{lili-hall-wiseman}. However, the process tensor offers structural advantages because it represents processes as quantum states, whereas it is unclear how a resource theory of correlation kernels can be constructed.

The process tensor represents an uncontrolled background process, which like the weather in Fig.~\ref{fig:equations}, can represent a resource. While the agent does not have any control over the process itself, she/he can choose how to interact with it. In the next section, we will work within this structure to show that an agent's repertoire of control operations can be used to derive a resource theory of multi-time processes.

\section{Multi-Time Processes as Resources} \label{sec:ProcessResourceTheories}
Before fully exploring transformations of process tensors, we will present another type transformation that is simpler but carries many of the important features we seek.

\subsection{Preliminary Example: Supermaps and Resource Theories of Quantum Maps} \label{preliminary}

While the majority of existing results on quantum resource theories pertain to states as resources, it is becoming increasingly apparent that dynamical objects such as channels and maps can be harnessed in much the same way to perform useful tasks. Applying resource theories to dynamical objects allows an experimenter to understand how changes to their control capabilities result in differing abilities to perform a particular desired task. Recently, a framework has been developed for resource theories of quantum maps (channels in particular)~\cite{resourcetheoriesofquantumchannels, operationalresourcetheoryofquantumchannels,quantifyingoperationswithanapplication}. Prior to this, there were some specific results that look at channels as resources~\cite{comparisonofquantumchannelsbysuperchannels,entropyofaquantumchannel}, including a resource theory of memory~\cite{theoryofmaps}. Additionally, resource theories of entanglement in bipartite channels~\cite{entanglementofbipartitechannels,resourcetheoryofentanglementforbipartite}, and asymmetric channel distinguishability~\cite{resourcetheoryofasymmetricdistinguishability} have been studied. 

The central object in these theories are the so-called supermaps~\cite{supermaps} $\mathbf{S}$, which enable transformations between resources (maps). Its action on a quantum map $\mathcal{E}^{\s}$, which in turn acts on state $\rho^{\s}$ to produce $\sigma^{\s}$ is
\begin{equation}
\mathcal{E}^{\s}[\rho^{\s}]=\sigma^{\s}, \quad 
\mathbf{S}[\mathcal{E}^{\s}]=\mathcal{E}'^{\s}, 
\quad 
\mathcal{E}'^{\s}[\rho^{\s}]=\sigma'^{\s}.
\end{equation}
As depicted in Fig.~\ref{supermap}, any deterministic\footnote{In this context, deterministic means that channels are mapped to channels, as opposed to arbitrary trace non-increasing CP maps; in other words, the supermaps can be realised unconditionally.} supermap can be represented by
\begin{equation}
\mathbf{S}[\mathcal{E}^{\s}][\rho^{\s}]=\text{tr}_\an\{\mathcal{W}^{\sa} \circ \mathcal{E}^{\s} \circ \mathcal{V}^{\sa}[\rho^{\s} \otimes \rho^{\an}]\} ,
\end{equation}
where $\mathcal{E}^{\s}$ is a map on the main system $\s$ and is the argument of the supermap, while two supplementary maps $\mathcal{V}^{\sa}$ and $\mathcal{W}^{\sa}$ form a (non-unique) representation of the supermap itself, both acting on an additional ancillary subsystem $\an$. The first supplementary map $\mathcal{V}^{\sa}$ acts on $\rho^{\s}$ prior to the application of $\mathcal{E}^{\s}$, and the second $\mathcal{W}^{\sa}$ is applied to the output of $\mathcal{E}^{\s}$. These two maps can be thought of as pre- and post-manipulations in addition to the original map. 

In this setting, free maps can then be defined as those reachable through allowed pre- and post-manipulations from any other map. In this way, the range of experimental control becomes a set of transformations on the object representing the dynamical process, in this case the map $\mathcal{E}^{\s}$. For example, when an agent can only implement trace preserving supermaps (superchannels) which only communicate classical information, channels $\mathcal{E}^{\s}$ which have quantum memory become useful~\cite{theoryofmaps}. We will now derive a more general class of resource theories by taking process tensors to be resources.

\begin{figure}
\centering
\begin{tikzpicture}[scale=1.2]

\draw[mypurplefill,fill=mypurplefill, ultra thick,solid,rounded corners=6]  (6,-1) --(2,-1)  -- (2,1) -- (4,1) -- (4,0) -- (5,0) -- (5,1) --  (6.5,1) -- (6.5,-1) -- (2,-1) -- (6,-1);

\draw[mypurple,thick,solid,rounded corners=6] (6,-1) -- (2,-1)  -- (2,1) -- (4,1) -- (4,0) -- (5,0) -- (5,1) --  (6.5,1) -- (6.5,-1) -- (2,-1) -- (6,-1);

\draw[black,  thick,solid] (5.9,0.5) -- (7,0.5);

\draw[black,  thick,solid] (2.6,-0.5) -- (3.1,-0.5);
\draw[black,  thick,solid] (1.5,0.5) -- (3.1,0.5);
\draw[black,  thick,solid] (3.9,0.5) -- (4.1,0.5);
\draw[black,  thick,solid] (4.9,0.5) -- (5.1,0.5);

\draw[black,  thick,solid] (6.35,-0.35) -- (6.2,-0.65);
\draw[black,  thick,solid] (5.9,-0.5) -- (6.25,-0.5);

\draw[black,  thick,solid] (3.9,-0.5) -- (5.1,-0.5);

\draw[myred,fill=myredfill,very thick,solid,rounded corners=2] (4.1,0.9) rectangle (4.9,0.1);
\draw[mypurple,fill=mypurplefill,very thick,solid,rounded corners=2] (3.1,0.9) rectangle (3.9,-0.9);
\draw[mypurple,fill=mypurplefill,very thick,solid,rounded corners=2] (5.1,0.9) rectangle (5.9,-0.9);
\draw[mygrey,fill=mygreyfill,very thick,solid,rounded corners=2] (2.1,-0.1) rectangle (2.9,-0.9);

\draw[] (1,0.5) node {\Large $s$};
\draw[] (1,-0.5) node {\Large $a$};

\draw[] (4.5,0.5) node[rotate=0] {\Large  $\mathcal{E}$};
\draw[] (3.5,0) node[rotate=0] {\Large  $\mathcal{V}$};
\draw[] (5.5,0) node[rotate=0] {\Large  $\mathcal{W}$};
\draw[] (2.5,-0.5) node[rotate=0] {\Large  $\rho^{\an}$};

\end{tikzpicture}
\caption{A supermap (purple shaded area) acting on a map $\mathcal{E}^{\s}$. A pre- and post-manipulation are performed on the system in surplus to the original map, $\mathcal{V}^{\sa}$ and $\mathcal{W}^{\sa}$ respectively. These contain an ancillary subsystem which may enable the passage of information across $\mathcal{E}^{\s}$.} \label{supermap}
\end{figure}
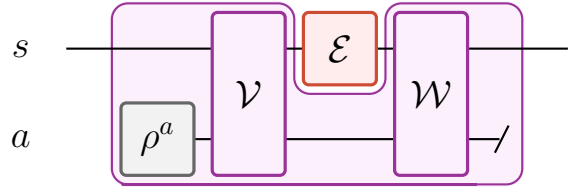

\subsection{Superprocesses} \label{superprocesssection}
If we are interested in the utility of quantum processes as resources, then we need a way of relating them through transformations that encode constraints on how they can be harnessed.
As mentioned in Sec.~\ref{processessection}, applying a control sequence to a system does not immediately appear to constitute a transformation of the process it is undergoing. Instead $\mathbf{A}_{n-1:0}$, as defined in the previous section, takes the form of a linear mapping from process tensors to quantum states. To remedy this situation, we consider the higher order maps $\mathbf{Z}_{n:0}$ that take control sequences to control sequences. Crucially, such higher order maps can also be seen as having a dual action\footnote{We will assume here that all Hilbert spaces are finite dimensional and avoid any of the potential complications with regards to dual spaces that a more general scenario would entail.}, mapping process tensors to other process tensors, fitting the role of transformations in a resource theory of multi-time quantum processes; as such, we introduce a square bra-ket style notation $\llbracket\,\cdot\, | \mathbf{Z} | \,\cdot\, \rrbracket$ to distinguish between ($\llbracket\,\cdot\, | \mathbf{Z}$) action on the process tensor and ($\mathbf{Z} | \,\cdot\, \rrbracket$) action on the control sequence. These higher order maps are a generalisation of the supermaps introduced in the previous subsection, and we will refer to them as \emph{superprocesses}. 

Just as with the supermap, there is a dilated representation of any $\mathbf{Z}_{n:0}$ in terms of reversible maps $\{\mathcal{W}_{j} \}$ and $\{\mathcal{V}_{j}\}$ on an ancillary space, as proven in Theorem 24 of \cite{causalorderingismemory}, under the assumption that there is a definite causal order between the control operations on the input and the output spaces. This allows us to rewrite the process tensor's action on the control sequence as
\begin{align}
\label{sprocdef}
&\llbracket \mathbf{T}_{n:0}|\mathbf{Z}_{n:0}|{\mathbf{A}'}_{n:0} \rrbracket= \mathbf{T}'_{n:0}[{\mathbf{A}'}_{n:0}]=\mathbf{T}_{n:0}[\mathbf{A}_{n-1:0}], 
\end{align}
where ${\mathbf{A}'}_{n:0} = {\rm tr}_\appp\{{\mathcal{A}'}^{\spap}_{n} \circ_\appp \dots \circ_\appp 
{\mathcal{A}'}^{\spap}_0[\rho^\appp_0]\}$ is a control sequence on another system $\s'$ which, in analogy to $\mathbf{A}_{n-1:0}$, can be represented in terms of maps $\mathcal{A}^{\spap}_j$ on $\s'$ and a further ancillary space $\an'$. Expanding out the objects in Eq.~\eqref{sprocdef}, we arrive at
\begin{align} \label{eq:procexpand}
\rho^{{\s}}_n &=\llbracket \notag \mathbf{T}_{n:0}|\mathbf{Z}_{n:0}|{\mathbf{A}'}_{n:0} \rrbracket   \\& = \text{tr}_{ezs'a'} \bigl\{{\mathcal{A}'}^{\spap}_{n} \circ \mathcal{M}^{\sezsp}_{n:n-1}  \circ {\mathcal{A}'}^{\spap}_{n-1} \circ \dots \circ \mathcal{M}^{\sezsp}_{1:0} \circ {\mathcal{A}'}^{\spap}_{0}  \circ \mathcal{W}^{\szsp}_{0} [\rho_0^{\sezspap}] \bigr\},
\end{align}
with 
\begin{equation}
\label{eq:mdef}
    \mathcal{M}^{\sezsp}_{\alpha+1:\alpha} = \mathcal{W}^{\szsp}_{\alpha+1} \circ \mathcal{E}^{\se}_{\alpha+1:\alpha} \circ \mathcal{V}^{\szsp}_{\alpha}.
\end{equation}
In general, ${\mathbf{A}'}_{n:0}$ can include a final measurement operation $\mathcal{A}^{\spap}_n$ that occurs after the final $\se$ map $\mathcal{E}^\se_{n:n-1}$, to allow for the case where the original control sequence cannot be implemented unconditionally~\cite{quantumnetworks}. If the control sequence $\mathbf{A}'_{n:0}$ involves conditioning on a measurement outcome, the final state $\rho^{\s}_n$ can be subnormalised. We can conclude that the diagram representing the interplay between process and control in Fig.~\ref{processtensor} is equivalent to the one in Fig.~\ref{superprocessactingonprocesstensor}. As such, the superprocess $\mathbf{Z}_{n:0}$ is itself a quantum comb with definite causal order alternating between its action on $\s$ and on $\s'$. We detail in Appendix~\ref{appchoi} that the superprocess also has a convenient representation as a many-body quantum state through the Choi isomorphism (analogous to that for the process tensor).

\begin{figure*}[bht]
\centering

\begin{tikzpicture}[scale=0.9]

\draw[myredfill,fill=myredfill, ultra thick,solid,rounded corners=3] (6,1.05) -- (4.95,1.05) -- (4.95,0.05) -- (4.05,0.05) --(4.05,1.05) -- (0.95,1.05)  --  (0.95,0.05) -- (0,0.05) --  (0,2.05) -- (15,2.05) -- (15,1.05) -- (12.95,1.05) -- (12.95,0.05) -- (12.05,0.05) -- (12.05,1.05) --  (8.95,1.05) -- (8.95,0.05) -- (8.05,0.05) -- (8.05,1.05) -- (7,1.05) -- (6,1.05) ;

\draw[myred,thick,solid,rounded corners=3] (6,1.05) -- (4.95,1.05) -- (4.95,0.05) -- (4.05,0.05) --(4.05,1.05) -- (0.95,1.05)  --  (0.95,0.05) -- (0,0.05) --  (0,2.05) -- (15,2.05) -- (15,1.05) -- (12.95,1.05) -- (12.95,0.05) -- (12.05,0.05) -- (12.05,1.05) --  (8.95,1.05) -- (8.95,0.05) -- (8.05,0.05) -- (8.05,1.05) -- (7,1.05)  ;

\draw[mypurplefill,fill=mypurplefill, very thick,solid,rounded corners=3] (6,0.95)  -- (5.05,0.95) -- (5.05,-0.05) -- (3.95,-0.05) -- (3.95,0.95) -- (3.05,0.95) -- (1.05,0.95) -- (1.05,-0.05)  -- (0,-0.05) -- (0,-1.95) -- (1.95,-1.95) -- (1.95,-0.95) -- (3.05,-0.95) -- (3.05,-1.95) -- (6,-1.95) -- (7,-1.95)--  (9.95,-1.95) -- (9.95,-0.95) -- (11.05,-0.95) -- (11.05,-1.95) -- (13.95,-1.95) -- (13.95,-0.95) -- (15,-0.95) -- (15,0.95) -- (13.05,0.95) -- (13.05,-0.05) -- (11.95,-0
.05) -- (11.95,0.95) -- (11.05,0.95) --  (9.05,0.95) -- (9.05,-0.05) -- (7.95,-0.05) -- (7.95,0.95) -- (7,0.95)  ;

\draw[mypurple, thick,solid,rounded corners=3] (6,0.95)  -- (5.05,0.95) -- (5.05,-0.05) -- (3.95,-0.05) -- (3.95,0.95) -- (3.05,0.95) -- (1.05,0.95) -- (1.05,-0.05)  -- (0,-0.05) -- (0,-1.95) -- (1.95,-1.95) -- (1.95,-0.95) -- (3.05,-0.95) -- (3.05,-1.95) -- (6,-1.95) ;

 \draw[mypurple, thick,solid,rounded corners=3] (7,-1.95)--  (9.95,-1.95) -- (9.95,-0.95) -- (11.05,-0.95) -- (11.05,-1.95) -- (13.95,-1.95) -- (13.95,-0.95) -- (15,-0.95) -- (15,0.95) -- (13.05,0.95) -- (13.05,-0.05) -- (11.95,-0.05) -- (11.95,0.95) -- (11.05,0.95) --  (9.05,0.95) -- (9.05,-0.05) -- (7.95,-0.05) -- (7.95,0.95) -- (7,0.95)  ;

\draw[mybluefill,fill=mybluefill, very thick,solid,rounded corners=3]  (6,-2.05) -- (5.05,-2.05) -- (2.95,-2.05) -- (2.95,-1.05) --  (2.05,-1.05) -- (2.05,-2.05) -- (0,-2.05) -- (0,-3) --  (2.05,-3) -- (15,-3) -- (15,-1.05)  -- (14.05,-1.05) -- (14.05,-2.05) (10.95,-2.05)  -- (10.95,-1.05) --  (10.05,-1.05) -- (10.05,-2.05) -- (7,-2.05) -- (6,-2.05) ;

\draw[myblue, thick,solid,rounded corners=3] (6,-2.05) -- (5.05,-2.05) -- (2.95,-2.05) -- (2.95,-1.05)  --  (2.05,-1.05) -- (2.05,-2.05) -- (0,-2.05) -- (0,-3)  --  (2.05,-3) -- (15,-3) -- (15,-1.05)  -- (14.05,-1.05) -- (14.05,-2.05) -- (10.95,-2.05)  -- (10.95,-1.05) --  (10.05,-1.05) -- (10.05,-2.05) -- (7,-2.05)  ;

\draw[black,  thick,solid] (12.9,1.5) -- (14.5,1.5);
\draw[black,  thick,solid] (14.4,-1.7) -- (14.6,-1.3);
\draw[black,  thick,solid] (13.9,0.5) -- (15.3,0.5);
\draw[black,  thick,solid] (0.9,1.5) -- (4.1,1.5);
\draw[black,  thick,solid] (8.9,1.5) -- (12.1,1.5);
\draw[black,  thick,solid] (4.9,1.5) -- (5.9,1.5);
\draw[black,  thick,dashed] (5.9,1.5) -- (7.1,1.5);
\draw[black,  thick,solid] (7.1,1.5) -- (8.1,1.5);

\draw[black,  thick,dashed] (5.9,0.5) -- (7.1,0.5);
\draw[black,  thick,solid] (0.9,0.5) -- (2.1,0.5);
\draw[black,  thick,solid] (2.9,0.5) -- (3.1,0.5);
\draw[black,  thick,solid] (3.9,0.5) -- (4.1,0.5);
\draw[black,  thick,solid] (4.9,0.5) -- (5.1,0.5);
\draw[black,  thick,solid] (7.9,0.5) -- (8.1,0.5);
\draw[black,  thick,solid] (8.9,0.5) -- (9.1,0.5);
\draw[black,  thick,solid] (9.9,0.5) -- (10.1,0.5);
\draw[black,  thick,solid] (10.9,0.5) -- (11.1,0.5);
\draw[black,  thick,solid] (11.9,0.5) -- (12.1,0.5);
\draw[black,  thick,solid] (12.9,0.5) -- (13.1,0.5);

\draw[black,  thick,dashed] (5.9,-0.5) -- (7.1,-0.5);
\draw[black,  thick,dashed] (5.9,-1.5) -- (7.1,-1.5);
\draw[black,  thick,dashed] (5.9,-2.5) -- (7.1,-2.5);
\draw[black,  thick,solid] (2,0.5) -- (5.9,0.5);
\draw[black,  thick,solid] (7.1,0.5) -- (14.5,0.5);

\draw[black,  thick,solid] (0.8,-0.5) -- (5.9,-0.5);
\draw[black,  thick,solid] (7.1,-0.5) -- (14.5,-0.5);
\draw[black,  thick,solid] (0.8,-1.5) -- (5.9,-1.5);
\draw[black,  thick,solid] (0.8,-2.5) -- (5.9,-2.5);
\draw[black,  thick,solid] (7.1,-1.5) -- (14.5,-1.5);
\draw[black,  thick,solid] (7.1,-2.5) -- (14.5,-2.5);
\draw[black,  thick,solid] (14.4,1.3) -- (14.6,1.7);
\draw[black,  thick,solid] (14.4,-1.7) -- (14.6,-1.3);
\draw[black,  thick,solid] (14.4,-2.7) -- (14.6,-2.3);
\draw[black,  thick,solid] (14.4,-0.7) -- (14.6,-0.3);

\draw[myblue,fill=mybluefill,very thick,solid,rounded corners=2] (14.1,-1.1) rectangle (14.9,-2.9);
\draw[myblue,fill=mybluefill,very thick,solid,rounded corners=2] (2.1,-1.1) rectangle (2.9,-2.9);
\draw[myblue,fill=mybluefill,very thick,solid,rounded corners=2] (10.1,-1.1) rectangle (10.9,-2.9);
\draw[myred,fill=myredfill,very thick,solid,rounded corners=2] (4.1,1.9) rectangle (4.9,0.1);
\draw[myred,fill=myredfill,very thick,solid,rounded corners=2] (8.1,1.9) rectangle (8.9,0.1);
\draw[myred,fill=myredfill,very thick,solid,rounded corners=2] (12.1,1.9) rectangle (12.9,0.1);
\draw[mygrey,fill=mygreyfill,very thick,solid,rounded corners=10] (0.1,1.9) rectangle (0.9,0.1);

\draw[mypurple,fill=mypurplefill,very thick,solid,rounded corners=2] (1.1,0.9) rectangle (1.9,-1.9);
\draw[mypurple,fill=mypurplefill,very thick,solid,rounded corners=2] (3.1,0.9) rectangle (3.9,-1.9);
\draw[mypurple,fill=mypurplefill,very thick,solid,rounded corners=2] (5.1,0.9) rectangle (5.9,-1.9);
\draw[mypurple,fill=mypurplefill,very thick,solid,rounded corners=2] (7.1,0.9) rectangle (7.9,-1.9);
\draw[mypurple,fill=mypurplefill,very thick,solid,rounded corners=2] (9.1,0.9) rectangle (9.9,-1.9);
\draw[mypurple,fill=mypurplefill,very thick,solid,rounded corners=2] (11.1,0.9) rectangle (11.9,-1.9);
\draw[mypurple,fill=mypurplefill,very thick,solid,rounded corners=2] (13.1,0.9) rectangle (13.9,-1.9);
\draw[mygrey,fill=mygreyfill,very thick,solid,rounded corners=10] (0.1,-0.1) rectangle (0.9,-1.9);
\draw[mygrey,fill=mygreyfill,very thick,solid,rounded corners=10] (0.1,-2.1) rectangle (0.9,-2.9);

\draw[] (0.5,1) node {\small $\rho_{0}^{\se}$};
\draw[] (4.5,1) node[rotate=90] {\small  $\mathcal{E}_{1:0}$};
\draw[] (8.5,1) node[rotate=90] {\small  $\mathcal{E}_{n-1:n-2}$};
\draw[] (12.5,1) node[rotate=90] {\small  $\mathcal{E}_{n:n-1}$};
\draw[] (2.5,-2) node[rotate=90] {\small  ${\mathcal{A}'}_{0}$};
\draw[] (10.5,-2) node[rotate=90] {\small  ${\mathcal{A}'}_{n-1}$};
\draw[] (14.5,-2) node[rotate=90] {\small  ${\mathcal{A}'}_{n}$};
\draw[] (15.6,0.5) node {\small  $\rho^{\s}_{n}$};

\draw[] (-0.3,1.5) node {$e$};
\draw[] (-0.3,0.5) node {$s$};
\draw[] (-0.3,-0.5) node {$z$};
\draw[] (-0.3,-1.5) node {$s'$};
\draw[] (-0.3,-2.5) node {$a'$};

\draw[] (0.5,-1) node {\small $\rho_{0}^{\zsp}$};
\draw[] (0.5,-2.5) node {\small $\rho_{0}^{\appp}$};
\draw[] (1.5,-0.5) node[rotate=90] {\small  $\mathcal{W}_{0}$};
\draw[] (3.5,-0.5) node[rotate=90] {\small  $\mathcal{V}_{0}$};
\draw[] (5.5,-0.5) node[rotate=90] {\small  $\mathcal{W}_{1}$};
\draw[] (7.5,-0.5
) node[rotate=90] {\small  $\mathcal{V}_{n-2}$};
\draw[] (9.5,-0.5) node[rotate=90] {\small  $\mathcal{W}_{n-1}$};
\draw[] (11.5,-0.5) node[rotate=90] {\small  $\mathcal{V}_{n-1}$};
\draw[] (13.5,-0.5) node[rotate=90] {\small  $\mathcal{W}_{n}$};

\end{tikzpicture}

\caption{The fully general control sequence-superprocess-process tensor structure of a quantum process. The dynamics can be separated into three distinct object groups: the process tensor (red), the superprocess (purple), and the reduced control sequence (blue). The process tensor is inaccessible, while the superprocess and control sequence are in principle accessible to agents. Grouping the purple with blue induces a different control sequence acting on the original background process, while grouping the purple with red corresponds to a new effective background process acted on by the control sequence in blue. If the control sequence $\mathbf{A}'_{n:0}$ can only be implemented conditionally on a measurement outcome, the final state $\rho^{\s}_n$ may be subnormalised. } \label{superprocessactingonprocesstensor}
\end{figure*}
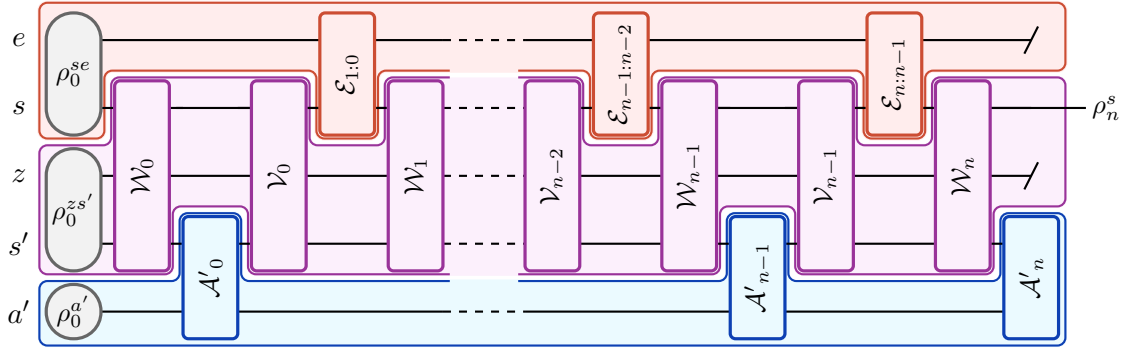

We can construct a meaningful resource theory of multi-time quantum processes by placing restrictions on an agent's control: The control sequences which the agent is capable of performing correspond to a set of superprocesses that relate them to a fiducial set of control sequences on $\s'$. These superprocesses can be regarded as free, and in turn through their dual action define the set of process tensors that can be obtained for free. Furthermore, when the process tensor is specified to not contain an initial state and only have one time-step, and the control operations are taken to be trivial, this expression reduces to the action of a supermap on a single quantum map. Resource theories of maps as in Sec.~\ref{preliminary} are a specific case of these more general theories.

An alternative picture of the superprocess is one where multiple agents with varying levels of control capabilities interact. The superprocess then represents the actions of an intermediary contracted by a client, who can only perform a limited set of operations on the $\s'$ system. This contractor interfaces directly with the process, which plays the role of a remote server, and returns the result to the client. From the client's perspective, the joint system of contractor (superprocess) and server (underlying process) can be viewed as a new process (combining red and purple in Fig.~\ref{superprocessactingonprocesstensor}): any control sequence the client  could directly apply to the underlying process, they could equally apply to the transformed process. Equally, from the perspective of the underlying process, the combined control sequence and superprocess, can be viewed as a new control sequence (combining blue and purple in Fig.~\ref{superprocessactingonprocesstensor}). 

The latter picture could be straightforwardly generalised to scenarios where the system accessed by the client is very different from that which feeds directly into the server. It could also be extended to cases where the contractor's actions are represented by superprocesses that take $n$ step processes to $m$ step ones. When $n<m$ the number of client actions can be increased by including more than one $\mathcal{V}$-$\mathcal{W}$ pair per step, and conversely when $n>m$ the main subsystem $\s$ can be joined between $\mathcal{V}$ at one step and $\mathcal{W}$ at the next step.

\subsection{Resource Theories of Multi-Time Processes} \label{sec:RTdef}
We now have all the necessary ingredients to formally define resource theories of multi-time processes: 
\begin{definition}[Resource Theory of Multi-Time Processes]
A resource theory of multi-time processes $\mathsf{R}=\{ \mathsf{T},\mathsf{Z} \}$ consists of two sets. The first set $\mathsf{T}$ consists of uncontrolled background processes which an agent might be subject to, represented by process tensors $\mathbf{T}_{n:0}$, while the second set $\mathsf{Z}$ is that of superprocesses $\mathbf{Z}_{n:0}$ which the agent is capable of implementing as transformations of background processes.
\end{definition}

This induces a structure of convertibility of processes. Free processes are those which can be obtained via application of allowed superprocesses, given any other process. While any process can reach every free process, non-free processes may be able to reach other processes too. Hence, when possible, it is sufficient for generating the full set of free processes to find a particular free transformation which links every resource to a particular free resource, and to reach all other free resources by subsequently applying other free transformations.

Many important quantum resource theories have these types of transformations which link any resource to a particular free one; for example, this is what thermalising a non-equilibrium state does. Whether such a transformation exists will depend on the restrictions placed on the set of free superprocesses. Generically, these should look like an agent applying a suitably noisy superprocess to the background process, such that the transformed background process loses its ability to carry information.

Defined in such a way, the set of free processes will be a closed set, and satisfy the `golden rule of quantum resource theories' presented in Ref.~\cite{review}.

\subsection{Monotones} \label{monotones}
We turn our attention to resolving one of the main questions underpinning this investigation: how does one quantify the utility of an uncontrolled background process? A class of monotones for multi-time process theories can be derived from the Choi representation of the process tensor. 
\begin{theorem} \label{thm:monotones}
Given a resource theory of multi-time processes, any state distance $D(\cdot,\cdot)$ measure applied to the Choi states of process tensors satisfying contractivity under the action of superprocesses forms a monotone
\begin{equation}
M = \inf_{\Upsilon_{\mathbf{T}_\text{F}}} D\big(\Upsilon_\mathbf{T},\Upsilon_{\mathbf{T}_\text{F}}\big),
\end{equation}
where $\Upsilon_{\mathbf{T}_\text{F}}$ are free process tensors in the Choi representation. 
\end{theorem}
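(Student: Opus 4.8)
The plan is to verify that $M$ satisfies the two defining properties of a resource monotone: that it is non-increasing under every free superprocess, and that it attains its minimum on the free set. Write $\llbracket \mathbf{T} | \mathbf{Z}$ for the process tensor obtained from the dual action of a free superprocess $\mathbf{Z}\in\mathsf{Z}$ on $\mathbf{T}$, so that $\Upsilon_{\llbracket \mathbf{T}|\mathbf{Z}}$ denotes its Choi state. I would lean on two structural facts. The first is the hypothesis of the theorem, that $D$ is contractive under the action of superprocesses, i.e.
\begin{equation}
D\big(\Upsilon_{\llbracket \mathbf{T}_1|\mathbf{Z}}, \Upsilon_{\llbracket \mathbf{T}_2|\mathbf{Z}}\big) \leq D\big(\Upsilon_{\mathbf{T}_1}, \Upsilon_{\mathbf{T}_2}\big)
\end{equation}
for all process tensors $\mathbf{T}_1,\mathbf{T}_2$ and all free $\mathbf{Z}$. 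The second is the closure of the free set established in Sec.~\ref{sec:RTdef}: if $\mathbf{T}_\text{F}$ is free and $\mathbf{Z}$ is free, then $\llbracket \mathbf{T}_\text{F}|\mathbf{Z}$ is again a free process tensor.

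The core of the argument is a short chain of inequalities. Fix an arbitrary process tensor $\mathbf{T}$ and a free superprocess $\mathbf{Z}$; the goal is $M(\llbracket \mathbf{T}|\mathbf{Z}) \leq M(\mathbf{T})$. For any free $\mathbf{T}_\text{F}$, the transformed tensor $\llbracket \mathbf{T}_\text{F}|\mathbf{Z}$ is free by the closure property, hence is an admissible competitor in the infimum defining $M(\llbracket \mathbf{T}|\mathbf{Z})$. Combining this observation with contractivity gives
\begin{equation}
M(\llbracket \mathbf{T}|\mathbf{Z}) \leq D\big(\Upsilon_{\llbracket \mathbf{T}|\mathbf{Z}}, \Upsilon_{\llbracket \mathbf{T}_\text{F}|\mathbf{Z}}\big) \leq D\big(\Upsilon_{\mathbf{T}}, \Upsilon_{\mathbf{T}_\text{F}}\big).
\end{equation}
Taking the infimum over free $\mathbf{T}_\text{F}$ on the right-hand side yields exactly $M(\llbracket \mathbf{T}|\mathbf{Z}) \leq M(\mathbf{T})$, which is monotonicity. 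Minimality on the free set then follows directly from $D(\Upsilon,\Upsilon)=0$ together with the non-negativity of $D$: choosing $\Upsilon_{\mathbf{T}_\text{F}}$ itself as the competitor shows $M(\mathbf{T}_\text{F})=0$ for any free $\mathbf{T}_\text{F}$, and if $D$ is additionally faithful this characterises the free set as the zero level-set of $M$.

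I expect the only genuine subtlety to be whether the infimum is attained. If the free set is compact a minimiser exists and the argument runs verbatim; otherwise I would avoid assuming a minimiser and instead use an $\varepsilon$-optimal free tensor: given $\varepsilon>0$, pick $\mathbf{T}_\text{F}$ with $D(\Upsilon_{\mathbf{T}},\Upsilon_{\mathbf{T}_\text{F}}) \leq M(\mathbf{T})+\varepsilon$, apply the same two inequalities to deduce $M(\llbracket \mathbf{T}|\mathbf{Z}) \leq M(\mathbf{T})+\varepsilon$, and let $\varepsilon\to 0$. The second point worth stating carefully is that the displayed contractivity property is a genuine hypothesis the chosen $D$ must satisfy with respect to the dual action of superprocesses on Choi states, not something automatic; the theorem is therefore best read as a recipe converting any such contractive distance into a valid monotone. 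Everything else is bookkeeping, in particular checking that the dual action does send process tensors to process tensors so that the Choi states $\Upsilon_{\llbracket \mathbf{T}|\mathbf{Z}}$ are well defined, which is already guaranteed by the superprocess construction of Eq.~\eqref{eq:procexpand}.
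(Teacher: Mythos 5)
Your proposal is correct and follows essentially the same route as the paper's proof: contractivity of $D$ under the superprocess applied to the pair $(\mathbf{T},\mathbf{T}_\text{F})$, closure of the free set under free superprocesses so that $\llbracket \mathbf{T}_\text{F}|\mathbf{Z}$ is an admissible competitor, and then passing to the infimum. Your $\varepsilon$-optimal variant is a slightly more careful handling of the infimum than the paper's phrasing, which tacitly assumes a closest free process exists, but it is the same argument in substance.
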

\begin{proof}
Since $D(\cdot,\cdot)$ is contractive under application of the superprocess:
\begin{equation}
 D(\Upsilon_\mathbf{T},\Upsilon_{\mathbf{T}'})  \geq D(\Upsilon_{\llbracket \mathbf{T}|\mathbf{Z}},\Upsilon_{\llbracket \mathbf{T}'|\mathbf{Z}}).
\end{equation}
If we set $\Upsilon_{\mathbf{T}'}$ to be the closest free process to $\Upsilon_\mathbf{T}$ this expression becomes
\begin{equation}
 D(\Upsilon_\mathbf{T},\Upsilon_{\mathbf{T}_\text{F}})  \geq D(\Upsilon_{ \llbracket \mathbf{T}|\mathbf{Z}},\Upsilon_{\llbracket \mathbf{T}_\text{F}|\mathbf{Z}}).
\end{equation}
Since $\mathbf{Z}$ is a free superprocess, then $\Upsilon_{\llbracket \mathbf{T}_\text{F}|\mathbf{Z}}$ must still be a free process; however, $\Upsilon_{\llbracket \mathbf{T}_\text{F}|\mathbf{Z}}$ need not be the closest free  process to $\Upsilon_{\llbracket \mathbf{T}|\mathbf{Z}}$. Hence, there exists another free process $\Upsilon_{\mathbf{T}'_\text{F}}$ at least as close to $\Upsilon_{\llbracket \mathbf{T}|\mathbf{Z}}$
\begin{equation}
D(\Upsilon_{\llbracket \mathbf{T}|\mathbf{Z}},\Upsilon_{\llbracket \mathbf{T}_\text{F}|\mathbf{Z}}) \geq \inf_{\Upsilon_{\mathbf{T}'_\text{F}}}  D(\Upsilon_{\llbracket \mathbf{T}|\mathbf{Z}},\Upsilon_{\mathbf{T}'_\text{F}}),
\end{equation}
which gives us the final result:
\begin{equation}
M(\Upsilon_\mathbf{T}) \geq M(\Upsilon_{\llbracket \mathbf{T}|\mathbf{Z}}),
\end{equation}
for free superprocesses $\mathbf{Z}$, proving that $M$ is a monotone.
\end{proof}

The main difficulty associated with this family of monotones is specifying a $D(\cdot,\cdot)$ which is contractive under the action of superprocesses. Considering process tensors as states in the Choi representation, a superprocess is no more than a specific kind of CPTP map between states. As such, state distance measures which are contractive for all CPTP maps will automatically satisfy our needs. The trace distance and relative entropy distance on Choi states are two suitable choices in this regard~\cite{beyondstrong}. A quantum strategy approach~\cite{resourcetheoryofasymmetricdistinguishability} has yielded a similar class of multi-time measures called generalised quantum strategy divergences. These coincides with our monotones when the optimisation is restricted to control sequences whose deterministic action has a Choi state proportional to the identity operator.

Another notion of quantifying resources is \emph{robustness}~\cite{review}, which measures how much a resource can be mixed with others while remaining non-free. Robustness measures provide an avenue for linking resource theories to operational tasks such as channel discrimination~\cite{PhysRevLett.122.140402,PhysRevLett.122.140403}.
\begin{theorem} \label{thm:monotone2}
Given a resource theory of multi-time processes, the global robustness $R$ of a process $\mathbf{T} \in \mathsf{T}$ using a state definition of robustness applied to the Choi state $\Upsilon_{\mathbf{T}}$,
\begin{equation} \label{eq:robustness}
    R(\mathbf{T}) = \inf_{r \in \mathbb{R}_{\geq 0}} \left\{ r  \ : \ \exists \Upsilon_{\mathbf{\Gamma}} \in \mathsf{T} \ \text{s.t.} \ \frac{\Upsilon_{\mathbf{T}} + r \Upsilon_{\mathbf{\Gamma}}}{1+r} \in \mathsf{T}_\text{F} \right\}
\end{equation}
is a monotone. $\mathsf{T}_\text{F}$ is the set of free processes in the resource theory. For global robustness, $\mathbf{\Gamma}$ is any process in $\mathsf{T}$, but the absolute robustness can equally be defined by letting $\mathbf{\Gamma} \in \mathsf{T}_\text{F}$.
\end{theorem}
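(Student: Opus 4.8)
The plan is to exploit two structural facts established earlier: first, that the dual action of a superprocess on the Choi representation is simply a CPTP map between Choi states (noted in the discussion following Theorem~\ref{thm:monotones}), hence linear and affine; and second, that a \emph{free} superprocess sends free processes to free processes and valid process tensors to valid process tensors (the defining closure property of the theory, the ``golden rule'' invoked in Sec.~\ref{sec:RTdef}). Given these, monotonicity of the global robustness should follow by taking any decomposition that witnesses $R(\mathbf{T})$ and pushing it through $\mathbf{Z}$: affineness lets the transformation commute with the convex mixture, while freeness preservation guarantees that the transformed mixture is still free and that the transformed noise process is still an admissible element of $\mathsf{T}$.

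Concretely, I would begin by fixing any $r \geq 0$ that is feasible in the definition of $R(\mathbf{T})$, i.e.\ for which there exists $\Upsilon_{\mathbf{\Gamma}} \in \mathsf{T}$ with $(\Upsilon_{\mathbf{T}} + r\,\Upsilon_{\mathbf{\Gamma}})/(1+r) \in \mathsf{T}_\text{F}$. Writing $\Phi_\mathbf{Z}$ for the CPTP map on Choi states dual to the free superprocess $\mathbf{Z}$, its affineness gives
\begin{equation}
\Phi_\mathbf{Z}\!\left(\frac{\Upsilon_{\mathbf{T}} + r\,\Upsilon_{\mathbf{\Gamma}}}{1+r}\right) = \frac{\Upsilon_{\llbracket \mathbf{T}|\mathbf{Z}} + r\,\Upsilon_{\llbracket \mathbf{\Gamma}|\mathbf{Z}}}{1+r}.
\end{equation}
Because $\mathbf{Z}$ is free, the left-hand side is the image of a free process and therefore lies in $\mathsf{T}_\text{F}$, while $\Upsilon_{\llbracket \mathbf{\Gamma}|\mathbf{Z}}$ remains a valid process tensor in $\mathsf{T}$. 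Hence the \emph{same} value $r$ is feasible in the definition of $R(\llbracket \mathbf{T}|\mathbf{Z})$, so $R(\llbracket \mathbf{T}|\mathbf{Z}) \leq r$. Taking the infimum over all feasible $r$ for $\mathbf{T}$ yields $R(\llbracket \mathbf{T}|\mathbf{Z}) \leq R(\mathbf{T})$, which is the monotone property.

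A couple of points need care in turning this sketch into a proof. Since the infimum in Eq.~\eqref{eq:robustness} need not be attained, I would phrase the argument as ``for every feasible $r$ of $\mathbf{T}$, $r$ is feasible for $\llbracket \mathbf{T}|\mathbf{Z}$,'' which directly gives the inequality between infima without assuming a minimiser (alternatively an $\varepsilon$-argument achieves the same). The absolute-robustness variant is handled identically, except that one additionally needs $\Upsilon_{\mathbf{\Gamma}} \in \mathsf{T}_\text{F}$ to imply $\Upsilon_{\llbracket \mathbf{\Gamma}|\mathbf{Z}} \in \mathsf{T}_\text{F}$, which is again just freeness preservation applied to the noise term.

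The main obstacle is not the convexity manipulation but justifying the two closure statements cleanly: that the dual action of a free superprocess (i) is genuinely affine on the convex set of Choi states, so that it distributes over the mixture, and (ii) maps $\mathsf{T}_\text{F}$ into $\mathsf{T}_\text{F}$ and $\mathsf{T}$ into $\mathsf{T}$. Property (i) follows from the Choi correspondence of Appendix~\ref{appchoi} together with the fact that a superprocess acts as a CPTP map on Choi states, so it is essentially inherited for free; property (ii) is precisely the structural requirement built into the definition of free superprocesses, so the theorem ultimately reduces to verifying that these hypotheses are in force rather than to any substantive computation.
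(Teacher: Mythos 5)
Your proposal is correct and follows essentially the same route as the paper's proof: push a feasible decomposition through the free superprocess using linearity, observe that the free mixture stays free and the noise term stays a valid process, and conclude the same $r$ remains feasible. Your phrasing in terms of ``every feasible $r$'' is in fact slightly more careful than the paper's, which tacitly assumes the infimum in Eq.~\eqref{eq:robustness} is attained.
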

\begin{proof}
The definition of robustness for process tensor $\mathbf{T}$ means that there exists a $\Upsilon_{\mathbf{\Gamma}}$ with $r=R(\mathbf{T})$ and 
\begin{equation}
    \frac{\Upsilon_{ \mathbf{T} } + r \Upsilon_{\mathbf{\Gamma}}}{1+r} := \Upsilon_{\mathbf{\Xi}} \in \mathsf{T}_\text{F} .
\end{equation}
Robustness of a process tensor after application of a free superprocess is
\begin{equation}
    R(\llbracket \mathbf{T} | \mathbf{Z}) = \inf_{p \in \mathbb{R}_{\geq 0}} \left\{ p  \ : \ \exists \Upsilon_{\mathbf{\Omega}} \in \mathsf{T} \ \text{s.t.} \ \frac{\Upsilon_{\llbracket \mathbf{T} | \mathbf{Z}} + p \Upsilon_{\mathbf{\Omega}}}{1+p} \in \mathsf{T}_\text{F} \right\}.
\end{equation}
Let $\Upsilon_{\mathbf{\Omega}}=\Upsilon_{\llbracket \mathbf{\Gamma} | \mathbf{Z} }$ and $p=r$; the linearity of superprocesses implies that
\begin{equation}
     \frac{\Upsilon_{\llbracket \mathbf{T} | \mathbf{Z}} + r \Upsilon_{\llbracket \mathbf{\Gamma} | \mathbf{Z} }}{1+r} =\Upsilon_{\llbracket \mathbf{\Xi} | \mathbf{Z}} \in \mathsf{T}_\text{F}.
\end{equation}
Hence, the robustness of a process tensor after applying a free superprocess is not greater than beforehand
\begin{equation}
    R( \Upsilon_{\llbracket \mathbf{T} | \mathbf{Z}}) \leq  R(\Upsilon_{ \mathbf{T}}).
\end{equation}
\end{proof}

The key difference between this definition and that for the robustness of general quantum states is that -- due to the causality trace conditions a process tensor must satisfy, $\mathsf{T}$ is a smaller set than that of states~\cite{quantumnetworks}.
A quantum state $\Upsilon_{n:0} \in \mathcal{B}(\mathcal{H}_{0^{\mathtt{i}}} \otimes \mathcal{H}_{1^{\mathtt{o}}} \dots \mathcal{H}_{{n-1}^{\mathtt{i}}} \otimes \mathcal{H}_{n^{\mathtt{o}}} )$, $\Upsilon_{n:0}$ is a deterministic quantum network -- and hence the Choi state of a valid process tensor -- iff it satisfies the hierarchy of trace conditions 
\begin{equation} \label{eq:tracecondition}
\text{tr}_{j^{\mathtt{o}}} \left\{ \Upsilon_{j:0}  \right\} = \mathbbm{1}_{{j-1}^{\mathtt{i}}} \otimes \Upsilon_{j-1:0}  , \ \ \forall j \leq n.
\end{equation}
One might assume that since there is a smaller range of valid process tensors than general states to mix with $\Upsilon_{\mathbf{T}}$, a larger $r$ might be required to find a suitable $\Upsilon_{\mathbf{\Gamma}}$ in Eq.~\eqref{eq:robustness}. However, we show that this is not true: when the result must be a valid process tensor (which free processes always are), $\Upsilon_{\mathbf{\Gamma}}$ must be as well. 

\begin{lemma} \label{lem:closuretracecondition}
Given the Choi state of a process tensor $\Upsilon_{n:0} \in \mathcal{B}(\mathcal{H}_{1^{\mathtt{i}}} \otimes \mathcal{H}_{1^{\mathtt{o}}} \dots \mathcal{H}_{n^{\mathtt{i}}} \otimes \mathcal{H}_{n^{\mathtt{o}}} )$, if a convex mixture of $\Upsilon_{n:0}$ with an arbitrary quantum state $\Delta_{n:0}$ 
\begin{equation}
    \frac{\Upsilon_{n:0} + r \Delta_{n:0}}{1+r} = \Theta_{n:0}
\end{equation}
is a valid process tensor, then $\Delta_{n:0}$ satisfies Eq.~\eqref{eq:tracecondition}, implying that $\Delta_{n:0}$ is also a valid process tensor.
\end{lemma}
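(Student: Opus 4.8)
The plan is to lean entirely on the linearity of the causality constraints \eqref{eq:tracecondition}, together with the elementary fact that taking a reduced (marginal) state commutes with convex combination. I would start by rewriting the hypothesis as $(1+r)\Theta_{n:0} = \Upsilon_{n:0} + r\Delta_{n:0}$, and then trace out the last $n-j$ steps. Because partial trace is linear, the marginals obey the very same relation at every level, $(1+r)\Theta_{j:0} = \Upsilon_{j:0} + r\Delta_{j:0}$ for all $j \leq n$, where $\Delta_{j:0}$ is defined as the marginal of $\Delta_{n:0}$ on the first $j$ steps. The point of this first step is that it lets me treat each level of the hierarchy independently, rather than having to unwind the recursion in \eqref{eq:tracecondition}.

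Next I would invoke the trace conditions for the two objects already known to be valid process tensors, namely $\Upsilon$ (by assumption) and $\Theta$ (the mixture, valid by hypothesis). Forming the combination $(1+r)\times[\text{condition for }\Theta] - [\text{condition for }\Upsilon]$ at level $j$ and applying $\text{tr}_{j^{\mathtt{o}}}$ to the level-$j$ marginal relation gives
\begin{equation}
r\,\text{tr}_{j^{\mathtt{o}}}\{\Delta_{j:0}\} = \mathbbm{1}_{{j-1}^{\mathtt{i}}} \otimes \big[(1+r)\Theta_{j-1:0} - \Upsilon_{j-1:0}\big] = \mathbbm{1}_{{j-1}^{\mathtt{i}}} \otimes \big[r\,\Delta_{j-1:0}\big],
\end{equation}
where the last equality is just the level-$(j-1)$ marginal relation from the first step. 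Dividing by $r$ (the case $r=0$ being trivial) yields $\text{tr}_{j^{\mathtt{o}}}\{\Delta_{j:0}\} = \mathbbm{1}_{{j-1}^{\mathtt{i}}} \otimes \Delta_{j-1:0}$ for every $j \leq n$, which is precisely \eqref{eq:tracecondition} for $\Delta_{n:0}$.

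Finally I would assemble the two ingredients that promote $\Delta_{n:0}$ to a valid process tensor: it is positive semidefinite because it was assumed to be a quantum state, and it satisfies the full hierarchy of trace conditions by the previous step; the correct overall normalisation is forced automatically, since taking the full trace of the defining relation gives $\text{tr}\{\Delta_{n:0}\} = \text{tr}\{\Upsilon_{n:0}\}$. By the characterisation stated around \eqref{eq:tracecondition}, this is exactly what it means to be a deterministic quantum network, so $\Delta_{n:0}$ is a \emph{valid} process tensor. I do not expect a genuine obstacle: the whole content is the observation that the causality constraints are linear and hence survive the affine rearrangement $\Delta = \tfrac{1}{r}\big((1+r)\Theta - \Upsilon\big)$. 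The only point needing care is the recursive form of \eqref{eq:tracecondition} — one must confirm that the $\Delta_{j-1:0}$ appearing on the right is genuinely the marginal of $\Delta_{j:0}$ — but this is guaranteed by the commutation of partial trace with linear combinations established at the outset, so the argument closes without any circularity or induction.
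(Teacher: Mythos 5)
Your proof is correct and follows essentially the same route as the paper's: exploit the linearity of the partial trace to combine the trace conditions for $\Theta$ and $\Upsilon$ and isolate the one for $\Delta$. If anything, you are slightly more careful than the paper in explicitly identifying $\frac{(1+r)\Theta_{j-1:0}-\Upsilon_{j-1:0}}{r}$ with the marginal $\Delta_{j-1:0}$ (via the level-$(j-1)$ mixture relation), which is the detail needed for the recursive form of Eq.~\eqref{eq:tracecondition} to close.
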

\begin{proof}
We consider the convex mixture 
\begin{equation}
    \frac{\Upsilon_{n:0} + r \Delta_{n:0}}{1+r} = \Theta_{n:0} \in \mathsf{T}.
\end{equation}
Applying the first trace condition to $\Theta_{n:0}$ yields
\begin{equation}
    \text{tr}_{n^{\mathtt{o}}} \left\{ \Theta_{n:0}  \right\} = \frac{1}{1+r} \left(\mathbbm{1}_{{n-1}^{\mathtt{i}}} \otimes \Upsilon_{n-1:0}  +r  \text{tr}_{n^{\mathtt{o}}} \left\{ \Delta_{n:0}  \right\} \right) = \mathbbm{1}_{{n-1}^{\mathtt{i}}} \otimes \Theta_{n-1:0}.
\end{equation}
Rearranging to collect the terms which satisfy Eq.~\eqref{eq:tracecondition} reveals
\begin{equation}
         \text{tr}_{n^{\mathtt{o}}} \left\{ \Delta_{n:0}  \right\}  = \mathbbm{1}_{{n-1}^{\mathtt{i}}} \otimes \frac{ (1+r)  \  \Theta_{n-1:0} -  \Upsilon_{n-1:0}}{r},
\end{equation}
which also satisfies Eq.~\eqref{eq:tracecondition}.

This argument can be repeated for all $j \leq n$, fulfilling the hierarchy, and implying that $\Delta_{n:0}$ is a valid process tensor.
\end{proof}
This lemma ensures that the value of the robustness of process tensors is always equal to the robustness of their corresponding Choi states.

Additionally, the robustness of a process tensor can be linked to a member of our $M$ family of monotones -- the max-relative entropy to the set of free processes -- via a similar argument to that previously employed for entanglement~\cite{maxrelativeentropyofentanglment}, combined with Lem~\ref{lem:closuretracecondition}. 
\begin{theorem} \label{thm:maxrelativeentropy}
The max-relative entropy of a process tensor $\mathbf{T}$ to the set of free processes equals its global log robustness
\begin{equation}
    \log\left(1+ R(\mathbf{T})\right) = \inf_{\Upsilon_{\mathbf{\Xi}} \in \mathsf{T}_\text{F}} D_\text{max} (\Upsilon_{\mathbf{T}}||\Upsilon_{\mathbf{\Xi}}).
\end{equation}
\end{theorem}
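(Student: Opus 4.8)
The plan is to prove the two inequalities separately using the operational form of the max-relative entropy, $D_\text{max}(\rho\|\sigma) = \log\min\{\lambda : \rho \leq \lambda\sigma\}$, following the state-based argument of Ref.~\cite{maxrelativeentropyofentanglment} but inserting Lemma~\ref{lem:closuretracecondition} at the one step where the process-tensor constraints matter. Throughout I would use that every valid process tensor of a fixed type shares the same trace, a direct consequence of the trace hierarchy Eq.~\eqref{eq:tracecondition} (since the feasible sets are compact in finite dimensions, the relevant infima are attained).

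For the inequality $\inf_{\Upsilon_{\mathbf{\Xi}} \in \mathsf{T}_\text{F}} D_\text{max}(\Upsilon_{\mathbf{T}}\|\Upsilon_{\mathbf{\Xi}}) \leq \log(1+R(\mathbf{T}))$, I would take an optimal decomposition from Eq.~\eqref{eq:robustness} with $r = R(\mathbf{T})$, so that $\Upsilon_{\mathbf{\Xi}} = (\Upsilon_{\mathbf{T}} + r\Upsilon_{\mathbf{\Gamma}})/(1+r) \in \mathsf{T}_\text{F}$ for some $\Upsilon_{\mathbf{\Gamma}} \in \mathsf{T}$. Since $r\Upsilon_{\mathbf{\Gamma}} \geq 0$, rearranging gives $\Upsilon_{\mathbf{T}} \leq (1+r)\Upsilon_{\mathbf{\Xi}}$, whence $D_\text{max}(\Upsilon_{\mathbf{T}}\|\Upsilon_{\mathbf{\Xi}}) \leq \log(1+r)$. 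As $\Upsilon_{\mathbf{\Xi}}$ is free, the infimum over free process tensors is bounded above by $\log(1+R(\mathbf{T}))$.

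The reverse inequality is where the process-tensor structure enters, and this is the step I expect to be the crux. I would fix an arbitrary free $\Upsilon_{\mathbf{\Xi}} \in \mathsf{T}_\text{F}$ and set $\lambda$ to be the minimal scalar with $\Upsilon_{\mathbf{T}} \leq \lambda\Upsilon_{\mathbf{\Xi}}$, so $\log\lambda = D_\text{max}(\Upsilon_{\mathbf{T}}\|\Upsilon_{\mathbf{\Xi}})$. If $\lambda = 1$ then $\Upsilon_{\mathbf{\Xi}} - \Upsilon_{\mathbf{T}} \geq 0$ is traceless and hence zero, so $\Upsilon_{\mathbf{T}}$ is itself free and both sides vanish. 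For $\lambda > 1$ I would define the candidate noise operator $\tau = (\lambda\Upsilon_{\mathbf{\Xi}} - \Upsilon_{\mathbf{T}})/(\lambda - 1)$, which is positive and, because $\Upsilon_{\mathbf{T}}$ and $\Upsilon_{\mathbf{\Xi}}$ carry equal trace, has exactly the trace of a process tensor; by construction $(\Upsilon_{\mathbf{T}} + (\lambda-1)\tau)/\lambda = \Upsilon_{\mathbf{\Xi}} \in \mathsf{T}_\text{F}$.

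The obstacle is that Eq.~\eqref{eq:robustness} demands the noise term be a genuine process tensor in $\mathsf{T}$, whereas $\tau$ is a priori only a positive operator of the correct trace. This is precisely what Lemma~\ref{lem:closuretracecondition} resolves: since the convex mixture $(\Upsilon_{\mathbf{T}} + (\lambda-1)\tau)/\lambda$ equals the valid process tensor $\Upsilon_{\mathbf{\Xi}}$ and $\Upsilon_{\mathbf{T}}$ is a valid process tensor, $\tau$ must satisfy the hierarchy Eq.~\eqref{eq:tracecondition} and so qualifies as some $\Upsilon_{\mathbf{\Gamma}} \in \mathsf{T}$. Taking $r = \lambda - 1$ then exhibits an admissible robustness decomposition, giving $\log(1+R(\mathbf{T})) \leq \log\lambda = D_\text{max}(\Upsilon_{\mathbf{T}}\|\Upsilon_{\mathbf{\Xi}})$; minimising over free $\Upsilon_{\mathbf{\Xi}}$ and combining with the first bound yields the stated equality.
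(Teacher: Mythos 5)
Your proof is correct and takes essentially the same route as the paper's: both reduce the robustness to the unconstrained infimum $\inf\{r : \Upsilon_{\mathbf{T}} \leq (1+r)\Upsilon_{\mathbf{\Xi}},\ \Upsilon_{\mathbf{\Xi}} \in \mathsf{T}_\text{F}\}$ by invoking Lemma~\ref{lem:closuretracecondition} to certify that the noise operator $(\lambda\Upsilon_{\mathbf{\Xi}} - \Upsilon_{\mathbf{T}})/(\lambda-1)$ is a valid process tensor, and then identify that infimum with $D_\text{max}$. Your write-up is merely more explicit than the paper's (separating the two inequalities, handling the $\lambda=1$ edge case, and checking the trace of $\tau$), but the underlying argument is the same.
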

\begin{proof}
The membership of $\frac{\Upsilon_{\mathbf{T}} + r \Upsilon_{\mathbf{\Gamma}}}{1+p} := \Upsilon_{\mathbf{\Xi}} \in \mathsf{T}_\text{F} $ implies that $R(\mathbf{T})=r$ defines an upper bound on $\Upsilon_{\mathbf{T}}$
\begin{equation} \label{eq:robustnessupperbound}
    R(\mathbf{T}) = \inf_{r \in \mathbb{R}_{\geq 0}} \left\{ r  \ : \ \exists \Upsilon_{\mathbf{\Xi}} \in \mathsf{T}_\text{F} \ \text{s.t.} \ \Upsilon_{\mathbf{T}}  \leq (1+r) \Upsilon_{\mathbf{\Xi}} \ , \ \Upsilon_{\mathbf{\Xi}} \in \mathsf{T}_\text{F} \right\}.
\end{equation}
In Eq.~\eqref{eq:robustnessupperbound}, Lem.~\ref{lem:closuretracecondition} guarantees that there is no case where $R(\mathbf{T})>r$.

Therefore the global robustness of a process tensor can be related to to max-relative entropy,
\begin{equation}
\begin{aligned}
   \log\left(1+ R(\mathbf{T})\right) & = \inf_{r \in \mathbb{R}_{\geq 0}} \left\{ \log(1+r)  \ : \ \exists \Upsilon_{\mathbf{\Xi}} \in \mathsf{T}_\text{F} \ \text{s.t.} \ \Upsilon_{\mathbf{T}}  \leq (1+r) \Upsilon_{\mathbf{\Xi}}  \right\} \\
   & = \inf_{\Upsilon_{\mathbf{\Xi}} \in \mathsf{T}_\text{F}} \log \left(  \inf_{r \in \mathbb{R}_{\geq 0}} \left\{ 1+r  \ : \ \text{s.t.} \ \Upsilon_{\mathbf{T}}  \leq (1+r) \Upsilon_{\mathbf{\Xi}}  \right\}\right) \\
& = \inf_{\Upsilon_{\mathbf{\Xi}} \in \mathsf{T}_\text{F}} D_\text{max} (\Upsilon_{\mathbf{T}}||\Upsilon_{\mathbf{\Xi}}).
   \end{aligned}
\end{equation}
\end{proof}

Observe that the max-relative entropy to the nearest free process satisfies the contractivity condition of Thm.~\ref{thm:monotones}, and hence is an example of this $M$ type of monotone. Additionally, max-relative entropy has been shown to be predictive of success in resource dilution for entanglement~\cite{maxrelativeentropyofentanglment}. Potentially analogous results might hold for non-Markovianity.

There are many more monotones which could be investigated for resource theories of quantum processes. For channel resource theories, a common monotone involves optimising some quantity over all possible input states~\cite{approachesfortheapproximateaddativityoftheholevo,operationalresourcetheoryofquantumchannels, resourcetheoriesofquantumchannels}. The analogous family of monotones for process theories would correspond to an optimisation over control operations instead of initial states. We expect that numerous operationally important monotones would arise from different choices for the quantity which is optimised. A particularly promising avenue is to optimise the information retained about past states of the system. Such a monotone would correspond to how well an agent can preserve quantum information, given a background process with potentially non-Markovian noise.

Now that we have formally defined resource theories of multi-time processes, we are able to apply them to specific classes of operational scenarios.  For the remainder of this paper, we will consider a general set of restrictions on communication in time, and construct the corresponding family of resource theories, the monotones of which (as proposed in Thm.~\ref{thm:monotones} and Thm.~\ref{thm:monotone2}) typically are various forms of non-Markovian memory.

\section{Resource Theories of Processes with Restricted Communication}
\label{communicationtheories}
One of the most natural constraints on the structure of the superprocess is the connectivity of the ancillary subsystem, which may be restricted by constraining the information that flows between elementary maps $\mathcal{V}$ and $\mathcal{W}$ in Fig.~\ref{superprocessactingonprocesstensor}. Restricted communication within the superprocess can be manifested in numerous kinds of operational scenarios. We make use of the client-contractor-server metaphor, first introduced in Sec.~\ref{superprocesssection}, to emphasise the  important features of each scenario. Here, a \textit{client}, interacting with a \textit{server}, can only perform control operations on the system without access to a memory bearing subsystem, classical or quantum. However, the client can enlist a \textit{contractor} to perform tasks requiring memory. As such, the contractor, who facilitates communication between the client and the server, is described by a superprocess.

In this setting, the contractor can only be useful to the client when their actions are more powerful than those of the client. Hence, while there is some freedom in choosing the resource theory of the client, their allowed operations must always be a subset of those permitted for the contractor. This can always be guaranteed if one chooses to take the actions of the client as local in time and uncorrelated, which is what we specify here. The contractor can then use their operations to interact with a \textit{server}, corresponding to the process tensor. The set of free transformations on process tensors corresponds to the range of services that the contractor can provide the client. For instance, the contractor may charge a premium price for a service that involves quantum memory, and a lower price for only classical memory.

\subsection{Communication Maps} \label{commmaps}

The communication restrictions on the superprocess can be set by replacing the identity channels between $\mathcal{V}$ and $\mathcal{W}$ in Fig.~\ref{superprocessactingonprocesstensor} by channels $\mathcal{C}$ and $\mathcal{K}$, as depicted in Fig.~\ref{superprocesswithcommunication}. The particular forms of these \textit{communication maps} dictates what kind of information is able to propagate through the ancillary subsystems, i.e., systematically restrict or allow information transfer between different points in time. In this setting, the client solely acts on subsystem $\s'$ with ${\mathcal{A}'}_{\alpha}^{\s'}$. The communication maps $\mathcal{C}^{\zsp}$ and $\mathcal{K}^{\zs}$ (as well as $\mathcal{W}^{\szsp}$ and $\mathcal{V}^{\szsp}$) make up the constrained superprocess. Hence, responsibility of communication and performing joint system-ancilla operations are delegated to the superprocess (contractor). Specifying this type of resource theory for quantum processes reduces to specifying the class of allowed $\mathcal{K}^{\zs}_{\alpha}$ and $\mathcal{C}^{\zsp}_{\alpha+1:\alpha}$.

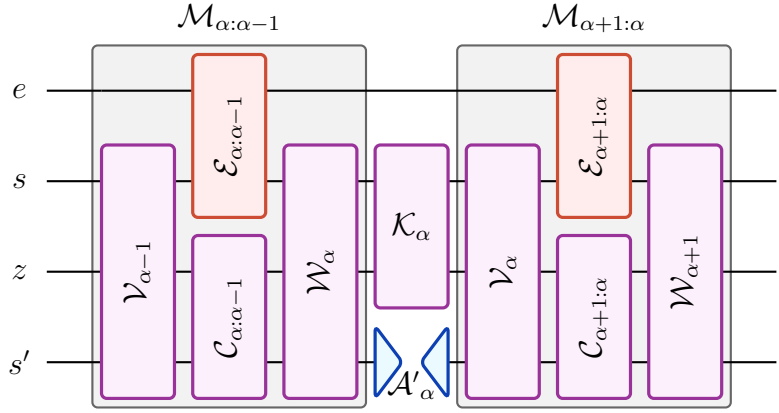
\begin{figure*} 
\centering
\begin{tikzpicture}[scale=1.2]

\draw[mygrey,fill=mygreyfill,thick,solid,rounded corners=2] (7,-2) rectangle (10,2);

\draw[mygrey,fill=mygreyfill,thick,solid,rounded corners=2] (11,-2) rectangle (14,2);

\draw[black,  thick,solid] (12.9,1.5) -- (14.5,1.5);
\draw[black,  thick,solid] (13.9,0.5) -- (14.5,0.5);
\draw[black,  thick,solid] (8.9,1.5) -- (12.1,1.5);

\draw[black,  thick,solid] (6.5,1.5) -- (7.1,1.5);
\draw[black,  thick,solid] (7.1,1.5) -- (8.1,1.5);

\draw[black,  thick,solid] (6.5,0.5) -- (7.1,0.5);
\draw[black,  thick,solid] (7.9,0.5) -- (8.1,0.5);
\draw[black,  thick,solid] (8.9,0.5) -- (9.1,0.5);
\draw[black,  thick,solid] (9.9,0.5) -- (10.1,0.5);
\draw[black,  thick,solid] (10.9,0.5) -- (11.1,0.5);
\draw[black,  thick,solid] (11.9,0.5) -- (12.1,0.5);
\draw[black,  thick,solid] (12.9,0.5) -- (13.1,0.5);

\draw[black,  thick,solid] (6.5,-0.5) -- (7.1,-0.5);
\draw[black,  thick,solid] (7.9,-0.5) -- (9.1,-0.5);
\draw[black,  thick,solid] (9.9,-0.5) -- (11.1,-0.5);
\draw[black,  thick,solid] (11.9,-0.5) -- (13.1,-0.5);
\draw[black,  thick,solid] (13.9,-0.5) -- (14.5,-0.5);

\draw[black,  thick,solid] (10.7,-1.5) -- (14.5,-1.5);
\draw[black,  thick,solid] (6.5,-1.5) -- (10.3,-1.5);

\draw[mypurple,fill=mypurplefill,very thick,solid,rounded corners=2] (8.1,-1.9) rectangle (8.9,-0.1);
\draw[mypurple,fill=mypurplefill,very thick,solid,rounded corners=2] (12.1,-1.9) rectangle (12.9,-0.1);
\draw[mypurple,fill=mypurplefill,very thick,solid,rounded corners=2] (10.1,-0.9) rectangle (10.9,0.9);
\draw[myblue,fill=mybluefill,very thick,solid,rounded corners=2] (10.1,-1.9) -- (10.1,-1.1) -- (10.4,-1.5) -- cycle;
\draw[myblue,fill=mybluefill,very thick,solid,rounded corners=2] (10.9,-1.9) -- (10.9,-1.1) -- (10.6,-1.5) -- cycle;
\draw[myred,fill=myredfill,very thick,solid,rounded corners=2] (8.1,1.9) rectangle (8.9,0.1);
\draw[myred,fill=myredfill,very thick,solid,rounded corners=2] (12.1,1.9) rectangle (12.9,0.1);

\draw[mypurple,fill=mypurplefill,very thick,solid,rounded corners=2] (7.1,0.9) rectangle (7.9,-1.9);
\draw[mypurple,fill=mypurplefill,very thick,solid,rounded corners=2] (9.1,0.9) rectangle (9.9,-1.9);
\draw[mypurple,fill=mypurplefill,very thick,solid,rounded corners=2] (11.1,0.9) rectangle (11.9,-1.9);
\draw[mypurple,fill=mypurplefill,very thick,solid,rounded corners=2] (13.1,0.9) rectangle (13.9,-1.9);

\draw[] (8.5,1) node[rotate=90] {\large  $\mathcal{E}_{\alpha:\alpha-1}$};
\draw[] (12.5,1) node[rotate=90] {\large  $\mathcal{E}_{\alpha+1:\alpha}$};
\draw[] (10.5,-1.75) node {\large  ${\mathcal{A}'}_{\alpha}$};
\draw[] (10.5,0) node {\large  $\mathcal{K}_{\alpha}$};

\draw[] (7.5,-0.5) node[rotate=90] {\large  $\mathcal{V}_{\alpha-1}$};
\draw[] (9.5,-0.5) node[rotate=90] {\large  $\mathcal{W}_{\alpha}$};
\draw[] (11.5,-0.5) node[rotate=90] {\large $\mathcal{V}_{\alpha}$};
\draw[] (13.5,-0.5) node[rotate=90] {\large $\mathcal{W}_{\alpha+1}$};

\draw[] (8.5,-1) node[rotate=90] {\large  $\mathcal{C}_{\alpha:\alpha-1}$};
\draw[] (12.5,-1) node[rotate=90] {\large  $\mathcal{C}_{\alpha+1:\alpha}$};

\draw[] (8.5,2.3) node[rotate=0] {\large  $\mathcal{M}_{\alpha:\alpha-1}$};
\draw[] (12.5,2.3) node[rotate=0] {\large  $\mathcal{M}_{\alpha+1:\alpha}$};

\draw[] (6.2,1.5) node {\large $\e$};
\draw[] (6.2,0.5) node {\large $\s$};
\draw[] (6.2,-0.5) node {\large $\z$};
\draw[] (6.2,-1.5) node {\large $\s'$};

\end{tikzpicture}
\caption{A representative subsection of the dynamics in a resource theory induced by restricting communication. Blue corresponds to the control sequence (client), purple corresponds to the superprocess (contractor), and red corresponds to the process tensor (server). In this physical scenario, the control sequence is local in time and uncorrelated (enabling the omission of $\an'$ from this diagram), while the superprocess can carry a memory correlated to $s$ via the ancillary subsystems. The utility of the superprocess at these tasks is determined by the form of the communication maps $\mathcal{C}$ and $\mathcal{K}$. $\mathcal{K}$ determines inter-step communication in parallel to a memoryless $\mathcal{A}$, while $\mathcal{C}$ accounts for intra-step communication. The grey shaded areas represent a decomposition of the dynamics into segments, as used in Sec.~\ref{Results}. The $\mathcal{K}$ type communication mediates what is passed between each $\mathcal{M}$ in the same way that the $\mathcal{C}$ type communication mediates what is passed between each $\mathcal{V}$ and $\mathcal{W}$.
}
\label{superprocesswithcommunication}
\end{figure*}

We now use communication maps to enumerate the classes of allowed operations in our resource theories. There are three classes $\{ \emptyset , \EB, \Q \}$ we will consider for each of $\mathcal{C}^{\zsp}_{\alpha+1:\alpha}$ and $\mathcal{K}^{\zs}_{\alpha}$. The choices are: $\emptyset$ no communication, $\EB$ entanglement breaking communication (slightly more general than strictly classical communication~\cite{eb}), and $\Q$ any quantum channel. These classes satisfy $\emptyset \subset \EB \subset \Q$.
We will study the resource theory induced by each combination of $\mathcal{C}^{\zsp}_{\alpha+1:\alpha}$ and $\mathcal{K}^{\zs}_{\alpha}$, resulting in an operational hierarchy of nine theories. For the sake of brevity, we will use $\mathcal{X}^x$ acting on operators on the Hilbert space $x$ as a placeholder for either $\mathcal{C}^{\zsp}_{\alpha+1:\alpha}$ or $\mathcal{K}^{\zs}_{\alpha}$ in the following cases.

\textbf{(}\textbf{$\emptyset$)} To account for the absence of communication, $\mathcal{X}^x$ consists of discarding the current state, and preparing an arbitrary fixed state. The form of $\mathcal{X}^x$ is 
\begin{equation} 
\begin{aligned}
\text{for} \quad \mathcal{X}^x \in \emptyset \quad \mathcal{X}^x[\rho^x]  =\text{tr} \{ \rho^x \} \tau^x,
\end{aligned}
\end{equation}
where $\tau^x$ is some arbitrary state which the input is erased to. We refer to this as a fixed output channel. 

\textbf{(}\textbf{$\EB$)} Entanglement breaking communication corresponds operationally to a classical agent who may only interact with the quantum system by state measurement and preparation. An entanglement breaking channel can always be represented by a positive-operator valued measure (POVM) measurement with a subsequent (possibly correlated) state re-preparation~\cite{eb}. Here, $\mathcal{X}^x$ is an entanglement breaking channel on $x$
\begin{equation} 
\begin{aligned}
\text{for} \quad \mathcal{X}^x \in \EB \quad \mathcal{X}^x[\rho^x]= \sum_k \nu^{x}_k \text{tr}\{ \Pi^{x}_k \rho^x \},
\end{aligned}
\end{equation}
where $\{\nu^{x}_k\}$ is a set of density operators and $\{\Pi^{x}_k\}$ defines a POVM~\cite{eb}. Entanglement breaking channels are more appropriate than fully classical channels in our context because, while they still only convey classical information, they are able admit quantum inputs and outputs.

\textbf{(}\textbf{$\Q$)} In the fully quantum case, $\mathcal{X}^x$ takes the form of any CPTP map acting on $x$: 
\begin{equation} 
\begin{aligned}
\text{for} \quad \mathcal{X}^x \in \Q \quad \mathcal{X}^x[\rho^x]= \sum_i X^{x}_i \rho^x {X^{x}_i}^\dagger,
\end{aligned}
\end{equation}
written in the Kraus form of a CPTP map on $x$ satisfying $ \sum_i {X^{x}_i}^\dagger X^{x}_i=1$ with the $\{X^{x}_i\}$ otherwise general~\cite{nc}. 

There may exist operational scenarios which not only allow for communication, but also pre-shared correlations. In these scenarios, in may be possible to carry types of information into the future which would be impossible with the specified type of communication alone. The most famous example of such a scenario is quantum teleportation: given pre-shared entanglement in a pair of qubits, the communication of two classical bits of information allows the state of one qubit to be teleported to the other~\cite{quantumentanglement}. If one allows for arbitrary amounts of classical communication (as in $\EB$) and supplementary entangled qubits, such an agent will be capable of any quantum communication, as in $\Q$. Consequently, this operational scenario should be classified as $\Q$. However, in the absence of any communication $\emptyset$, correlations cannot help the agent to perform new tasks.

In this subsection, we began by identifying where communication maps fit into the superprocess, and now we have enumerated three ways in which they might restrict the flow of information through time. Hence, the transformations in these resource theories can now be fully specified.

\subsection{Primitive Free Resources}
To complete the specification of our resource theories, one more ingredient is required: free resources. 

As discussed in Sec.~\ref{sec:RTdef}, a way to specify the set of free resources which elucidates their form is to identify a process which can be guaranteed to be producible for free under the operational framework, and apply the full range of allowed superprocesses. We call this starting point a \textit{primitive} free process. 

As such, in a resource theory formed by communication restrictions $(\mathcal{C},\mathcal{K})$, described by superprocess $\mathbf{Z}^{(\mathcal{C},\mathcal{K})}_{n:0}$, the set of free processes $\mathbf{T}^{(\mathcal{C},\mathcal{K})}_{n:0}$ are defined as the action of the superprocess on a primitive resource:
\begin{equation}  \label{eq:free}
 \mathbf{T}^{(\mathcal{C},\mathcal{K})}_{n:0} := \llbracket\mathbf{T}^{\text{prim}}_{n:0}|\mathbf{Z}^{(\mathcal{C},\mathcal{K})}_{n:0}.
\end{equation}
The complement of this set is the set of useful resources.

In all of the restricted communication resource theories, the natural choice for the primitive free process $\mathbf{T}^{\text{prim}}_{n:0}$ is an uncorrelated sequence of fixed output channels (plus an arbitrary initial state), whose Choi state is denoted by
\begin{equation}  \label{eq:2}
\begin{aligned}
 \mathbf{\Upsilon}^{\text{prim}}_{n:0} =  \bigotimes_{j=0}^{n-1} \left(  \tau_{j+1} \otimes \mathbbm{1}_j \right) \otimes \tau_0,
\end{aligned}
\end{equation}
where $\tau_j$ is an arbitrary state. Fixed output channels have zero information capacity, breaking any causal links in the environment between the past and the future. Thus, these processes can be considered to have no value or cost to agents.

\subsection{Enumeration of Theories} \label{Results}
We label a theory by the tuple $(\mathcal{C},\mathcal{K})$ where $\mathcal{C},\mathcal{K}\in \{ \emptyset , \EB, \Q \}$. The nine resultant theories are enumerated in Fig.~\ref{schematics}.

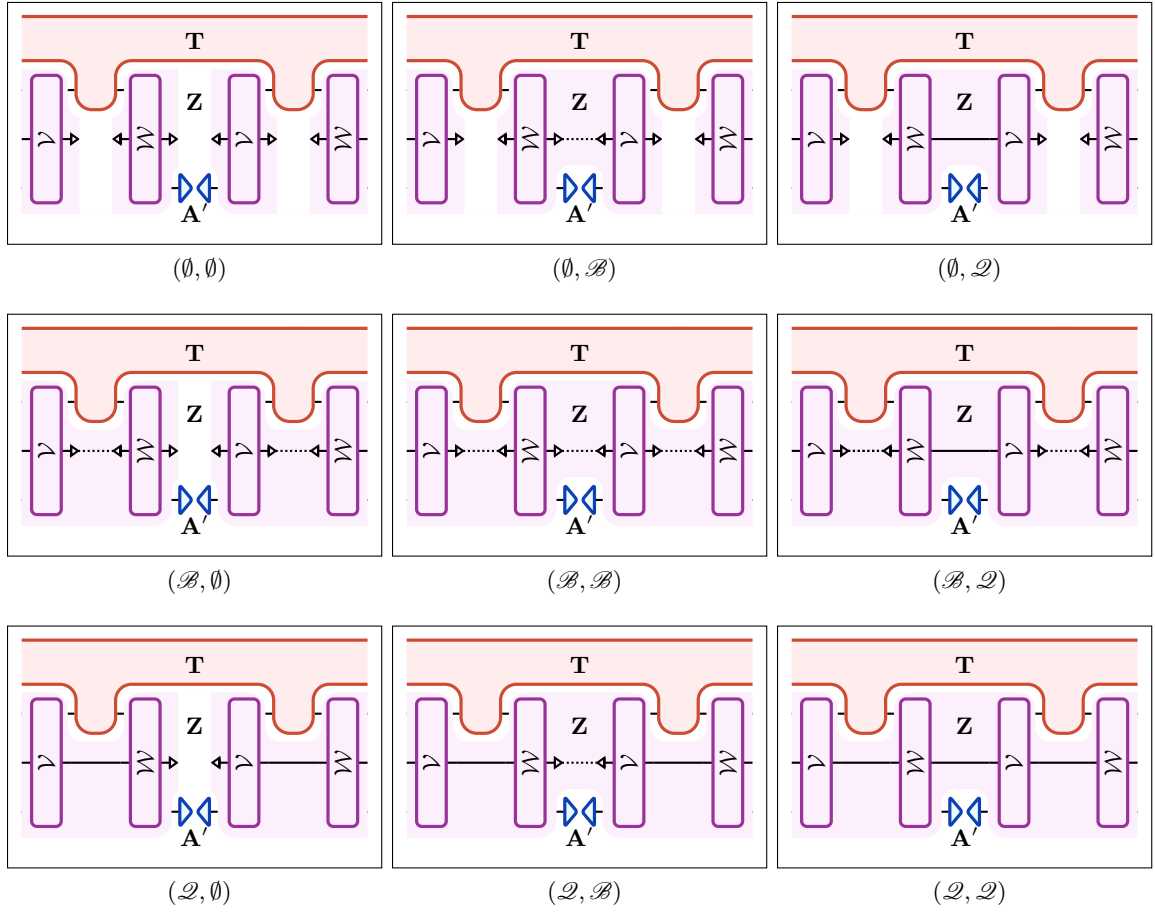
\begin{figure*} [htbp!]
\centering
\subfloat[$(\emptyset,\emptyset)$]
{\begin{tikzpicture}[scale=0.65,framed]

\draw[black,  thick,solid] (7,0.5) -- (14,0.5);
\draw[black,  thick,solid] (10.6,-1.5) -- (14,-1.5); \draw[black,  thick,solid] (7,-1.5) -- (10.4,-1.5);

\draw[myredfill,fill=myredfill, very thick,solid,rounded corners=6] (7,2) -- (12.9,2) -- (12.9,1.1) -- (12.9,0.1) -- (12.1,0.1) -- (12.1,1.1) -- (8.9,1.1) -- (8.9,0.1) -- (8.1,0.1) -- (8.1,1.1) -- (7,1.1);
\draw[myredfill,fill=myredfill, very thick,solid,rounded corners=6] (14,2) -- (12.1,2) -- (12.1,1.1) -- (14,1.1);
\draw[myred, very thick,solid,rounded corners=6] (7,2) -- (14,2);
\draw[myred, very thick,solid,rounded corners=6] (14,1.1) -- (12.9,1.1)-- (12.9,0.1) -- (12.1,0.1) -- (12.1,1.1) -- (8.9,1.1) -- (8.9,0.1) -- (8.1,0.1) -- (8.1,1.1) -- (7,1.1);

\draw[mypurplefill,fill=mypurplefill, very thick,solid,rounded corners=6] (14,0.9) -- (13.1,0.9)  -- (13.1,-0.1) -- (11.9,-0.1) -- (11.9,-2) -- (14,-2);
\draw[mypurplefill,fill=mypurplefill, very thick,solid,rounded corners=6] (7,-2) -- (10,-2) -- (10,-1) -- (11,-1) -- (11,-2) -- (13,-2) -- (13.1,-0.1) -- (11.9,-0.1) -- (11.9,0.9) -- (9.1,0.9) -- (9.1,-0.1) -- (7.9,-0.1) -- (7.9,0.9) -- (7,0.9);

\draw[black,  thick,solid] (7,-0.5) -- (8,-0.5);
\draw[black,  thick,solid] (9,-0.5) -- (10,-0.5);
\draw[black,  thick,solid] (11,-0.5) -- (12,-0.5);
\draw[black,  thick,solid] (13,-0.5) -- (14,-0.5);

\draw[black,  thick,solid] (8,-0.4) -- (8,-0.6) -- (8.14,-0.5) -- (8,-0.4) -- (8,-0.6);
\draw[black,  thick,solid] (10,-0.4) -- (10,-0.6) -- (10.14,-0.5) -- (10,-0.4) -- (10,-0.6);
\draw[black,  thick,solid] (12,-0.4) -- (12,-0.6) -- (12.14,-0.5) -- (12,-0.4) -- (12,-0.6);

\draw[black,  thick,solid] (9,-0.4) -- (9,-0.6) -- (8.86,-0.5) -- (9,-0.4) -- (9,-0.6);
\draw[black,  thick,solid] (11,-0.4) -- (11,-0.6) -- (10.86,-0.5) -- (11,-0.4) -- (11,-0.6);
\draw[black,  thick,solid] (13,-0.4) -- (13,-0.6) -- (12.86,-0.5) -- (13,-0.4) -- (13,-0.6);

\draw[myblue,fill=mybluefill,very thick,solid,rounded corners=1] (10.2,-1.8) -- (10.2,-1.2) -- (10.45,-1.5) -- cycle; \draw[myblue,fill=mybluefill,very thick,solid,rounded corners=1] (10.8,-1.8) -- (10.8,-1.2) -- (10.55,-1.5) -- cycle;

\draw[mypurple,fill=mypurplefill,very thick,solid,rounded corners=2] (7.2,0.8) rectangle (7.8,-1.8);
\draw[mypurple,fill=mypurplefill,very thick,solid,rounded corners=2] (9.2,0.8) rectangle (9.8,-1.8);
\draw[mypurple,fill=mypurplefill,very thick,solid,rounded corners=2] (11.2,0.8) rectangle (11.8,-1.8);
\draw[mypurple,fill=mypurplefill,very thick,solid,rounded corners=2] (13.2,0.8) rectangle (13.8,-1.8);
\draw[white,fill=white,very thick,solid,rounded corners=0] (10.2,0.9) rectangle (10.8,-1.1);
\draw[white,fill=white,very thick,solid,rounded corners=0] (8.2,-0.1) rectangle (8.8,-2);
\draw[white,fill=white,very thick,solid,rounded corners=0] (12.2,-0.1) rectangle (12.8,-2);

\draw[] (7.5,-0.5) node[rotate=90] {\footnotesize  $\mathcal{V}$};
\draw[] (9.5,-0.5) node[rotate=90] {\footnotesize  $\mathcal{W}$};
\draw[] (11.5,-0.5) node[rotate=90] {\footnotesize $\mathcal{V}$};
\draw[] (13.5,-0.5) node[rotate=90] {\footnotesize$\mathcal{W}$};
\draw[] (10.5,-2) node {\footnotesize  $\mathbf{A}'$};
\draw[] (10.5,1.5) node {\footnotesize  $\mathbf{T}$};
\draw[] (10.5,0.25) node {\footnotesize  $\mathbf{Z}$};

\end{tikzpicture}
}
\subfloat[$(\emptyset,\EB)$]
{\begin{tikzpicture}[scale=0.65,framed]

\draw[black,  thick,solid] (7,0.5) -- (14,0.5);
\draw[black,  thick,solid] (10.6,-1.5) -- (14,-1.5);
\draw[black,  thick,solid] (7,-1.5) -- (10.4,-1.5);

\draw[myredfill,fill=myredfill, very thick,solid,rounded corners=6] (7,2) -- (12.9,2) -- (12.9,1.1) -- (12.9,0.1) -- (12.1,0.1) -- (12.1,1.1) -- (8.9,1.1) -- (8.9,0.1) -- (8.1,0.1) -- (8.1,1.1) -- (7,1.1);
\draw[myredfill,fill=myredfill, very thick,solid,rounded corners=6] (14,2) -- (12.1,2) -- (12.1,1.1) -- (14,1.1);
\draw[myred, very thick,solid,rounded corners=6] (7,2) -- (14,2);
\draw[myred, very thick,solid,rounded corners=6] (14,1.1) -- (12.9,1.1)-- (12.9,0.1) -- (12.1,0.1) -- (12.1,1.1) -- (8.9,1.1) -- (8.9,0.1) -- (8.1,0.1) -- (8.1,1.1) -- (7,1.1);

\draw[mypurplefill,fill=mypurplefill, very thick,solid,rounded corners=6] (14,0.9) -- (13.1,0.9)  -- (13.1,-0.1) -- (11.9,-0.1) -- (11.9,-2) -- (14,-2);
\draw[mypurplefill,fill=mypurplefill, very thick,solid,rounded corners=6] (7,-2) -- (10,-2) -- (10,-1) -- (11,-1) -- (11,-2) -- (13,-2) -- (13.1,-0.1) -- (11.9,-0.1) -- (11.9,0.9) -- (9.1,0.9) -- (9.1,-0.1) -- (7.9,-0.1) -- (7.9,0.9) -- (7,0.9);

\draw[black,  thick,solid] (7,-0.5) -- (8,-0.5);
\draw[black,  thick,solid] (9,-0.5) -- (10,-0.5);
\draw[black,  thick,solid] (11,-0.5) -- (12,-0.5);
\draw[black,  thick,solid] (13,-0.5) -- (14,-0.5);

\draw[black,  thick,solid] (8,-0.4) -- (8,-0.6) -- (8.14,-0.5) -- (8,-0.4) -- (8,-0.6);
\draw[black,  thick,solid] (10,-0.4) -- (10,-0.6) -- (10.14,-0.5) -- (10,-0.4) -- (10,-0.6);
\draw[black,  thick,solid] (12,-0.4) -- (12,-0.6) -- (12.14,-0.5) -- (12,-0.4) -- (12,-0.6);

\draw[black,  thick,solid] (9,-0.4) -- (9,-0.6) -- (8.86,-0.5) -- (9,-0.4) -- (9,-0.6);
\draw[black,  thick,solid] (11,-0.4) -- (11,-0.6) -- (10.86,-0.5) -- (11,-0.4) -- (11,-0.6);
\draw[black,  thick,solid] (13,-0.4) -- (13,-0.6) -- (12.86,-0.5) -- (13,-0.4) -- (13,-0.6);

\draw[black,  thick,densely dotted] (10.14,-0.5) -- (10.86,-0.5);

\draw[myblue,fill=mybluefill,very thick,solid,rounded corners=1] (10.2,-1.8) -- (10.2,-1.2) -- (10.45,-1.5) -- cycle; \draw[myblue,fill=mybluefill,very thick,solid,rounded corners=1] (10.8,-1.8) -- (10.8,-1.2) -- (10.55,-1.5) -- cycle;

\draw[mypurple,fill=mypurplefill,very thick,solid,rounded corners=2] (7.2,0.8) rectangle (7.8,-1.8);
\draw[mypurple,fill=mypurplefill,very thick,solid,rounded corners=2] (9.2,0.8) rectangle (9.8,-1.8);
\draw[mypurple,fill=mypurplefill,very thick,solid,rounded corners=2] (11.2,0.8) rectangle (11.8,-1.8);
\draw[mypurple,fill=mypurplefill,very thick,solid,rounded corners=2] (13.2,0.8) rectangle (13.8,-1.8);
\draw[white,fill=white,very thick,solid,rounded corners=0] (8.2,-0.1) rectangle (8.8,-2);
\draw[white,fill=white,very thick,solid,rounded corners=0] (12.2,-0.1) rectangle (12.8,-2);

\draw[] (7.5,-0.5) node[rotate=90] {\footnotesize  $\mathcal{V}$};
\draw[] (9.5,-0.5) node[rotate=90] {\footnotesize  $\mathcal{W}$};
\draw[] (11.5,-0.5) node[rotate=90] {\footnotesize $\mathcal{V}$};
\draw[] (13.5,-0.5) node[rotate=90] {\footnotesize$\mathcal{W}$};
\draw[] (10.5,-2) node {\footnotesize  $\mathbf{A}'$};
\draw[] (10.5,1.5) node {\footnotesize  $\mathbf{T}$};
\draw[] (10.5,0.25) node {\footnotesize  $\mathbf{Z}$};

\end{tikzpicture}
}
\subfloat[$(\emptyset,\Q)$]
{\begin{tikzpicture}[scale=0.65,framed]

\draw[black,  thick,solid] (7,0.5) -- (14,0.5);
\draw[black,  thick,solid] (10.6,-1.5) -- (14,-1.5); \draw[black,  thick,solid] (7,-1.5) -- (10.4,-1.5);

\draw[myredfill,fill=myredfill, very thick,solid,rounded corners=6] (7,2) -- (12.9,2) -- (12.9,1.1) -- (12.9,0.1) -- (12.1,0.1) -- (12.1,1.1) -- (8.9,1.1) -- (8.9,0.1) -- (8.1,0.1) -- (8.1,1.1) -- (7,1.1);
\draw[myredfill,fill=myredfill, very thick,solid,rounded corners=6] (14,2) -- (12.1,2) -- (12.1,1.1) -- (14,1.1);
\draw[myred, very thick,solid,rounded corners=6] (7,2) -- (14,2);
\draw[myred, very thick,solid,rounded corners=6] (14,1.1) -- (12.9,1.1)-- (12.9,0.1) -- (12.1,0.1) -- (12.1,1.1) -- (8.9,1.1) -- (8.9,0.1) -- (8.1,0.1) -- (8.1,1.1) -- (7,1.1);

\draw[mypurplefill,fill=mypurplefill, very thick,solid,rounded corners=6] (14,0.9) -- (13.1,0.9)  -- (13.1,-0.1) -- (11.9,-0.1) -- (11.9,-2) -- (14,-2);
\draw[mypurplefill,fill=mypurplefill, very thick,solid,rounded corners=6] (7,-2) -- (10,-2) -- (10,-1) -- (11,-1) -- (11,-2) -- (13,-2) -- (13.1,-0.1) -- (11.9,-0.1) -- (11.9,0.9) -- (9.1,0.9) -- (9.1,-0.1) -- (7.9,-0.1) -- (7.9,0.9) -- (7,0.9);

\draw[black,  thick,solid] (7,-0.5) -- (8,-0.5);
\draw[black,  thick,solid] (9,-0.5) -- (10,-0.5);
\draw[black,  thick,solid] (11,-0.5) -- (12,-0.5);
\draw[black,  thick,solid] (13,-0.5) -- (14,-0.5);

\draw[black,  thick,solid] (8,-0.4) -- (8,-0.6) -- (8.14,-0.5) -- (8,-0.4) -- (8,-0.6);
\draw[black,  thick,solid] (12,-0.4) -- (12,-0.6) -- (12.14,-0.5) -- (12,-0.4) -- (12,-0.6);

\draw[black,  thick,solid] (9,-0.4) -- (9,-0.6) -- (8.86,-0.5) -- (9,-0.4) -- (9,-0.6);
\draw[black,  thick,solid] (13,-0.4) -- (13,-0.6) -- (12.86,-0.5) -- (13,-0.4) -- (13,-0.6);

\draw[black,  thick,solid] (10,-0.5) -- (11,-0.5);

\draw[myblue,fill=mybluefill,very thick,solid,rounded corners=1] (10.2,-1.8) -- (10.2,-1.2) -- (10.45,-1.5) -- cycle; \draw[myblue,fill=mybluefill,very thick,solid,rounded corners=1] (10.8,-1.8) -- (10.8,-1.2) -- (10.55,-1.5) -- cycle;

\draw[mypurple,fill=mypurplefill,very thick,solid,rounded corners=2] (7.2,0.8) rectangle (7.8,-1.8);
\draw[mypurple,fill=mypurplefill,very thick,solid,rounded corners=2] (9.2,0.8) rectangle (9.8,-1.8);
\draw[mypurple,fill=mypurplefill,very thick,solid,rounded corners=2] (11.2,0.8) rectangle (11.8,-1.8);
\draw[mypurple,fill=mypurplefill,very thick,solid,rounded corners=2] (13.2,0.8) rectangle (13.8,-1.8);
\draw[white,fill=white,very thick,solid,rounded corners=0] (8.2,-0.1) rectangle (8.8,-2);
\draw[white,fill=white,very thick,solid,rounded corners=0] (12.2,-0.1) rectangle (12.8,-2);

\draw[] (7.5,-0.5) node[rotate=90] {\footnotesize  $\mathcal{V}$};
\draw[] (9.5,-0.5) node[rotate=90] {\footnotesize  $\mathcal{W}$};
\draw[] (11.5,-0.5) node[rotate=90] {\footnotesize $\mathcal{V}$};
\draw[] (13.5,-0.5) node[rotate=90] {\footnotesize$\mathcal{W}$};
\draw[] (10.5,-2) node {\footnotesize  $\mathbf{A}'$};
\draw[] (10.5,1.5) node {\footnotesize  $\mathbf{T}$};
\draw[] (10.5,0.25) node {\footnotesize  $\mathbf{Z}$};

\end{tikzpicture}
}
\\
\subfloat[$(\EB,\emptyset)$]
{\begin{tikzpicture}[scale=0.65,framed]

\draw[black,  thick,solid] (7,0.5) -- (14,0.5);
\draw[black,  thick,solid] (10.6,-1.5) -- (14,-1.5); \draw[black,  thick,solid] (7,-1.5) -- (10.4,-1.5);

\draw[myredfill,fill=myredfill, very thick,solid,rounded corners=6] (7,2) -- (12.9,2) -- (12.9,1.1) -- (12.9,0.1) -- (12.1,0.1) -- (12.1,1.1) -- (8.9,1.1) -- (8.9,0.1) -- (8.1,0.1) -- (8.1,1.1) -- (7,1.1);
\draw[myredfill,fill=myredfill, very thick,solid,rounded corners=6] (14,2) -- (12.1,2) -- (12.1,1.1) -- (14,1.1);
\draw[myred, very thick,solid,rounded corners=6] (7,2) -- (14,2);
\draw[myred, very thick,solid,rounded corners=6] (14,1.1) -- (12.9,1.1)-- (12.9,0.1) -- (12.1,0.1) -- (12.1,1.1) -- (8.9,1.1) -- (8.9,0.1) -- (8.1,0.1) -- (8.1,1.1) -- (7,1.1);

\draw[mypurplefill,fill=mypurplefill, very thick,solid,rounded corners=6] (14,0.9) -- (13.1,0.9)  -- (13.1,-0.1) -- (11.9,-0.1) -- (11.9,-2) -- (14,-2);
\draw[mypurplefill,fill=mypurplefill, very thick,solid,rounded corners=6] (7,-2) -- (10,-2) -- (10,-1) -- (11,-1) -- (11,-2) -- (13,-2) -- (13.1,-0.1) -- (11.9,-0.1) -- (11.9,0.9) -- (9.1,0.9) -- (9.1,-0.1) -- (7.9,-0.1) -- (7.9,0.9) -- (7,0.9);

\draw[black,  thick,solid] (7,-0.5) -- (8,-0.5);
\draw[black,  thick,solid] (9,-0.5) -- (10,-0.5);
\draw[black,  thick,solid] (11,-0.5) -- (12,-0.5);
\draw[black,  thick,solid] (13,-0.5) -- (14,-0.5);

\draw[black,  thick,solid] (8,-0.4) -- (8,-0.6) -- (8.14,-0.5) -- (8,-0.4) -- (8,-0.6);
\draw[black,  thick,solid] (10,-0.4) -- (10,-0.6) -- (10.14,-0.5) -- (10,-0.4) -- (10,-0.6);
\draw[black,  thick,solid] (12,-0.4) -- (12,-0.6) -- (12.14,-0.5) -- (12,-0.4) -- (12,-0.6);

\draw[black,  thick,solid] (9,-0.4) -- (9,-0.6) -- (8.86,-0.5) -- (9,-0.4) -- (9,-0.6);
\draw[black,  thick,solid] (11,-0.4) -- (11,-0.6) -- (10.86,-0.5) -- (11,-0.4) -- (11,-0.6);
\draw[black,  thick,solid] (13,-0.4) -- (13,-0.6) -- (12.86,-0.5) -- (13,-0.4) -- (13,-0.6);

\draw[black,  thick,densely dotted] (8.14,-0.5) -- (8.86,-0.5);
\draw[black,  thick,densely dotted] (12.14,-0.5) -- (12.86,-0.5);

\draw[myblue,fill=mybluefill,very thick,solid,rounded corners=1] (10.2,-1.8) -- (10.2,-1.2) -- (10.45,-1.5) -- cycle; \draw[myblue,fill=mybluefill,very thick,solid,rounded corners=1] (10.8,-1.8) -- (10.8,-1.2) -- (10.55,-1.5) -- cycle;

\draw[mypurple,fill=mypurplefill,very thick,solid,rounded corners=2] (7.2,0.8) rectangle (7.8,-1.8);
\draw[mypurple,fill=mypurplefill,very thick,solid,rounded corners=2] (9.2,0.8) rectangle (9.8,-1.8);
\draw[mypurple,fill=mypurplefill,very thick,solid,rounded corners=2] (11.2,0.8) rectangle (11.8,-1.8);
\draw[mypurple,fill=mypurplefill,very thick,solid,rounded corners=2] (13.2,0.8) rectangle (13.8,-1.8);
\draw[white,fill=white,very thick,solid,rounded corners=0] (10.2,0.9) rectangle (10.8,-1.1);

\draw[] (7.5,-0.5) node[rotate=90] {\footnotesize  $\mathcal{V}$};
\draw[] (9.5,-0.5) node[rotate=90] {\footnotesize  $\mathcal{W}$};
\draw[] (11.5,-0.5) node[rotate=90] {\footnotesize $\mathcal{V}$};
\draw[] (13.5,-0.5) node[rotate=90] {\footnotesize$\mathcal{W}$};
\draw[] (10.5,-2) node {\footnotesize  $\mathbf{A}'$};
\draw[] (10.5,1.5) node {\footnotesize  $\mathbf{T}$};
\draw[] (10.5,0.25) node {\footnotesize  $\mathbf{Z}$};

\end{tikzpicture}
}
\subfloat[$(\EB,\EB)$]
{\begin{tikzpicture}[scale=0.65,framed]

\draw[black,  thick,solid] (7,0.5) -- (14,0.5);
\draw[black,  thick,solid] (10.6,-1.5) -- (14,-1.5); \draw[black,  thick,solid] (7,-1.5) -- (10.4,-1.5);

\draw[myredfill,fill=myredfill, very thick,solid,rounded corners=6] (7,2) -- (12.9,2) -- (12.9,1.1) -- (12.9,0.1) -- (12.1,0.1) -- (12.1,1.1) -- (8.9,1.1) -- (8.9,0.1) -- (8.1,0.1) -- (8.1,1.1) -- (7,1.1);
\draw[myredfill,fill=myredfill, very thick,solid,rounded corners=6] (14,2) -- (12.1,2) -- (12.1,1.1) -- (14,1.1);
\draw[myred, very thick,solid,rounded corners=6] (7,2) -- (14,2);
\draw[myred, very thick,solid,rounded corners=6] (14,1.1) -- (12.9,1.1)-- (12.9,0.1) -- (12.1,0.1) -- (12.1,1.1) -- (8.9,1.1) -- (8.9,0.1) -- (8.1,0.1) -- (8.1,1.1) -- (7,1.1);

\draw[mypurplefill,fill=mypurplefill, very thick,solid,rounded corners=6] (14,0.9) -- (13.1,0.9)  -- (13.1,-0.1) -- (11.9,-0.1) -- (11.9,-2) -- (14,-2);
\draw[mypurplefill,fill=mypurplefill, very thick,solid,rounded corners=6] (7,-2) -- (10,-2) -- (10,-1) -- (11,-1) -- (11,-2) -- (13,-2) -- (13.1,-0.1) -- (11.9,-0.1) -- (11.9,0.9) -- (9.1,0.9) -- (9.1,-0.1) -- (7.9,-0.1) -- (7.9,0.9) -- (7,0.9);

\draw[black,  thick,solid] (7,-0.5) -- (8,-0.5);
\draw[black,  thick,solid] (9,-0.5) -- (10,-0.5);
\draw[black,  thick,solid] (11,-0.5) -- (12,-0.5);
\draw[black,  thick,solid] (13,-0.5) -- (14,-0.5);

\draw[black,  thick,solid] (8,-0.4) -- (8,-0.6) -- (8.14,-0.5) -- (8,-0.4) -- (8,-0.6);
\draw[black,  thick,solid] (10,-0.4) -- (10,-0.6) -- (10.14,-0.5) -- (10,-0.4) -- (10,-0.6);
\draw[black,  thick,solid] (12,-0.4) -- (12,-0.6) -- (12.14,-0.5) -- (12,-0.4) -- (12,-0.6);

\draw[black,  thick,solid] (9,-0.4) -- (9,-0.6) -- (8.86,-0.5) -- (9,-0.4) -- (9,-0.6);
\draw[black,  thick,solid] (11,-0.4) -- (11,-0.6) -- (10.86,-0.5) -- (11,-0.4) -- (11,-0.6);
\draw[black,  thick,solid] (13,-0.4) -- (13,-0.6) -- (12.86,-0.5) -- (13,-0.4) -- (13,-0.6);

\draw[myblue,fill=mybluefill,very thick,solid,rounded corners=1] (10.2,-1.8) -- (10.2,-1.2) -- (10.45,-1.5) -- cycle; \draw[myblue,fill=mybluefill,very thick,solid,rounded corners=1] (10.8,-1.8) -- (10.8,-1.2) -- (10.55,-1.5) -- cycle;

\draw[mypurple,fill=mypurplefill,very thick,solid,rounded corners=2] (7.2,0.8) rectangle (7.8,-1.8);
\draw[mypurple,fill=mypurplefill,very thick,solid,rounded corners=2] (9.2,0.8) rectangle (9.8,-1.8);
\draw[mypurple,fill=mypurplefill,very thick,solid,rounded corners=2] (11.2,0.8) rectangle (11.8,-1.8);
\draw[mypurple,fill=mypurplefill,very thick,solid,rounded corners=2] (13.2,0.8) rectangle (13.8,-1.8);

\draw[black,  thick,densely dotted] (8.14,-0.5) -- (8.86,-0.5);
\draw[black,  thick,densely dotted] (12.14,-0.5) -- (12.86,-0.5);
\draw[black,  thick,densely dotted] (10.14,-0.5) -- (10.86,-0.5);

\draw[] (7.5,-0.5) node[rotate=90] {\footnotesize  $\mathcal{V}$};
\draw[] (9.5,-0.5) node[rotate=90] {\footnotesize  $\mathcal{W}$};
\draw[] (11.5,-0.5) node[rotate=90] {\footnotesize $\mathcal{V}$};
\draw[] (13.5,-0.5) node[rotate=90] {\footnotesize$\mathcal{W}$};
\draw[] (10.5,-2) node {\footnotesize  $\mathbf{A}'$};
\draw[] (10.5,1.5) node {\footnotesize  $\mathbf{T}$};
\draw[] (10.5,0.25) node {\footnotesize  $\mathbf{Z}$};

\end{tikzpicture}
}
\subfloat[$(\EB,\Q)$]
{\begin{tikzpicture}[scale=0.65,framed]

\draw[black,  thick,solid] (7,0.5) -- (14,0.5);
\draw[black,  thick,solid] (10.6,-1.5) -- (14,-1.5); \draw[black,  thick,solid] (7,-1.5) -- (10.4,-1.5);

\draw[myredfill,fill=myredfill, very thick,solid,rounded corners=6] (7,2) -- (12.9,2) -- (12.9,1.1) -- (12.9,0.1) -- (12.1,0.1) -- (12.1,1.1) -- (8.9,1.1) -- (8.9,0.1) -- (8.1,0.1) -- (8.1,1.1) -- (7,1.1);
\draw[myredfill,fill=myredfill, very thick,solid,rounded corners=6] (14,2) -- (12.1,2) -- (12.1,1.1) -- (14,1.1);
\draw[myred, very thick,solid,rounded corners=6] (7,2) -- (14,2);
\draw[myred, very thick,solid,rounded corners=6] (14,1.1) -- (12.9,1.1)-- (12.9,0.1) -- (12.1,0.1) -- (12.1,1.1) -- (8.9,1.1) -- (8.9,0.1) -- (8.1,0.1) -- (8.1,1.1) -- (7,1.1);

\draw[mypurplefill,fill=mypurplefill, very thick,solid,rounded corners=6] (14,0.9) -- (13.1,0.9)  -- (13.1,-0.1) -- (11.9,-0.1) -- (11.9,-2) -- (14,-2);
\draw[mypurplefill,fill=mypurplefill, very thick,solid,rounded corners=6] (7,-2) -- (10,-2) -- (10,-1) -- (11,-1) -- (11,-2) -- (13,-2) -- (13.1,-0.1) -- (11.9,-0.1) -- (11.9,0.9) -- (9.1,0.9) -- (9.1,-0.1) -- (7.9,-0.1) -- (7.9,0.9) -- (7,0.9);

\draw[black,  thick,solid] (7,-0.5) -- (8,-0.5);
\draw[black,  thick,solid] (9,-0.5) -- (10,-0.5);
\draw[black,  thick,solid] (11,-0.5) -- (12,-0.5);
\draw[black,  thick,solid] (13,-0.5) -- (14,-0.5);

\draw[black,  thick,solid] (8,-0.4) -- (8,-0.6) -- (8.14,-0.5) -- (8,-0.4) -- (8,-0.6);
\draw[black,  thick,solid] (12,-0.4) -- (12,-0.6) -- (12.14,-0.5) -- (12,-0.4) -- (12,-0.6);

\draw[black,  thick,solid] (9,-0.4) -- (9,-0.6) -- (8.86,-0.5) -- (9,-0.4) -- (9,-0.6);
\draw[black,  thick,solid] (13,-0.4) -- (13,-0.6) -- (12.86,-0.5) -- (13,-0.4) -- (13,-0.6);

\draw[black,  thick,densely dotted] (8.14,-0.5) -- (8.86,-0.5);
\draw[black,  thick,densely dotted] (12.14,-0.5) -- (12.86,-0.5);
\draw[black,  thick,densely dotted] (8.14,-0.5) -- (8.86,-0.5);
\draw[black,  thick,solid] (10,-0.5) -- (11,-0.5);

\draw[myblue,fill=mybluefill,very thick,solid,rounded corners=1] (10.2,-1.8) -- (10.2,-1.2) -- (10.45,-1.5) -- cycle; \draw[myblue,fill=mybluefill,very thick,solid,rounded corners=1] (10.8,-1.8) -- (10.8,-1.2) -- (10.55,-1.5) -- cycle;

\draw[mypurple,fill=mypurplefill,very thick,solid,rounded corners=2] (7.2,0.8) rectangle (7.8,-1.8);
\draw[mypurple,fill=mypurplefill,very thick,solid,rounded corners=2] (9.2,0.8) rectangle (9.8,-1.8);
\draw[mypurple,fill=mypurplefill,very thick,solid,rounded corners=2] (11.2,0.8) rectangle (11.8,-1.8);
\draw[mypurple,fill=mypurplefill,very thick,solid,rounded corners=2] (13.2,0.8) rectangle (13.8,-1.8);

\draw[] (7.5,-0.5) node[rotate=90] {\footnotesize  $\mathcal{V}$};
\draw[] (9.5,-0.5) node[rotate=90] {\footnotesize  $\mathcal{W}$};
\draw[] (11.5,-0.5) node[rotate=90] {\footnotesize $\mathcal{V}$};
\draw[] (13.5,-0.5) node[rotate=90] {\footnotesize$\mathcal{W}$};
\draw[] (10.5,-2) node {\footnotesize  $\mathbf{A}'$};
\draw[] (10.5,1.5) node {\footnotesize  $\mathbf{T}$};
\draw[] (10.5,0.25) node {\footnotesize  $\mathbf{Z}$};

\end{tikzpicture}
}
\\
\subfloat[$(\Q,\emptyset)$]
{\begin{tikzpicture}[scale=0.65,framed]

\draw[black,  thick,solid] (7,0.5) -- (14,0.5);
\draw[black,  thick,solid] (10.6,-1.5) -- (14,-1.5); \draw[black,  thick,solid] (7,-1.5) -- (10.4,-1.5);

\draw[myredfill,fill=myredfill, very thick,solid,rounded corners=6] (7,2) -- (12.9,2) -- (12.9,1.1) -- (12.9,0.1) -- (12.1,0.1) -- (12.1,1.1) -- (8.9,1.1) -- (8.9,0.1) -- (8.1,0.1) -- (8.1,1.1) -- (7,1.1);
\draw[myredfill,fill=myredfill, very thick,solid,rounded corners=6] (14,2) -- (12.1,2) -- (12.1,1.1) -- (14,1.1);
\draw[myred, very thick,solid,rounded corners=6] (7,2) -- (14,2);
\draw[myred, very thick,solid,rounded corners=6] (14,1.1) -- (12.9,1.1)-- (12.9,0.1) -- (12.1,0.1) -- (12.1,1.1) -- (8.9,1.1) -- (8.9,0.1) -- (8.1,0.1) -- (8.1,1.1) -- (7,1.1);

\draw[mypurplefill,fill=mypurplefill, very thick,solid,rounded corners=6] (14,0.9) -- (13.1,0.9)  -- (13.1,-0.1) -- (11.9,-0.1) -- (11.9,-2) -- (14,-2);
\draw[mypurplefill,fill=mypurplefill, very thick,solid,rounded corners=6] (7,-2) -- (10,-2) -- (10,-1) -- (11,-1) -- (11,-2) -- (13,-2) -- (13.1,-0.1) -- (11.9,-0.1) -- (11.9,0.9) -- (9.1,0.9) -- (9.1,-0.1) -- (7.9,-0.1) -- (7.9,0.9) -- (7,0.9);

\draw[black,  thick,solid] (7,-0.5) -- (8,-0.5);
\draw[black,  thick,solid] (9,-0.5) -- (10,-0.5);
\draw[black,  thick,solid] (11,-0.5) -- (12,-0.5);
\draw[black,  thick,solid] (13,-0.5) -- (14,-0.5);

\draw[black,  thick,solid] (10,-0.4) -- (10,-0.6) -- (10.14,-0.5) -- (10,-0.4) -- (10,-0.6);

\draw[black,  thick,solid] (11,-0.4) -- (11,-0.6) -- (10.86,-0.5) -- (11,-0.4) -- (11,-0.6);

\draw[black,  thick,solid] (12,-0.5) -- (13,-0.5);
\draw[black,  thick,solid] (8,-0.5) -- (9,-0.5);

\draw[myblue,fill=mybluefill,very thick,solid,rounded corners=1] (10.2,-1.8) -- (10.2,-1.2) -- (10.45,-1.5) -- cycle; \draw[myblue,fill=mybluefill,very thick,solid,rounded corners=1] (10.8,-1.8) -- (10.8,-1.2) -- (10.55,-1.5) -- cycle;

\draw[mypurple,fill=mypurplefill,very thick,solid,rounded corners=2] (7.2,0.8) rectangle (7.8,-1.8);
\draw[mypurple,fill=mypurplefill,very thick,solid,rounded corners=2] (9.2,0.8) rectangle (9.8,-1.8);
\draw[mypurple,fill=mypurplefill,very thick,solid,rounded corners=2] (11.2,0.8) rectangle (11.8,-1.8);
\draw[mypurple,fill=mypurplefill,very thick,solid,rounded corners=2] (13.2,0.8) rectangle (13.8,-1.8);
\draw[white,fill=white,very thick,solid,rounded corners=0] (10.2,0.9) rectangle (10.8,-1.1);

\draw[] (7.5,-0.5) node[rotate=90] {\footnotesize  $\mathcal{V}$};
\draw[] (9.5,-0.5) node[rotate=90] {\footnotesize  $\mathcal{W}$};
\draw[] (11.5,-0.5) node[rotate=90] {\footnotesize $\mathcal{V}$};
\draw[] (13.5,-0.5) node[rotate=90] {\footnotesize$\mathcal{W}$};
\draw[] (10.5,-2) node {\footnotesize  $\mathbf{A}'$};
\draw[] (10.5,1.5) node {\footnotesize  $\mathbf{T}$};
\draw[] (10.5,0.25) node {\footnotesize  $\mathbf{Z}$};

\end{tikzpicture}
}
\subfloat[$(\Q,\EB)$]
{\begin{tikzpicture}[scale=0.65,framed]

\draw[black,  thick,solid] (7,0.5) -- (14,0.5);
\draw[black,  thick,solid] (10.6,-1.5) -- (14,-1.5); \draw[black,  thick,solid] (7,-1.5) -- (10.4,-1.5);

\draw[myredfill,fill=myredfill, very thick,solid,rounded corners=6] (7,2) -- (12.9,2) -- (12.9,1.1) -- (12.9,0.1) -- (12.1,0.1) -- (12.1,1.1) -- (8.9,1.1) -- (8.9,0.1) -- (8.1,0.1) -- (8.1,1.1) -- (7,1.1);
\draw[myredfill,fill=myredfill, very thick,solid,rounded corners=6] (14,2) -- (12.1,2) -- (12.1,1.1) -- (14,1.1);
\draw[myred, very thick,solid,rounded corners=6] (7,2) -- (14,2);
\draw[myred, very thick,solid,rounded corners=6] (14,1.1) -- (12.9,1.1)-- (12.9,0.1) -- (12.1,0.1) -- (12.1,1.1) -- (8.9,1.1) -- (8.9,0.1) -- (8.1,0.1) -- (8.1,1.1) -- (7,1.1);

\draw[mypurplefill,fill=mypurplefill, very thick,solid,rounded corners=6] (14,0.9) -- (13.1,0.9)  -- (13.1,-0.1) -- (11.9,-0.1) -- (11.9,-2) -- (14,-2);
\draw[mypurplefill,fill=mypurplefill, very thick,solid,rounded corners=6] (7,-2) -- (10,-2) -- (10,-1) -- (11,-1) -- (11,-2) -- (13,-2) -- (13.1,-0.1) -- (11.9,-0.1) -- (11.9,0.9) -- (9.1,0.9) -- (9.1,-0.1) -- (7.9,-0.1) -- (7.9,0.9) -- (7,0.9);

\draw[black,  thick,solid] (7,-0.5) -- (8,-0.5);
\draw[black,  thick,solid] (9,-0.5) -- (10,-0.5);
\draw[black,  thick,solid] (11,-0.5) -- (12,-0.5);
\draw[black,  thick,solid] (13,-0.5) -- (14,-0.5);

\draw[black,  thick,solid] (10,-0.4) -- (10,-0.6) -- (10.14,-0.5) -- (10,-0.4) -- (10,-0.6);

\draw[black,  thick,solid] (11,-0.4) -- (11,-0.6) -- (10.86,-0.5) -- (11,-0.4) -- (11,-0.6);

\draw[black,  thick,solid] (12,-0.5) -- (13,-0.5);
\draw[black,  thick,solid] (8,-0.5) -- (9,-0.5);
\draw[black,  thick,densely dotted] (10.14,-0.5) -- (10.86,-0.5);

\draw[myblue,fill=mybluefill,very thick,solid,rounded corners=1] (10.2,-1.8) -- (10.2,-1.2) -- (10.45,-1.5) -- cycle; \draw[myblue,fill=mybluefill,very thick,solid,rounded corners=1] (10.8,-1.8) -- (10.8,-1.2) -- (10.55,-1.5) -- cycle;

\draw[mypurple,fill=mypurplefill,very thick,solid,rounded corners=2] (7.2,0.8) rectangle (7.8,-1.8);
\draw[mypurple,fill=mypurplefill,very thick,solid,rounded corners=2] (9.2,0.8) rectangle (9.8,-1.8);
\draw[mypurple,fill=mypurplefill,very thick,solid,rounded corners=2] (11.2,0.8) rectangle (11.8,-1.8);
\draw[mypurple,fill=mypurplefill,very thick,solid,rounded corners=2] (13.2,0.8) rectangle (13.8,-1.8);

\draw[] (7.5,-0.5) node[rotate=90] {\footnotesize  $\mathcal{V}$};
\draw[] (9.5,-0.5) node[rotate=90] {\footnotesize  $\mathcal{W}$};
\draw[] (11.5,-0.5) node[rotate=90] {\footnotesize $\mathcal{V}$};
\draw[] (13.5,-0.5) node[rotate=90] {\footnotesize$\mathcal{W}$};
\draw[] (10.5,-2) node {\footnotesize  $\mathbf{A}'$};
\draw[] (10.5,1.5) node {\footnotesize  $\mathbf{T}$};
\draw[] (10.5,0.25) node {\footnotesize  $\mathbf{Z}$};

\end{tikzpicture}
}
\subfloat[$(\Q,\Q)$]
{\begin{tikzpicture}[scale=0.65,framed]

\draw[black,  thick,solid] (7,0.5) -- (14,0.5);
\draw[black,  thick,solid] (10.6,-1.5) -- (14,-1.5); \draw[black,  thick,solid] (7,-1.5) -- (10.4,-1.5);

\draw[myredfill,fill=myredfill, very thick,solid,rounded corners=6] (7,2) -- (12.9,2) -- (12.9,1.1) -- (12.9,0.1) -- (12.1,0.1) -- (12.1,1.1) -- (8.9,1.1) -- (8.9,0.1) -- (8.1,0.1) -- (8.1,1.1) -- (7,1.1);
\draw[myredfill,fill=myredfill, very thick,solid,rounded corners=6] (14,2) -- (12.1,2) -- (12.1,1.1) -- (14,1.1);
\draw[myred, very thick,solid,rounded corners=6] (7,2) -- (14,2);
\draw[myred, very thick,solid,rounded corners=6] (14,1.1) -- (12.9,1.1)-- (12.9,0.1) -- (12.1,0.1) -- (12.1,1.1) -- (8.9,1.1) -- (8.9,0.1) -- (8.1,0.1) -- (8.1,1.1) -- (7,1.1);

\draw[mypurplefill,fill=mypurplefill, very thick,solid,rounded corners=6] (14,0.9) -- (13.1,0.9)  -- (13.1,-0.1) -- (11.9,-0.1) -- (11.9,-2) -- (14,-2);
\draw[mypurplefill,fill=mypurplefill, very thick,solid,rounded corners=6] (7,-2) -- (10,-2) -- (10,-1) -- (11,-1) -- (11,-2) -- (13,-2) -- (13.1,-0.1) -- (11.9,-0.1) -- (11.9,0.9) -- (9.1,0.9) -- (9.1,-0.1) -- (7.9,-0.1) -- (7.9,0.9) -- (7,0.9);

\draw[black,  thick,solid] (7,-0.5) -- (8,-0.5);
\draw[black,  thick,solid] (9,-0.5) -- (10,-0.5);
\draw[black,  thick,solid] (11,-0.5) -- (12,-0.5);
\draw[black,  thick,solid] (13,-0.5) -- (14,-0.5);

\draw[black,  thick,solid] (12,-0.5) -- (13,-0.5);
\draw[black,  thick,solid] (8,-0.5) -- (9,-0.5);
\draw[black,  thick,solid] (10,-0.5) -- (11,-0.5);

\draw[myblue,fill=mybluefill,very thick,solid,rounded corners=1] (10.2,-1.8) -- (10.2,-1.2) -- (10.45,-1.5) -- cycle; \draw[myblue,fill=mybluefill,very thick,solid,rounded corners=1] (10.8,-1.8) -- (10.8,-1.2) -- (10.55,-1.5) -- cycle;

\draw[mypurple,fill=mypurplefill,very thick,solid,rounded corners=2] (7.2,0.8) rectangle (7.8,-1.8);
\draw[mypurple,fill=mypurplefill,very thick,solid,rounded corners=2] (9.2,0.8) rectangle (9.8,-1.8);
\draw[mypurple,fill=mypurplefill,very thick,solid,rounded corners=2] (11.2,0.8) rectangle (11.8,-1.8);
\draw[mypurple,fill=mypurplefill,very thick,solid,rounded corners=2] (13.2,0.8) rectangle (13.8,-1.8);

\draw[] (7.5,-0.5) node[rotate=90] {\footnotesize  $\mathcal{V}$};
\draw[] (9.5,-0.5) node[rotate=90] {\footnotesize  $\mathcal{W}$};
\draw[] (11.5,-0.5) node[rotate=90] {\footnotesize $\mathcal{V}$};
\draw[] (13.5,-0.5) node[rotate=90] {\footnotesize$\mathcal{W}$};
\draw[] (10.5,-2) node {\footnotesize  $\mathbf{A}'$};
\draw[] (10.5,1.5) node {\footnotesize  $\mathbf{T}$};
\draw[] (10.5,0.25) node {\footnotesize  $\mathbf{Z}$};

\end{tikzpicture}
}
\caption{Schematic diagrams for each resource theory as shown by indices $(-,-)$. The first index specifies the type of communication in parallel with each leg of the process tensor. The second index indicates the type in parallel with each operation in the control sequence. Right pointing triangles signify measurement of a quantum state, left pointing triangles indicate a re-preparation of the subsystem. An empty gap between these corresponds to $\emptyset$ where the measurement and re-preparation are completely uncorrelated. A dotted line signifies classical communication between the measurement and re-preparation, as in $\EB$. A solid line without triangles means that any type of quantum communication is allowed, as per $\Q$. When the break in information flow is in parallel with the control operations, this indicates that the allowed superprocesses cannot imbue a process tensor with that kind of memory between steps. When the break is in parallel to a leg of the process tensor, the effective control sequence acting on the process tensor cannot contain the corresponding type of memory.}
\label{schematics}
\end{figure*}

In the next few subsections, we identify the full set of free resources in each theory. This is done by finding the image of our previously defined primitive free resource under all allowed superprocesses in that resource theory. As we have taken the background process to be the primitive free resource, the $e$ subsystem is no longer relevant to our analysis, and will be omitted. As explained in Sec.~\ref{commmaps}, the control operations ${\mathcal{A}'}^\s_{\alpha+1}$ are taken to be local in time and uncorrelated, implying that they can be represented by fixed output channels. The general form for evolution under the primitive free resource after the action of a superprocess is
\begin{align} \label{eq:superprocessprimitive}
  \llbracket\mathbf{T}^{(\mathcal{C},\mathcal{K})}_{n:0} | \mathbf{A}'_{n-1:0} \rrbracket &:= 
\llbracket\mathbf{T}^{\text{prim}}_{n:0}{|}  \mathbf{Z}^{(\mathcal{C},\mathcal{K})}_{n:0}|\mathbf{A}'_{n-1:0} \rrbracket
  \notag \\ &= \text{tr}_{\zsp} \!\left\{\! {\mathcal{A}'}^{\s'}_{n}\! \functioncomposition\limits_{\alpha=0}^{n-1}\! \left( \mathcal{M}^{{\szsp}}_{\alpha+1:\alpha}\!   \circ (\mathcal{K}^{\zs}_{\alpha} \otimes {\mathcal{A}'}^{\s'}_{\alpha} )\right)  \circ \mathcal{W}^{\szsp}_0[\rho^{\szsp}_0]\! \right\},
\end{align}
where now  $\mathcal{M}^{\szsp}_{\alpha+1:\alpha}=\mathcal{W}^{\szsp}_{\alpha+1} \circ (\mathcal{E}^{\s}_{\alpha+1:\alpha}\otimes \mathcal{C}^{\zsp}_{\alpha+1:\alpha}) \circ \mathcal{V}^{\szsp}_{\alpha}$ (cf. Eqs.~\eqref{eq:procexpand}~and~\eqref{eq:mdef}). See Appendix~\ref{appsteps} for more details on how $\mathcal{C}^{\zsp}_{\alpha+1:\alpha}$ determines the properties of $\mathcal{M}^{{\sezsp}}_{\alpha+1:\alpha}$, and see Appendix~\ref{app-interstep} for an outline of how, along with $\mathcal{K}^{\zs}_{\alpha}$, this specifies the free processes of the theories.

\subsubsection{\texorpdfstring{$(\emptyset, \emptyset)$}{(0,0)}, \texorpdfstring{$(\emptyset, \EB)$}{(0,EB)}, and \texorpdfstring{$(\emptyset, \Q)$}{(0,Q)}}
These three theories can be thought of as an experimenter who is performing a sequence of pre-determined maps on a system. The lack of any $\mathcal{C}$-type communication implies that the experimenter must rely on the process tensor to carry any information through time. The $(\emptyset, \Q)$ theory may be used to represent a scenario such as dynamical decoupling, where a pre-determined sequence of maps (unitary pulses) is used to preserve quantum information, given some kind of noisy environment. The $(\emptyset, \EB)$ experimenter is less capable; the quantum maps are replaced with measurements and re-preparations. This experimenter could not perform dynamical decoupling but could harness the Zeno effect. Finally, the experimenter actions in $(\emptyset, \emptyset)$ are completely destructive to all information. This theory could describe a situation like repeated scattering. 

In these theories, $\mathcal{C}^\zsp$ is a fixed output channel. For $\mathbf{Z}^{(\emptyset,\mathcal{K})}_{n:0}$ (top row of Fig.~\ref{schematics}), independent of $\mathcal{K}^{\zs}$, the right hand side of Eq.~\eqref{eq:superprocessprimitive} becomes
\begin{equation}
\text{tr}_{\zsp} \left\{ {\mathcal{A}'}^{\s'}_{n} \functioncomposition\limits_{\alpha=1}^{n} \left( \nu'^{\szsp}_{\alpha}  \text{tr}_{\szsp} \right)  \circ \mathcal{W}^{\szsp}_0[\rho^{\szsp}_0] \right\}.
\end{equation}

To get here, from Eq.~\eqref{eq:superprocessprimitive}, we observe the fact that -- in the absence of a useful process tensor -- $\mathcal{M}^{{\szsp}}_{\alpha+1:\alpha}$ destroys any links in the system between the past and future via $\mathcal{E}^{{\se} }_{\alpha+1:\alpha}$ and  $\mathcal{C}^{\zsp}_{\alpha+1:\alpha}$ which are both fixed output maps; an agent subject to these constraints must rely on a useful background process if they hope to preserve any information. Hence, $(\mathcal{E}^{{\se} }_{\alpha+1:\alpha} \otimes \mathcal{C}^{\zsp}_{\alpha+1:\alpha})[\mathcal{V}^{\szsp}_{\alpha} \circ (\mathcal{K}^{\zs}_{\alpha} \otimes \mathcal{A}'^{\s'}_{\alpha}) \dots]=\nu^{\szsp}_{\alpha+1}$, which is some arbitrary fixed state and $\nu'^{\szsp}_{\alpha}=\mathcal{W}^{\szsp}_{\alpha}[\nu^{\szsp}_{\alpha}]$. This means that any influence from $\mathcal{V}^{\szsp}_{\alpha}$, $\mathcal{K}^{\zs}_{\alpha}$, and $\mathcal{A}'^{\s'}_{\alpha}$ is erased, which is why they does not appear in the second equality. The fact that $\mathcal{K}^{\zs}_{\alpha}$ cannot influence the dynamics is the reason why these three theories have the same free processes. The primed state $\nu'^{\szsp}_{\alpha+1}$ indicates the state after all relevant maps have acted on it (just $\mathcal{W}^{\szsp}_{\alpha+1}$ in this case). The free resources in this theory are temporally uncorrelated processes. In the Choi representation, these free processes take the form of a sequence of unrelated inputs and outputs
\begin{equation}
     \Upsilon^{(\emptyset,\mathcal{K})}_{n:0}= \bigotimes_{j=0}^{n-1} \left(  \rho_{j+1} \otimes \mathbbm{1}_j \right) \otimes \rho_0, \quad \text{for} \ \mathcal{K} \in \{\emptyset,\EB,\Q\}
\end{equation}
where $\rho_{j+1}$ is an arbitrary state equal to $\text{tr}_{\zs}\{ \nu'^{\szsp}_{\alpha+1} \}$. Hence, all temporal correlations have resource value in these theories. 

While these three theories share a set of free resources, it is worth emphasising that they have different free superprocesses and therefore form distinct theories. In other words, different background processes will have different utility in each of them. When the background process is able to pass information in parallel to $\mathcal{C}^{\zsp}_{\alpha+1:\alpha}$, they will react differently to information passed by the process in parallel to $\mathcal{K}^{\zs}_{\alpha}$. For $(\emptyset, \emptyset)$, any communication is still useful, but for $(\emptyset, \EB)$ only quantum communication is valued. In the case of $(\emptyset, \Q)$, nothing further is useful.

\subsubsection{\texorpdfstring{$(\EB, \emptyset)$}{(EB,0)}} \label{sec:eb0}
For this theory, all actions of the experimenter are independent of previous steps, but the experimenter can preserve classical information independently of the background process.

An example of an experiment which can be described by this structure is ghost imaging~\cite{resolutionofquantumandclassicalghostimaging}. In ghost imaging, a pair of correlated subsystems $\s$ and $\an$ are sent into two different situations. $\s$ is sent to interact with an object -- which is analogous to the background process in the resource theory -- and then $\s$ is measured by a simple bucket detector. On the other hand, $\an$ is sent directly to be measured by a pixel array. This is repeated to produce an image of the object. Each iteration of the experimenter's actions is independent, which corresponds to our $\mathcal{K}$-type constraint, and the use of the correlated $\an$ subsystem means that classical information about the initial state of $\s$ is retained irrespective of the interaction with the object.

Now, $\mathcal{K}^\zs$ is a fixed output channel, but $\mathcal{C}^\zsp$ is entanglement breaking (Fig.~\ref{schematics} left column, middle row). The maps connecting adjacent $\mathcal{M}^{{\szsp}}$ in Eq.~\eqref{eq:superprocessprimitive} become fixed output maps whose outputs are $(\mathcal{K}^{\zs}_{\alpha} \otimes {\mathcal{A}'}^{\s'}_{\alpha}) [\mathcal{M}^{{\szsp}}_{\alpha:\alpha-1} \circ \cdots]  = \theta^{\zs}_{\alpha} \otimes \eta^{\s'}_{\alpha}$.

While each step of the process is entanglement breaking, none of the steps depend on each other. This can be expressed in the Choi representation as a tensor product of independent entanglement breaking channels as in Eq.~\eqref{eq:markovprocess}
\begin{equation}
    \Upsilon^{(\EB,\emptyset)}_{n:0}=  \bigotimes_{j=0}^{n-1}\left( \Lambda^\EB_{j+1:j} \right) \otimes \rho_0,
\end{equation}
where $\Lambda^\EB_{j+1:j}$ correspond to the Choi state of $\mathcal{M}^{{\szsp}}_{\alpha+1:\alpha}$. The free processes are Markovian but also entanglement breaking; fully quantum operations within the process are useful, as well as non-Markovianity.

\subsubsection{ \texorpdfstring{$(\Q, \emptyset)$}{(Q,0)}} \label{sec:q0}
In this theory, all actions performed by the experimenter are independent of previous steps, but the experimenter does not need to rely on the background process to preserve information between these actions. 

An example of an experimenter consistent with these constraints is ancilla assisted process tomography~\cite{ancillaassistedprocesstomography}. The system-ancilla joint state prescribed by this method must undergo state tomography, which requires a sequence of identical single-time experiments. In these experiments, one half of a maximally entangled pair $s$ is sent through the process to be characterised, while the other half $a$ is left untouched. The use of this maximally entangled ancilla $a$ ensures that information about the initial state of the system $s$ from each step is retained regardless of the influence of the process over $s$. While the actions of such an experimenter can only produce Markovian dynamics on their own, non-Markovianity might come into play via the process tensor. This would occur when the process being characterised changes as a result of the interaction with $s$. This can be seen as the quantum analogue to the ghost imaging example scenario of Sec.~\ref{sec:eb0}.

Now, $\mathcal{K}^\zs$ is a fixed output channel, but $\mathcal{C}^\zsp$ is fully quantum (Fig.~\ref{schematics} left column, bottom row). The derivation of these free states is identical to the $(\EB,\emptyset)$ case so we will not repeat it. However, since quantum communication is allowed in $\mathcal{C}^{\zsp}_{\alpha+1:\alpha}$, the Choi state of free processes becomes
\begin{equation}
    \Upsilon^{(\Q,\emptyset)}_{n:0}=  \bigotimes_{j=0}^{n-1} \left(\Lambda^\Q_{j+1:j} \right) \otimes \rho_0,
\end{equation}
where $\Lambda^\Q_{j+1:j}$ now correspond to the Choi states of $\mathcal{M}^{{\szsp}}_{\alpha+1:\alpha}$ which are now fully quantum. The free processes are still Markovian; but are now quantum. Using relative entropy for our class of monotones in Thm.~\ref{thm:monotones} produces the multipartite mutual information between all time-steps, suggesting that this is a strict quantum resource theory of non-Markovianity in the sense that the non-free processes are precisely the non-Markovian ones.

\subsubsection{\texorpdfstring{$(\EB, \EB)$}{(EB,EB)} } \label{sec:ebeb}
The scenarios this theory represents are similar to $(\emptyset, \EB)$, but the addition of the $\EB$ form of $\mathcal{C}$-type communication implies that classical information may be preserved arbitrarily well. The ghost imaging example is still pertinent in this resource theory, but here the choice of correlated pairs to send through could be adjusted during the protocol to potentially improve the efficiency of the experiment.

In this case (Fig.~\ref{schematics} centre row, centre column), $\mathcal{C}^\zsp$ and $\mathcal{K}^{\zs}$ are entanglement breaking channels. Here, Eq.~\eqref{eq:superprocessprimitive} resolves to a sum
\begin{equation}
\begin{aligned}
&\llbracket \mathbf{T}^{\text{prim}}_{n:0}{|}\mathbf{Z}^{(\EB,\EB)}_{n:0}|\mathbf{A}'_{n-1:0} \rrbracket =  \notag \\ & \sum_{\{k_\gamma\}} \text{tr}_{\s'} \Bigg\{   {\mathcal{A}'}^{\s'}_{n} \functioncomposition\limits_{\alpha=0}^{n-1} \left( \mathcal{M}^{{\szsp}}_{\alpha+1:\alpha}\circ \big(\nu^\zs_{k_{\alpha}+1} {\rm tr}_{\zs}\Pi^{\zs}_{k_{\alpha+1}} \otimes   {\mathcal{A}'}^{\s'}_{\alpha} \big) \right) \circ \mathcal{W}^{\szsp}_0[\rho^{\szsp}_0] \Bigg\}  
 \end{aligned}
 \end{equation}
where $\Pi^{\zs}_{k_{\alpha}}$ is a POVM and $\nu^\zs_{k_{\alpha}}$ is a re-preparation conditioned by that POVM, forming the entanglement breaking channel of $\mathcal{K}^{\zs}_{\alpha}$. $\{k_\gamma\}$ indexes the possible trajectories based on the set of outcomes of the POVM. If we define CP trace non-increasing maps ${\rm tr}_{\zs}\{\Pi^{\zs}_{k_{\gamma+1}} \mathcal{M}^{{\szsp}}_{\gamma+1:\gamma}[\nu^\zs_{k_{\gamma}}]\} =\mathcal{E}^{\s'}_{k_{\gamma+1},k_{\gamma}}$, which must themselves be entanglement breaking, the expression becomes
\begin{equation}
\begin{aligned}
  \sum_{\{k_\gamma\}} \text{tr}_{\s'} \left\{ \functioncomposition\limits_{\alpha=0}^{n-1} \left(  \mathcal{E}^{\s'}_{k_{\alpha+1}, k_{\alpha}} \circ {\mathcal{A}'}^{\s'}_{\alpha} \right)  \circ \mathcal{W}^{\szsp}_0[\rho^{\szsp}_0] \right\},
\end{aligned}
\end{equation}
such that the process itself (excluding the ${\mathcal{A}'}^{\s'}_{\alpha}$) can be expressed in the Choi representation as
\begin{equation}\label{eq:bb}
    \Upsilon^{(\EB,\EB)}_{n:0}= \sum_{\{k_\gamma\}}    p_{k_{0}} \tensorcomposition\limits_{j=0}^{n-1} \left( \Lambda^{\EB}_{k_{j+1},k_j} \right) \otimes \rho_{k_{0}},
\end{equation}
where ${\{k_\gamma\}}$ enumerates all possible trajectories to sum over, and $p_{k_{0}}$ is the probability of measurement outcome $k$ on the initial state. $\Lambda^{\EB}_{k_{j+1},k_j}$ is the Choi form of an entanglement breaking CP trace non-increasing map $\mathcal{E}^{\s'}_{k_{\alpha+1},k_{\alpha}}$. The full expression for the free processes corresponds to a fully general entanglement breaking process -- Markovian or otherwise. No quantum information can be retained within or between steps. Consequently, quantum entanglement \textit{and} quantum memory are resources in these theories. The set of free processes in Eq.~\eqref{eq:bb} contains all classical non-Markovian processes, see Ref.~\cite{classicalquantumstochasticprocesses} for a detailed discussion.

\subsubsection{\texorpdfstring{$(\Q, \EB)$}{(Q,EB)}} \label{ebq}
This resource theory is much like $(\Q, \emptyset)$, but the actions of the experimenter might be classically correlated between steps. The example of ancilla assisted process tomography could be extended to use an adaptive method. For this agent, the sequence of tomography runs that the experimenter performs need not be pre-determined, but can be chosen adaptively -- perhaps to more efficiently characterise the process with a smaller number of steps.

Here (Fig.~\ref{schematics} bottom row, middle column), $\mathcal{C}^\zsp$ is a fully quantum channel. However, $\mathcal{K}^\zs$ is an entanglement breaking channel. Finding the free processes in this theory is identical to $(\EB,\EB)$ and $(\EB,\Q)$ as in the Sec.~\ref{sec:ebeb}, except that $\mathcal{M}^\szsp$ is now fully quantum rather than entanglement breaking. The result differs accordingly; the Choi state of a free process in $(\Q, \EB)$ can be expressed as
\begin{equation} \label{eq:efm}
    \Upsilon^{(\Q,\EB)}_{n:0}= \sum_{\{k_\gamma\}} p_{k_{0}} \tensorcomposition\limits_{j=0}^{n-1} \left( \Lambda^{\Q}_{k_{j+1},k_j} \right) \otimes \rho_{k_{0}}.
\end{equation}
The difference to the previous cases is that now $\Lambda^{\Q}_{k_{j+1},k_j}$ is the Choi state of a fully quantum CP trace non-increasing map. We define that a process has \textit{entanglement free memory} (EFM) if its Choi state can be written as a convex combination of CP trace non-increasing maps as in Eq.~\eqref{eq:efm}

With a process that has EFM, past dynamics can only influence future dynamics if that past is able to be communicated by an entanglement breaking map. Hence, such a process will appear to be Markovian under control operations that have differing action on entangled states but not on any separable states; they will seem non-Markovian under control operations that do have differing action on separable states. Another way to think of this is that free processes in this theory are non-Markovian with respect to classical information, but Markovian with respect to information that is uniquely quantum~\cite{witnessingquantummemory}. An important subset of these processes are those arising from interactions with a classical system undergoing a stochastic process, such as a noisy magnetic field. In this case, if there is no other environment, the maps with Choi state $\Lambda^{\Q}_{k_{j+1},k_j}$ will be unitary, with trajectories weighted by pre-determined probabilities generated from the underlying classical stochastic process. However, more generally the probability of realising a sequence of $\Lambda^{\Q}_{k_{j+1},k_j}$ in Eq.~\eqref{eq:efm} will depend on the control operations applied. This kind of process lies strictly between quantum Markovianity and non-Markovianity.

\subsubsection{\texorpdfstring{$(\EB, \Q)$}{(EB,Q)}}
For the scenarios described by this theory, an experimenter would be able to perform arbitrary quantum maps, but could not guarantee that the quantum information those maps send out is preserved. Again, the ghost imaging example can be extended to describe this resource theory. The difference here is that the bucket detector is removed, and the output of $\s$ would be used as the input for the next step. Since $\an$ is measured at the end of each step, a new $\an$ would need to be correlated with this output at the beginning of each step. Hence, the quantum memory only lasts a single step.

In this case (Fig.~\ref{schematics} centre row, right column), $\mathcal{C}^\zsp$ is an entanglement breaking channel, hence $\mathcal{M}^{\szsp}$ will also be any entanglement breaking channel. However, $\mathcal{K}^{\zs}_{\alpha}$ is fully quantum. Since $\mathcal{C}^\zsp_{\alpha+1:\alpha}$ is an entanglement breaking channel, its action can be expressed as $\mathcal{C}^\zsp_{\alpha+1:\alpha}[\rho^\zsp] = \sum_{k_{\alpha+1:\alpha}} \nu^\zsp_{k_{\alpha+1:\alpha}} {\rm tr}\{\Pi^\zsp_{k_{\alpha+1:\alpha}} \rho^\zsp\}$, where $\Pi^{\zsp}_{k_{\alpha+1:\alpha}}$ is a POVM and $\nu^{\zsp}_{k_{\alpha+1:\alpha}}$ is a re-preparation conditioned by that POVM. Using this, we can express the action of a segment of the transformed process tensor as $\mathcal{M}^\szsp_{\alpha+1:\alpha}[\rho^\szsp] = \sum_{k_{\alpha+1:\alpha}} \mu^\szsp_{k_{\alpha+1:\alpha}} {\rm tr}\{\Omega^\szsp_{k_{\alpha+1:\alpha}} \rho^\szsp \}$. $\mu^\szsp_{k_{\alpha+1:\alpha}} = \mathcal{W}^{\szsp}_{\alpha+1}[\mathbbm{1}^\s\otimes \nu^\zsp_{k_{\alpha+1:\alpha}}]$ is the output state of $\mathcal{W}^{\szsp}_{\alpha+1}$ and $ {\rm tr}\{\Omega^\szsp_{k_{\alpha+1:\alpha}} \rho^\szsp \} =  {\rm tr}\{\Pi^\zsp_{k_{\alpha+1:\alpha}}\mathcal{V}^\szsp[\rho^\szsp] \}$ is the probability of the $k_{\alpha+1:\alpha}$ outcome of the (modified) POVM corresponding to $\Omega^\szsp_{k_{\alpha+1:\alpha}}$. This measurement and re-preparation is the only constraint of such an experimenter. For $\mathbf{Z}^{(\EB,\Q)}_{n:0}$, the right hand side of Eq.~\eqref{eq:superprocessprimitive}, in terms of the above objects, yields a sum of measurements and re-preparations
\begin{equation}
\sum_{\{k_{\gamma+1:\gamma}\}} \text{tr}_{\zsp} \left\{\mu^\szsp_{k_{n:n-1}} \right\} \prod_{\alpha = 1}^{n-1} (p_{k_{\alpha+1:\alpha},k_{\alpha:\alpha-1}})  p_{k_{1:0}}, 
\end{equation}
where $p_{k_{1:0}}=\text{tr}_{\szsp}\left\{\Omega^\szsp_{k_{1:0}}(\mathcal{K}^{\zs}_{0} \otimes {\mathcal{A}'}^{\s'}_{0})\circ \mathcal{W}^\szsp_0[\rho^\szsp_0] \right\}$ represents the probability of an initial evolution on trajectory $k$, while $p_{k_{\alpha+1:\alpha},k_{\alpha:\alpha-1}}=\text{tr}_{\szsp} \left\{\Omega^\szsp_{k_{\alpha+1:\alpha}}(\mathcal{K}^{\zs}_{\alpha} \otimes {\mathcal{A}'}^{\s'}_{\alpha})[\mu^\szsp_{k_{\alpha:\alpha-1}}] \right\}$ represents the probability of evolution on trajectory $k$ between steps $\alpha-1$ and $\alpha+1$. Consequently, the Choi state of a free process tensor in this theory has the form
 \begin{equation}
    \Upsilon^{(\EB,\Q)}_{n:0}= \sum_{\{k_{\gamma+1:\gamma}\}} \rho_{k_{n:n-1}} \tensorcomposition\limits_{j=1}^{n-1} \left( \Gamma^{\Q}_{k_{j+1:j},k_{j:j-1}} \right)\otimes \Gamma^{\Q}_{k_{1:0}}.
\end{equation}
Here, as in the $(\Q,\EB)$ theory, the $\Gamma^{\Q}$'s are Choi states of general CP trace non-increasing maps. However, here they have the same input and output spaces as the $\mathcal{A}^{\s'}_{\alpha}$ control operations, meaning that they cannot be interpreted in terms of trajectories of interleaved $\mathcal{E}^{\s'}$ and $\mathcal{A}^{\s'}$ maps. Instead, each $\Gamma^{\Q}_{k_{j+1:i},k_{j:j-1}}$ implies a control operation dependent probability, and the full sequence determines a distribution over final states $\rho_{k_{n:n-1}}$. In this sense there is full quantum memory between adjacent legs of the free processes, but due to the entanglement breaking nature of $\mathcal{C}$, it cannot propagate more than one step. For this reason we call this kind of memory \textit{single-step quantum memory (SSQM)}.

\subsubsection{\texorpdfstring{$(\Q, \Q)$}{(Q,Q)}}
This resource theory describes an `all-powerful' experimenter who can perform any sequence of actions, with any correlations between them. In this case (Fig.~\ref{schematics} bottom row, right column), $\mathcal{C}^\zsp$ and $\mathcal{K}^\zs$ are quantum channels. Since there are no restrictions on $\mathcal{C}^\zsp$ and $\mathcal{K}^\zs$, anything of the form of Eq.~\eqref{eq:superprocessprimitive} can be achieved for free. This resource theory is trivial since every possible process in the theory is a free process; the Choi state $\Upsilon^{(\Q, \Q)}_{n:0}$ can be that of any quantum process. Hence, nothing can be considered especially useful.

\subsection{Summary of Results} \label{sec:summaryofresults}

As we are particularly interested in non-Markovianity, which is a property of memory, we define classical memory $c$ and quantum memory $q$ as
\begin{equation}
\begin{aligned}
(c,q) := \begin{tabular}{l}\text{maximum number of steps retaining} \\ \text{(classical,quantum) information.}\end{tabular}
\end{aligned}
\end{equation}
The memory can take values $\{0,1,\infty\}$, which mean no information can be retained in time; information can be retained for at most one step; and information can be retained for the whole duration of the process. These quantities are formally related to the notion of Markov order, which has recently been generalised for quantum processes; there, the choice of instrument used for a measurement is taken into account in addition to the measurement outcomes~\cite{quantummarkovorder, finitemarkovorder}.

\begin{table*}[htbp]
\centering
\caption {Free Processes}
\begin{tabular}{cclll}
\multicolumn{2}{c}{} &
\multicolumn{3}{c}{$\mathcal{K}$}  \\
\cline{3-5}
\multicolumn{2}{c}{} &
\multicolumn{1}{c}{$\emptyset$} & 
\multicolumn{1}{c}{$\EB$} & 
\multicolumn{1}{c}{$\Q$} \\ 
\cline{3-5}
\multicolumn{1}{c|}{} & 
\multicolumn{1}{c}{$\emptyset$} & 
\multicolumn{1}{|l|}{Fixed output, $(0,0)$ }         & 
\multicolumn{1}{l|}{Fixed output, $(0,0)$ }   & 
\multicolumn{1}{l|}{Fixed output, $(0,0)$ }  \\ 
\cline{3-5} 
\multicolumn{1}{c|}{$\mathcal{C}$} &
\multicolumn{1}{c}{$\EB$}       & 
\multicolumn{1}{|l|}{Markovian EB, $(1,0)$}         & 
\multicolumn{1}{l|}{EB, $(\infty,0)$}   & 
\multicolumn{1}{l|}{SSQM, $(\infty,1)$}  \\
\cline{3-5} 
\multicolumn{1}{c|}{}          & 
\multicolumn{1}{c}{$\Q$}        & 
\multicolumn{1}{|l|}{Markovian, $(1,1)$}         & 
\multicolumn{1}{l|}{EFM, $(\infty,1)$}   & 
\multicolumn{1}{l|}{CPTP, $(\infty,\infty)$}  \\ 
\cline{3-5}\\
\end{tabular}
\caption{The sets of free resources for the nine resource theories. Each row represents a different form of $\mathcal{C}$, while each column represents a different form of $\mathcal{K}$. The classical $c$ and quantum $q$ memory length of free processes is assigned as $(c,q)$ for the free processes of each case. A memory length of $0$ implies that no information can be preserved temporally, while a memory length of $1$ means that adjacent steps are able to depend on each other. A memory length of $\infty$ denotes the case where no restrictions exist on the memory. In the top row, the inability of the allowed superprocesses to carry information through time renders the processes constructed from them no more useful than the primitive free process itself. The left column contains theories where the free processes are all Markovian but with more control going from top to bottom. The diagonal contains the most general possible processes subject to each of the three types of constraints; either fixed output, entanglement breaking (EB), or completely positive and trace preserving (CPTP). $(\Q,\EB)$ and $(\EB,\Q)$ are the most atypical theories, as they have free processes which lie in a grey area between quantum and classical non-Markovianity. Their free procecesses have entanglement free memory (EFM), and single-step quantum memory (SSQM) respectively.}
\label{ninecases}
\end{table*}

In all $(\emptyset,-)$ theories, the free processes are only those which have no temporal correlations. However, once some temporal correlations are present, each of these theories differ, as any $\mathcal{C}$-type communication can be better utilised if $\mathcal{K}$-type communication is allowed. In the $(\EB,\emptyset)$ and $(\Q,\emptyset)$ theories, all of the free processes are Markovian, although in the former case only classical operations are allowed for free. On the other hand, $(\Q,\emptyset)$ is a true resource theory of quantum non-Markovianity: there is a one-to-one correspondence between free-ness and Markovianity. The free processes in $(\EB,\EB)$ correspond to agents who can carry out any multi-time entanglement breaking process for free, while every quantum process is free in $(\Q,\Q)$. The free processes of $(\Q,\EB)$ have entanglement free memory, and $(\EB,\Q)$ has single step quantum memory. The length and quality of memory determines the type of process. In particular, classical non-Markovian processes are a strict subset of $(\EB,\EB)$, which in turn is a subset of $(\EB,\Q), \ (\Q,\EB), \ \text{and}, \ (\Q,\Q)$. The classes of free processes and the memory lengths are summarised in Table.~\ref{ninecases}. 

In our hierarchy, generally the Choi states of free processes satisfy some form of independence, while useful ones have entanglement or other correlations, implying that these monotones will take typically the form of multipartite correlation measures. In the simplest $( \emptyset, -)$ cases, the $M$ monotones of Thm.~\ref{thm:monotones} are simply the $D(\cdot,\cdot)$ measure applied between the Choi state of interest and that of the primitive free resource, e.g. the relative entropy to the maximally mixed state. In the opposite extreme $(\Q, \Q)$, the monotone is always zero since every process is free. Of particular importance, the monotone for $(\Q, \emptyset)$ is a direct measure of the non-Markovianity of a process tensor as quantified in Ref.~\cite{nonmarkov}. Furthermore, the monotone of $(\Q, \EB)$ is a measure of a stricter notion of uniquely quantum non-Markovianity. The properties of utility for each resource theory are summarised in Table.~\ref{ninecasesresources}.

\begin{table*}[htbp]
\centering
\caption {Properties of Utility}
\begin{tabular}{cclll}
\multicolumn{2}{c}{} &
\multicolumn{3}{c}{$\mathcal{K}$}  \\
\cline{3-5}
\multicolumn{2}{c}{} &
\multicolumn{1}{c}{$\emptyset$} & 
\multicolumn{1}{c}{$\EB$} & 
\multicolumn{1}{c}{$\Q$} \\ 
\cline{3-5}
\multicolumn{1}{c|}{} & 
\multicolumn{1}{c}{$\emptyset$} & 
\multicolumn{1}{|l|}{Correlations }         & 
\multicolumn{1}{l|}{Correlations }   & 
\multicolumn{1}{l|}{Correlations }  \\ 
\cline{3-5} 
\multicolumn{1}{c|}{$\mathcal{C}$} &
\multicolumn{1}{c}{$\EB$}       & 
\multicolumn{1}{|l|}{Inter-$\mathcal{E}$ corr., intra-$\mathcal{E}$ enta.}         & 
\multicolumn{1}{l|}{Entanglement}   & 
\multicolumn{1}{l|}{Intra-$\mathcal{E}$ entanglement}  \\
\cline{3-5} 
\multicolumn{1}{c|}{}          & 
\multicolumn{1}{c}{$\Q$}        & 
\multicolumn{1}{|l|}{Inter-$\mathcal{E}$ correlations}         & 
\multicolumn{1}{l|}{Inter-$\mathcal{E}$ entanglement}   & 
\multicolumn{1}{l|}{Nothing}  \\ 
\cline{3-5}\\
\end{tabular}
\caption{The properties of utility which may be exhibited by resources in each of the nine resource theories. These are expressed as properties of the Choi state of a process tensor, and can be measured with monotones as in Sec.~\ref{monotones}. In the top row, any correlations in the Choi state are useful, although differently useful in each theory. In every entry on the left column, correlations between individual partitions of the Choi state corresponding to legs of a process tensor are useful (memory between steps). In the top two entries of this column there are other properties which are also useful. However, in the bottom case only non-Markovianity is a resource, rendering it a true resource theory of quantum non-Markovianity. In the diagonal entries, any violations of the specified constraint is seen as a resource. In $(\Q,\EB)$ and $(\EB,\Q)$, only non-Markovianity that is explicitly quantum in nature can be useful.}
\label{ninecasesresources}
\end{table*}

\subsection{Combining Resources} \label{sec:combiningresources} 
\subsubsection{Convex Mixing}
There are at least three distinct notions where one could think about combining process tensor resources. The first would be the simple addition or convex `mixing' of process tensors. For two process tensors $\mathbf{T}$ and $\mathbf{S}$, another valid process tensor (in Choi form) would be $(\Upsilon_{\mathbf{T}}+r\Upsilon_{\mathbf{S}})/(1+r)$ for some $r \in \mathbb{R}_{\geq 0}$. For the five theories where either $\mathcal{C}$ or $\mathcal{K}$ forbid communication, the corresponding sets of free resources are not convex; i.e., combining resources in this way is not free. In contrast, since the set of separable states is convex, the other four resource theories in our hierarchy will have convex sets of free resources. This seems to suggest that some amount of temporal correlations would be the norm for multi-time processes, but that specifically quantum features are distinctly rarer. Numerical studies using our robustness measures may be able to add evidence to this claim.

\subsubsection{Sequential Composition} \label{sec:sequentialcomposition}
A second notion of combining process tensors applies for pairs of process tensors where at least one includes an input Hilbert space for $\s$, rather than an initial state for $\s$. These processes can be composed with each other with a new slot for an additional control operation between the final free system-environment evolution of the first process tensor, and the first free system-environment evolution of the second process tensor. The number of steps in the resultant process tensor is the sum of the number of steps for the constituents. To ensure this operation is resource non-generating, we assume the agent performing superprocess has the same capability over all steps. Similarly, the fact that the environment state is discarded at the join implies that the act of joining cannot add resource value. Furthermore, for the $(-,\emptyset)$ theories, this notion of composition reduces to the tensor product of process tensor Choi states $\Upsilon_{\mathbf{T}} \otimes \Upsilon_{\mathbf{S}}$, since the inability of the agent to carry information between steps means that these can be treated as totally independent experiments. This is not true in general, as temporal order may matter to the experimenter.

\subsubsection{Link Product}
A third notion of combining resources is to add steps to link the final system and environment states of one process tensor to another. This would be like adding steps to an existing process tensor, and can be achieved via the link product~\cite{quantumnetworks} $\mathbf{T} * \mathbf{S}$. This is different to the previous notion of combining resources because the state of the environment is not destroyed by a partial trace at the end of the initial process tensor; instead the final environment state is connected to the input of the next step. The link product is also versatile enough to account for the notion of parallel composition of multiple $s$ subsystems. However, one must be careful when using the applying the link product to resource theories, since the result may not still be free. Similarly, shortening the duration of a process tensor by removing steps is a valid operation~\cite{quantumnetworks}, but is not free in general. 

\subsubsection{Adaptivity Considerations}
In many contexts it may be possible to sidestep these additional considerations by simply pre-defining the process tensor to have more steps than is needed, and setting some of the experimenter's operations be the identity operation. This allows the number of non-trivial operations to be flexible -- and possibly even chosen adaptively -- while the total number of operations remains the same. In practice, the actual number of steps a process tensor has would be determined by the minimal time resolution of the experiment, along with the maximum length of the experiment. The generalised Kolmogorov extension theorem~\cite{kolmogorov} guarantees the existence of an infinitely fine-grained underlying process tensor that can be interacted with at any possible sequence of times.

So far, we have not specified whether the superprocess is deterministic or stochastic. This is because our formalism allows for either possibility. Stochastic strategies play an important role in many information-theoretic protocols. Moreover, there is an important distinction between arbitrary processes of any finite duration, and those of truly unbounded duration, as previously demonstrated in the case of LOCC~\cite{everythingyoualwayswantedtoknow}. For example, one may seek to conditionally end an experiment only after achieving a particular outcome -- which may not be explicitly guaranteed even in the case where the probability of the desired outcome approaches one as time goes on. This generalised Kolmogorov extension theorem also guarantees the existence of such infinite \emph{duration} process tensors.

However, clearly the guaranteed existence of a process tensor says nothing about its value in any given resource theory. As such, further work would be required to make definitive statements about the non-trivial question of how resource value is affected when resources are allowed to become temporally infinite. Establishing notions of composition which are amenable to the theory in question are a logical starting point for such an investigation.

\section{What do resource theories of non-Markovianity mean?}
Resource theories arise from preorders, which in turn are defined by convertibility under allowed resource transformations~\cite{mathres}. As such, a useful resource can be seen as one which can be converted into another which the agent cannot obtain otherwise. For example, athermal states are required to produce work in thermodynamic resource theories. Similarly, in a theory where Markovian process tensors cannot be transformed via allowed superprocesses into a desired process tensor, but non-Markovian ones can be, the latter kind of process tensor will be seen as useful.

This notion for the utility of non-Markovianity takes advantage of the fact that process tensors naturally induce a strict, operationally well motivated definition for Markovianity -- taking into account the influence of (brief) experimenter interventions, and not suffering from the initial correlation problem~\cite{introqod}. Importantly, it reduces to the classical notion in the appropriate limit (which can be obtained in our framework by restricting all tensors to preserve a fixed basis),  setting our theories apart from those based on other notions of non-Markovianity~\cite{nm2}. Since CP divisibility does not imply Markovianity~\cite{divisibility} in the sense of multi-time correlations, a resource theory of CP divisibility is not equivalent to what we have investigated. The properties each would consider to be useful are distinct: some processes which are free in a resource theory of CP divisibility would not be free in a resource theory of multi-time correlations. There may be temporal correlations in CP divisible processes that could be put to use. Likewise, since a process being CP divisible implies that it has no information backflow, a resource theory based on this notion would also not characterise the utility of all multi-time correlations.

\subsection{When is non-Markovianity useful? When is it useless?}
The next question to ask is under which physical circumstances might non-Markovianity become useful. Our results in Sec.~\ref{Results} split resource theories of multi-time processes into a handful of broad categories, leaving room for the details of the resource theory to be further specified. In these general terms, the conditions which make non-Markovianity useful are clear -- when the agent of the resource theory cannot carry information about the main subsystem from one evolution under the process tensor to another via an ancillary subsystem, forgetting the state of the main system will erase correlations between the future and the past unless the background process is non-Markovian. Non-Markovian background processes can enhance the transference of information through the process, but will only be deemed useful if the agent is not already capable of storing their own external memory of the main subsystem. Furthermore, a resource theory where non-Markovianity is the \textit{sole} resource is one where the agent can carry an arbitrarily faithful memory between control operations, but each control operation is destructive to that memory.

In our example of ancilla assisted process tomography for $(\Q,\emptyset)$, it is possible to see how non-Markovianity can become useful. An advantage of ancilla assisted process tomography is that it only requires the preparation of one particular system-ancilla joint state~\cite{ancillaassistedprocesstomography}. The multiple `steps' of this tomography method are those required for state tomography on this joint state -- which ideally would be prepared identically for every step. As one gathers the necessary number of samples to estimate the expectation values of each basis operator, they may encounter statistically anomalous measurement outcomes for a number of reasons. It may simply be a statistical fluke, or there may be a measurement error, or the joint state may have been incorrectly prepared. If the anomalous result is caused by the latter, non-Markovian processes can become advantageous in handling the associated error. This is because for non-Markovian processes this type of error in particular might ripple through into subsequent steps in the form of correlated measurement outcomes, providing a signature that they are not simply statistical flukes, and should be rejected. This type of error mitigation would be impossible for Markovian processes -- those which have a zero value for every appropriate monotone. One can think of this scenario a kind of temporal analogy to secret sharing~\cite{secretsharing}, where a minimum number of correlated time-steps are required to decipher the secret, rather than a minimal number of parties.

Monotones can easily decide when a task is impossible, but perhaps the most important unanswered question in this discussion is whether there exist non-Markovianity monotones which predict when a particular task \emph{can} be completed. We hope that such monotones can be devised for the $(\Q,\emptyset)$ theory.

$(-,\emptyset)$ theories all have non-Markovianity as a resource, and it is the sole resource in $(\Q,\emptyset)$. However, these are far from the only theories which might have non-Markovianity as a resource. Firstly, there are theories where some non-Markovian background processes are useful but not others, as seen in $(\Q,\EB)$. Furthermore, there exist theories which lie between the categories we have studied. For example, relaxing the condition of $\emptyset$ from `no transmission of information on the system or ancilla' to just `no transmission of information on the ancilla', would yield a theory with an agent which is more capable than that of $\emptyset$ but less powerful than that of $\Q$. This agent cannot carry an external memory, but their operations need not be destructive to the main subsystem; we believe this set of constraints will be relevant to many quantum control tasks. Moving toward even greater specificity, one could restrict the agent to performing only a subset of operations within the overall set induced by $\EB$ or $\Q$ -- perhaps by restricting the allowed Hamiltonians and ancilla based on experimental considerations. These more specific theories will bring resource theories of multi-time processes, including those where non-Markovianity is a resource, towards the realm of direct physical applications.

\subsection{Relation to Other Resource Theories}
Our work here is based on a general framework for quantum processes which can be probed at multiple times. As such, we expect that many `resource theories of processes' (see Ref.~\cite{review}), most of which take individual maps as resources, can be extended to encompass multi-step scenarios using our framework; this include many important recent developments in resource theories of channels~\cite{operationalresourcetheoryofquantumchannels,resourcetheoriesofquantumchannels,quantifyingoperationswithanapplication,inputoutputgames}. Another completely distinct resource theoretic description of channels utilises discrimination tasks~\cite{generalresourcetheoriesinquantummechanics}. This approach elucidates the connection between channels being non-free and being useful for a particular task. This could be extended to the multi-time case with discrimination tasks for process tensors. Resource theoretic results on the entanglement of bipartite channels~\cite{entanglementofbipartitechannels,resourcetheoryofentanglementforbipartite,fundamentallimitsonthecapacities,bipartitequantuminteractions,entanglementandsecretkey} have utilised the idea of quantum strategies~\cite{quantumstrategies,towardageneraltheoryofquantumgames,onameasureofdistanceforquantumstrategies,fidelityofquantumstrategieswithapplications}, which allows for the possibility of dynamical resources being quantum combs. Discrimination of quantum strategies~\cite{resourcetheoryofasymmetricdistinguishability} has also been investigated, which involves transformations from strategies to strategies; this is in a similar spirit to our approach of transforming process tensors with superprocesses. 

There has been a great deal of interest in the utility of non-Markovianity, giving rise a plethora of different resource theories involving non-Markovianity, as well as other related properties. The resource theory described in~\cite{theoryofmaps} captures similar behaviour to that of the $(-,\EB)$ resource theories. There have also been resource theories of divisible operations~\cite{rtnmdiv, robustness}, which have provided numerous results which are related to the $(\Q,\emptyset)$ theory in our hierarchy. This theory is formulated in terms of parameterised families of quantum maps (from density operators to density operators) rather than process tensors, which are experimentally and computationally simpler than a process tensor approach, but cannot fully account for multi-time correlations. A process matrix approach has been used to quantify the capability of processes to produce uniquely quantum effects, which has also led to a measure of non-Markovianity~\cite{quantifyingquantummechanical,quantumprocesscapability,nonmarkovianityofphotondynamics}. This method has also been used to study non-Markovian effects in real experimental data~\cite{nonmarkovianityofphotondynamics}. Finally, another class related theories non-Markovianity are explored in Refs.~\cite{tripartite, communicationcostfornonmarkovianity}; here the Markov condition is derives from the conditional quantum mutual information between subsystems of multipartite states, as opposed to explicit temporal correlations. See Refs~\cite{finitemarkovorder, quantummarkovorder} for relation between Markov chain states and quantum processes.

\section{Conclusion} \label{sec:conclusion}
At the heart of this investigation is the question: \textit{can an uncontrolled background process useful be useful to an agent?} In order to answer this question, we have presented a framework for resource theories of quantum processes. Here, the descriptor of the process, known as the process tensor, takes the role of resources, and a new construction -- the superprocess -- serves the purpose of transforming resources. Under this framework, a background process that can be simulated by agent actions on a pre-defined free process holds no value, while one that is not producible may have the potential to be used to perform tasks which were previously unavailable to the agent. We have used this framework to construct an operationally motivated hierarchy of nine theories corresponding to a realistic client-contractor-server scenario, and found the associated free processes and monotones. In many of these theories, notions of non-Markovianity are the main determinant of the value of a process. Futhermore, $(\Q,\emptyset)$ is a true theory of quantum non-Markovianity. The $(\Q,\EB)$ and $(\EB,\Q)$ free processes exhibited properties lying between classical and quantum non-Markovianity, implying that the processes of value exhibit a kind of non-Markovianity which is uniquely quantum. 

While the background process is noisy, we assume that any map contained within the superprocess can be performed perfectly. We did limit control by employing our three classes of communication, but the effectiveness of the agent at implementing the allowed processes remains unlimited. Thus, a promising direction for future work is looking into theories which have more stringent restrictions. For example, $\mathcal{V}$ and $\mathcal{W}$ could be subject to their own channel resource theories. Furthermore, recent work on resource theories of measurements~\cite{resourcetheoryofquantummeasurements} could enable our theories whic have the $\EB$ class of communication to be further dissected.

There are numerous physical settings which might make use of a resource theory of multi-time processes. For example, the client-contractor-server scenario detailed in Sec.~\ref{communicationtheories} might have applications in cloud quantum computing, where quantum memory may be restricted on the server side. Superprocesses could also be used to model untrusted devices, allowing resource-theoretic investigations within the device independent paradigm, which has recently been extended to include multi-time causal processes~\cite{quantumviolationsofaninstrumentaltest}. More generally, superprocesses provide a natural framework for investigating quantum control problems in a scenario where multiple interventions on a system are possible.

We hope that our formalism may enable known results to be applied in new contexts. One might seek construct a new multi-time process theory from state or channel theories which are already well understood. Furthermore, multi-time process resource theories are yet to be studied in the same kind of depth as state or channel theories. Properties such as resource distillation and dilution, and single shot vs asymptotic transformations are uncharted territory. We present an open invitation to sail the seas of Hilbert space using uncontrolled background processes. 

\subsubsection*{Acknowledgements}
\textit{G. B. is supported by an Australian Government Research Training Program (RTP) Scholarship. This project was made possible through the support of a grant from the John Templeton Foundation. The opinions expressed in this publication are those of the authors and do not necessarily reflect the views of the John Templeton Foundation. B. Y. acknowledges funding from the European Research Council (ERC) under the Starting Grant GQCOP (Grant no. 637352). K. M. is supported through the Australian Research Council Future Fellowship FT160100073.}

\bibliographystyle{apsrev4-1_custom}
\bibliography{refs}

\appendix \label{appendix}

\section{Superprocesses in the Choi Representation} \label{appchoi}
The Choi state of an $n \rightarrow n$ superprocess is given by
\begin{equation}
\begin{aligned}
\Psi_{n:0}= \text{tr}_\zsp \Bigg\{   \Big(  \functioncomposition\limits_{j=0}^{n-1} \big(  S^{\s,{\mathtt{i}'_{j+1}}} \circ \mathcal{W}^{\szsp}_{j+1} \circ S^{\s,{\mathtt{i}_j}} \circ \mathcal{V}^{\szsp}_{j}  \big)  \circ S^{\s,{\mathtt{i}'_0}} \circ \mathcal{W}^{\szsp}_{0} \Big)  \\
\Big(  \tensorcomposition\limits_{j=0}^{n-1}   \big(    \psi^{{\mathtt{i}'_{j+1}},{\mathtt{o}_{j+1}}} \otimes \psi^{{\mathtt{i}_{j}},{\mathtt{o}'_{j}}} \big) \otimes \psi^{\s,{\mathtt{o}_0}} \otimes  \rho_0^{\zsp} \Big)\Bigg\}.
\end{aligned}
\end{equation}
Our index system for Hilbert spaces (illustrated in Fig.~\ref{choistateindices}) has three variables. The first is the time-step, denoted by the subscript. Additionally, the unprimed indices indicate that the Hilbert spaces are shared with the process tensor, while the primed indices correspond to Hilbert spaces shared with the control sequence. Finally, each Hilbert space is labelled by whether it corresponds to an input $\mathtt{i}$ or output $\mathtt{o}$ of the respective object that the superprocess connects to on that index.

Summing over like indices from our aforementioned expression for the Choi state of the process tensor (Eq.~\eqref{processchoi}) corresponds to the action of the superprocess on the process tensor to form a new effective background process. The action of an $n \rightarrow n$ superprocess on a length $n$ process tensor is:
\begin{equation}
    \llbracket \mathbf{T}_{n:0}|\mathbf{Z}_{n:0}=\text{tr}_{\mathtt{i},\mathtt{o}} \{ (  \mathbbm{1}^{\mathtt{i}',\mathtt{o}'} \otimes \Upsilon^\text{T}_{n:0} ) \Psi_{n:0}    \},
\end{equation}
where the lack of time-step subscripts indicates that we are summing for all time-steps. $\text{T}$ indicates the partial transpose of the Choi representation of the process tensor.

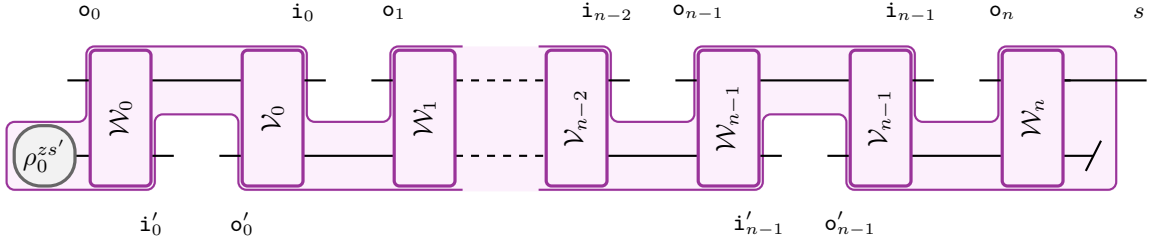
\begin{figure*}
\centering

\begin{tikzpicture}[scale=1]

\draw[mypurplefill,fill=mypurplefill, very thick,solid,rounded corners=3] (6,0.95)  -- (5.05,0.95) -- (5.05,-0.05) -- (3.95,-0.05) -- (3.95,0.95) -- (3.05,0.95) -- (1.05,0.95) -- (1.05,-0.05)  -- (0,-0.05) -- (0,-0.95) -- (1.95,-0.95) -- (1.95,0.05) -- (3.05,0.05) -- (3.05,-0.95) -- (6,-0.95) -- (7,-0.95)--  (9.95,-0.95) -- (9.95,0.05) -- (11.05,0.05) -- (11.05,-0.95) -- (14.6,-0.95) -- (14.6,0.05) -- (14.6,0.95) -- (13.05,0.95) -- (13.05,-0.05) -- (11.95,-0.05) -- (11.95,0.95) -- (11.05,0.95) --  (9.05,0.95) -- (9.05,-0.05) -- (7.95,-0.05) -- (7.95,0.95) -- (7,0.95)  ;

\draw[mypurple, thick,solid,rounded corners=3] (6,0.95)  -- (5.05,0.95) -- (5.05,-0.05) -- (3.95,-0.05) -- (3.95,0.95) -- (3.05,0.95) -- (1.05,0.95) -- (1.05,-0.05)  -- (0,-0.05) -- (0,-0.95) -- (1.95,-0.95) -- (1.95,0.05) -- (3.05,0.05) -- (3.05,-0.95) -- (6,-0.95) ;

 \draw[mypurple,thick,solid,rounded corners=3] (7,-0.95)--  (9.95,-0.95) -- (9.95,0.05) -- (11.05,0.05) -- (11.05,-0.95) -- (14.6,-0.95) -- (14.6,0.05) -- (14.6,0.95) -- (13.05,0.95) -- (13.05,-0.05) -- (11.95,-0.05) -- (11.95,0.95) -- (11.05,0.95) --  (9.05,0.95) -- (9.05,-0.05) -- (7.95,-0.05) -- (7.95,0.95) -- (7,0.95)  ;

\draw[black,  thick,solid] (13.9,0.5) -- (15,0.5);
\draw[black,  thick,dashed] (5.9,0.5) -- (7.1,0.5);
\draw[black,  thick,dashed] (5.9,-0.5) -- (7.1,-0.5);
\draw[black,  thick,solid] (14.2,-0.7) -- (14.4,-0.3);

\draw[black,  thick,solid] (0.8,0.5) -- (4.2,0.5);
\draw[black,  thick,solid] (4.8,0.5) -- (5.2,0.5);
\draw[black,  thick,solid] (7.2,0.5) -- (8.2,0.5);
\draw[black,  thick,solid] (8.8,0.5) -- (12.2,0.5);
\draw[black,  thick,solid] (12.8,0.5) -- (14,0.5);

\draw[black,  thick,solid] (0.8,-0.5) -- (2.2,-0.5);
\draw[black,  thick,solid] (2.8,-0.5) -- (5.2,-0.5);
\draw[black,  thick,solid] (7.2,-0.5) -- (10.2,-0.5);
\draw[black,  thick,solid] (10.8,-0.5) -- (14.3,-0.5);

\draw[mypurple,fill=mypurplefill,very thick,solid,rounded corners=2] (1.1,0.9) rectangle (1.9,-0.9);
\draw[mypurple,fill=mypurplefill,very thick,solid,rounded corners=2] (3.1,0.9) rectangle (3.9,-0.9);
\draw[mypurple,fill=mypurplefill,very thick,solid,rounded corners=2] (5.1,0.9) rectangle (5.9,-0.9);
\draw[mypurple,fill=mypurplefill,very thick,solid,rounded corners=2] (7.1,0.9) rectangle (7.9,-0.9);
\draw[mypurple,fill=mypurplefill,very thick,solid,rounded corners=2] (9.1,0.9) rectangle (9.9,-0.9);
\draw[mypurple,fill=mypurplefill,very thick,solid,rounded corners=2] (11.1,0.9) rectangle (11.9,-0.9);
\draw[mypurple,fill=mypurplefill,very thick,solid,rounded corners=2] (13.1,0.9) rectangle (13.9,-0.9);
\draw[mygrey,fill=mygreyfill,very thick,solid,rounded corners=10] (0.1,-0.1) rectangle (0.9,-0.9);

\draw[] (1.1,1.4) node {\footnotesize  ${\mathtt{o}_0}$};
\draw[] (3.9,1.4) node {\footnotesize  ${\mathtt{i}_0}$};
\draw[] (5.1,1.4) node {\footnotesize  ${\mathtt{o}_1}$};
\draw[] (7.9,1.4) node {\footnotesize  ${\mathtt{i}_{n-2}}$};
\draw[] (9.1 ,1.4) node {\footnotesize  ${\mathtt{o}_{n-1}}$};
\draw[] (11.9,1.4) node {\footnotesize  ${\mathtt{i}_{n-1}}$};
\draw[] (13.1,1.4) node {\footnotesize  ${\mathtt{o}_{n}}$};
\draw[] (14.9,1.4) node {\footnotesize  $\s$};

\draw[] (1.9,-1.4) node {\footnotesize  ${\mathtt{i}'_0}$};
\draw[] (3.1,-1.4) node {\footnotesize  ${\mathtt{o}'_0}$};
\draw[] (9.9,-1.4) node {\footnotesize  ${\mathtt{i}'_{n-1}}$};
\draw[] (11.1,-1.4) node {\footnotesize  ${\mathtt{o}'_{n-1}}$};

\draw[] (0.5,-0.5) node {\small $\rho_{0}^{\zsp}$};
\draw[] (1.5,0) node[rotate=90] {\small  $\mathcal{W}_{0}$};
\draw[] (3.5,0) node[rotate=90] {\small  $\mathcal{V}_{0}$};
\draw[] (5.5,0) node[rotate=90] {\small  $\mathcal{W}_{1}$};
\draw[] (7.5,0
) node[rotate=90] {\small  $\mathcal{V}_{n-2}$};
\draw[] (9.5,0) node[rotate=90] {\small  $\mathcal{W}_{n-1}$};
\draw[] (11.5,0) node[rotate=90] {\small  $\mathcal{V}_{n-1}$};
\draw[] (13.5,0) node[rotate=90] {\small  $\mathcal{W}_{n}$};

\end{tikzpicture}

\caption{Hilbert spaces of the superprocess. The top Hilbert spaces connect with the process tensor (unprimed), while the bottom ones connect to the control sequence (primed). The spaces are also labelled by whether they are incoming $\mathtt{i}$ or outgoing $\mathtt{o}$ from the perspective of the object connecting to the superprocess. There is also a subscript label for the time-step. s is the final output of the process.} \label{choistateindices}
\end{figure*}

\section{Intra-Step Behaviour} \label{appsteps}
Here, we derive the how an individual map from a process tensor is affected by a superprocess in isolation. When reduced to act on only one map $\mathcal{E}^{\se}_{\alpha+1:\alpha}$, a superprocess is just a supermap $\mathbf{S}_{\alpha+1:\alpha}$. We take $\mathcal{E}^{\se}_{\alpha+1:\alpha}$ to be the primitive free resource, which is a fixed output map, and find its image under all allowed supermaps. The result hinges on the form of $\mathcal{C}^{\zsp}_{\alpha+1:\alpha}$, which is analysed for each of the three classes of communication. In the next Sec.~\ref{app-interstep}, we will outline how these results are applied to find the free processes in Sec.~\ref{Results}.

\subsection{\textit{\texorpdfstring{$\mathcal{C}_{\alpha+1:\alpha}$}{C\_a+1:a} is a fixed output channel:}}
$\mathbf{S}_{\alpha+1:\alpha}$ is applied to $\mathcal{E}^{\se}_{\alpha+1:\alpha}$, with a fixed output communication channel explicitly accounted for via the inclusion of $\mathcal{C}^{\zsp}_{\alpha+1:\alpha}$ between $\mathcal{V}^{\szsp}_{\alpha+1:\alpha}$ and $\mathcal{W}^{\szsp}_{\alpha+1}$. $\mathcal{E}^{\se}_{\alpha+1:\alpha}$ acts on a state $\rho^{\sezsp}_\alpha$ yielding
\begin{equation} 
\begin{aligned}
& \mathbf{S}_{\alpha+1:\alpha}[\mathcal{E}^{{\se} }_{\alpha+1:\alpha}][\rho^{{\sezsp}}_\alpha]    \\ & = \mathcal{W}^{{\szsp}}_{\alpha+1} \circ (\mathcal{E}^{{\se} }_{\alpha+1:\alpha} \otimes \mathcal{C}^{\zsp}_{\alpha+1:\alpha} ) \circ \mathcal{V}^{{\szsp}}_{\alpha}[\rho^{{\sezsp}}_\alpha] \\ &     =\text{tr} \{ \mathcal{V}^{{\szsp}}_{\alpha}[\rho^{{\sezsp}}_\alpha] \} \text{tr}_\e \{\mathcal{W}^{{\szsp}}_{\alpha+1}[\sigma^{{\se}}_{\alpha+1} \otimes \tau^{\zsp}_{\alpha+1}]\} \\ & =  \delta^{\szsp}_{\alpha+1}.
\end{aligned}
\end{equation}
$\sigma^{{\se}}_{\alpha+1}$, $\tau^{\zsp}_{\alpha+1}$, and $\delta^{\szsp}_{\alpha+1}$ are arbitrary fixed states, so the image of the primitive resource under any allowed supermaps is still just other fixed output maps. 

\subsection{\textit{\texorpdfstring{$\mathcal{C}_{\alpha+1:\alpha}$}{C\_a+1:a} is any quantum channel:}} \label{app-quantumcase}
$\mathbf{S}_{\alpha+1:\alpha}$ is applied to $\mathcal{E}^{\se}_{\alpha+1:\alpha}$, which then acts on a state $\rho^{\sezsp}_{\alpha}$ yielding
\begin{equation} 
\mathbf{S}_{\alpha+1:\alpha}[\mathcal{E}^{{\se} }_{\alpha+1:\alpha}][\rho^{{\sezsp}}_{\alpha}]   =  \mathcal{W}^{{\szsp}}_{\alpha+1} \circ (\mathcal{E}^{{\se} }_{\alpha+1:\alpha} \otimes \mathcal{C}^{\zsp}_{\alpha+1:\alpha} ) \circ \mathcal{V}^{{\szsp}}_{\alpha}[\rho^{{\sezsp}}_{\alpha}] .
\end{equation}
In order to move forward with this expression, one must acknowledge that when no constraints are placed on the supermap, it is possible to convert any $\mathcal{E}^{\se}_{\alpha+1:\alpha}$ into any other map. A sufficient argument for this is below:
\begin{enumerate}
\item Within a subsystem $\s'$ of the total ancilla, create a free state $\beta$ in the underlying resource theory (via $\mathcal{R}_\beta^{{\s'}}$ which is a fixed output map to that free state), and perform a swap operation between $\beta$ and the main subsystem $\s$. This step is equivalent to the action of $\mathcal{V}^{\szsp}_{\alpha}$:
\begin{equation} 
\begin{aligned}
 (S^{{\s \s'}}_{\alpha} \circ \mathcal{R}_{\alpha,\beta}^{{\s'}})  [\rho^{{\sezsp}}_{\alpha} ] = \rho^{\e \z \s'}_{\alpha} \otimes \beta^{\s}_{\alpha}.
\end{aligned}
\end{equation}
\item $\beta$ is fed through $\mathcal{E}^{{\se} }_{\alpha+1:\alpha}$ and $\rho$ is communicated to $\mathcal{W}^{\szsp}_{\alpha+1}$ via $\mathcal{C}^{\zsp}_{\alpha+1:\alpha}$, which is an arbitrary CPTP map. The expression for this step is
\begin{equation} 
\begin{aligned}
(\mathcal{E}^{{\se} }_{\alpha+1:\alpha}\otimes \mathcal{C}^{\zsp}_{\alpha+1:\alpha}) [ \rho^{\e \z \s'}_{\alpha} \otimes \beta^{\s}_{\alpha}].
\end{aligned}
\end{equation}
\item $\s$ and $\s'$ are swapped back, then (e) and $\s'$ are discarded. This step is equivalent to the action of $\mathcal{W}^{{\sa}}_{\alpha+1}$:
\begin{equation} 
  \text{tr}_{\e,\s'} \{ S^{{\s \s'}}_{\alpha+1} \circ (\mathcal{E}^{{\se} }_{\alpha+1:\alpha}\otimes \mathcal{C}^{\zsp}_{\alpha+1:\alpha}) [ \rho^{\e \z \s'}_{\alpha} \otimes \beta^{\s}_{\alpha}] \}  =\mathcal{C}^{{\s \z}}_{\alpha+1:\alpha}[\rho^{{\s \z}}_{\alpha} ]. 
\end{equation} 
\end{enumerate}
Thus, the image an arbitrarily useless bipartite quantum channel under all supermaps with quantum communication is all bipartite quantum channels. This procedure was to remove $\mathcal{E}^{{\se} }_{\alpha+1:\alpha}$ from mathematical consideration, and transplant it with $\mathcal{C}^{{\s \z}}_{\alpha+1:\alpha}$ instead.

\begin{figure}
\centering

\begin{tikzpicture}[scale=1.1]

\draw[mypurple,fill=mypurplefill,thick,solid,rounded corners=2] (8,-1) rectangle (10,2);

\draw[mypurple,fill=mypurplefill,thick,solid,rounded corners=2] (11,-1) rectangle (13,2);

\draw[black,  thick,solid] (7.5,2.5) -- (13.5,2.5);
\draw[black,  thick,solid] (7.5,1.5) -- (13.5,1.5);
\draw[black,  thick,solid] (7.5,0.5) -- (13.5,0.5);
\draw[black,  thick,solid] (7.5,-0.5) -- (13.5,-0.5);
\draw[black,  thick,solid] (9.5,1.5) -- (9.5,-0.5);
\draw[black,  thick,solid] (11.5,1.5) -- (11.5,-0.5);
\draw[black,  thick,solid] (9.4,1.4) -- (9.6,1.6);
\draw[black,  thick,solid] (9.6,1.4) -- (9.4,1.6);
\draw[black,  thick,solid] (11.4,1.4) -- (11.6,1.6);
\draw[black,  thick,solid] (11.6,1.4) -- (11.4,1.6);
\draw[black,  thick,solid] (9.4,-0.4) -- (9.6,-0.6);
\draw[black,  thick,solid] (9.6,-0.4) -- (9.4,-0.6);
\draw[black,  thick,solid] (11.4,-0.4) -- (11.6,-0.6);
\draw[black,  thick,solid] (11.6,-0.4) -- (11.4,-0.6);

\draw[mypurple,fill=mypurplefill,very thick,solid,rounded corners=2] (8.1,-0.9) rectangle (8.9,-0.1);
\draw[myorange,fill=myorangefill,very thick,solid,rounded corners=2] (10.1,-0.9) rectangle (10.9,0.9);
\draw[myred,fill=myredfill,very thick,solid,rounded corners=2] (10.1,2.9) rectangle (10.9,1.1);

\draw[] (7,2.5) node {\Large $e$};
\draw[] (7,1.5) node {\Large $s$};
\draw[] (7,0.5) node {\Large $\z$};
\draw[] (7,-0.5) node {\Large $\s'$};

\draw[] (10.5,2) node[rotate=0] {\Large  $\mathcal{E}$};
\draw[] (10.5,0) node {\Large  $\mathcal{C}$};
\draw[] (8.5,-0.5) node {\Large   $\mathcal{R}_{\beta}$};

\draw[] (9,1) node[rotate=0] {\Large  $\mathcal{V}$};
\draw[] (12,1) node[rotate=0] {\Large  $\mathcal{W}$};

\end{tikzpicture}

\caption{A diagrammatic representation of our `transplantation' procedure. The initial system state is passed through $\mathcal{C}^{\zsp}_{\alpha+1:\alpha}$ instead of $\mathcal{E}^{\se}_{\alpha+1:\alpha}$, and the ancilla simulates the role which was previously played by the environment. Hence, $\mathcal{C}^{\zsp}_{\alpha+1:\alpha}$ is the sole determinant of what the resultant map can be.} \label{fig:3}
\end{figure}
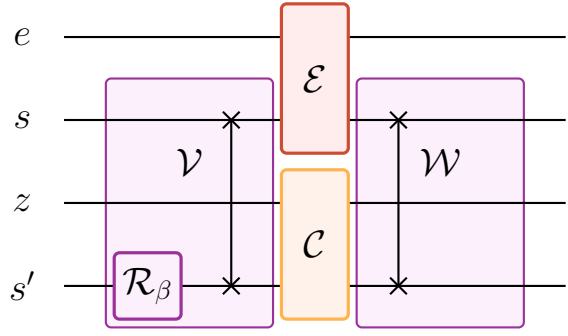

\subsection{\textit{\texorpdfstring{$\mathcal{C}_{\alpha+1:\alpha}$}{C\_a+1:a} is any entanglement breaking channel:}}
The case of entanglement breaking communication can be solved with a minor extension to what was done with quantum communication. Rather than letting $\mathcal{C}^{\zsp}_{\alpha+1:\alpha}$ be any map, we write it as a POVM measurement and subsequent re-preparation on the ancilla subsystems.
\begin{equation} 
\begin{aligned}
\mathcal{C}^{{\s \z}}_{\alpha+1:\alpha}[\rho^{{\s \z}}_{\alpha} ] =\sum_{k} \nu^{\s \z,(k)}_{\alpha+1:\alpha} \text{tr} \left\{ \Pi^{\s \z,(k)}_{\alpha+1:\alpha} \rho^{{\s \z}}_{\alpha} \right\}.
\end{aligned}
\end{equation} 
Thus, the image an arbitrarily useless bipartite quantum channel under all supermaps with entanglement breaking communication is all bipartite entanglement breaking channels.

\section{Inter-Step Behaviour} \label{app-interstep}
In the previous section we took $\mathcal{E}$ to be the fixed output map and observed how well $\mathcal{C}$ was able to bypass the blockage in information flow. Now we turn our attention to the information flow between steps. Now, $\mathcal{A}$ is taken to be the fixed output map, and $\mathcal{K}$ is used to circumvent the information blockage.

This procedure is analogous to the intra-step case. Groupings of maps $\mathcal{M}$ take the roles of $\mathcal{V}$ and $\mathcal{W}$, $\mathcal{A}$ takes on the role of $\mathcal{E}$, and $\mathcal{K}$ takes the role of $\mathcal{C}$. The rest of the mathematics is identical so we will not repeat it.

\section{Notation summary} \label{appnotationsummary}
Shown in this section is a summary of notation for the most common objects featured in this work. The type of script used for each object holds a specific meaning: calligraphic letters refer to regular quantum maps (superoperators), sans-serif letters indicate sets, boldface letters correspond to higher order maps/quantum combs with one set of inputs and outputs, while blackboard font is used to denote superprocesses -- quantum combs with two sets of inputs and outputs.
\begin{table*}[htbp] \label{tabnotationsummary}
\centering
\begin{tabular}{|l|l|}
\hline
\textbf{Object} & \textbf{Meaning}                                                              \\ \hline \hline
$\mathbf{A}$    & Control sequence                                                              \\ \hline
$\mathcal{A}$   & Control operation                                                             \\ \hline
$\mathbf{A}'$   & Client/Fiducial control sequence                                                     \\ \hline
$\mathcal{A}'$  & Client/Fiducial control operation                                                    \\ \hline
$\mathbf{T}$   & Process tensor/background process                                             \\ \hline
$\mathbf{T}'$   & Transformed process                                             \\ \hline
$\mathsf{T}$   & Set of potential process tensors                                             \\ \hline
$\mathcal{E}$   & Individual step of the background process                                     \\ \hline
$\mathbf{Z}$   & Superprocess                                                                  \\ \hline
$\mathsf{Z}$   & Set of implementable superprocesses                                                                  \\ \hline
$\mathcal{V}$   & Pre-operation in dilated form of superprocess                                               \\ \hline
$\mathcal{W}$   & Post-operation in dilated form of superprocess                                              \\ \hline
$\mathcal{C}$   & Communication map in parallel to a process tensor step                        \\ \hline
$\mathcal{K}$   & Communication map in parallel to client control operation                   \\ \hline
$\mathcal{M}$   & Individual step of transformed background process                   \\ \hline
$\llbracket \mathbf{T} | \mathbf{Z}$   & Left action, yielding effective background process                                                                  \\ \hline
$\mathbf{Z} | \mathbf{A}' \rrbracket $   & Right action, yielding full experimental control                                                                  \\ \hline
$\llbracket \mathbf{T} | \mathbf{Z} | \mathbf{A}' \rrbracket$   & Full dynamics, yielding output state                                                                  \\ \hline
\end{tabular}
\caption{Summary of notation for the most common objects featured in this work.
}
\end{table*}

\end{document}